\newtheorem{theorem}{Theorem}[section]
\newtheorem{corollary}{Corollary}[theorem]
\newtheorem{lemma}[theorem]{Lemma}
\newtheorem{proposition}[theorem]{Proposition}
\theoremstyle{definition}
\newtheorem{definition}{Definition}
\newtheorem{task}{Task}
\newtheorem{fact}{Fact}
\newtheorem{remark}{Remark}
\newtheorem{cor}[theorem]{Corollary}
\DeclareMathOperator{\poly}{poly}
\DeclareMathOperator{\perm}{Perm}
\DeclareMathOperator{\type}{type}
\DeclareMathOperator{\bintype}{bintype}
\DeclareMathOperator{\Dist}{Dist}
\DeclareMathOperator{\phase}{phase}
\newcommand{\lem}[1]{\hyperref[lem:#1]{Lemma~\ref*{lem:#1}}}
\newcommand{\defn}[1]{\hyperref[defn:#1]{Definition~\ref*{defn:#1}}}
\newcommand{\thmref}[1]{\hyperref[thmref:#1]{Theorem~\ref*{thmref:#1}}}
\newcommand{\factref}[1]{\hyperref[factref:#1]{Fact~\ref*{factref:#1}}}
\newcommand{\rf}{\mathsf{r}_f}
\newcommand{\rp}{\mathsf{r}_p}
\newcommand{\m}{\mathsf{row}}
\newcommand{\n}{\mathsf{m_2}}
\newcommand{\1}{\mathsf{t}}
\newcommand{\Pisym}{\Pi_{\mathrm{sym}}}
\newcommand{\C}{\mathbb{C}}
\newcommand{\E}{\mathbb{E}}
\newcommand{\B}{\mathsf{B}}
\newcommand{\V}{\mathsf{V}}
\newcommand{\TD}{\mathsf{TD}}
\newcommand{\MPS}{\mathsf{MPS}}
\newcommand{\eqn}[1]{(\ref{eqn:#1})}
\title{Quantum Pseudoentanglement}
\date{\today}
\author[1]{Scott Aaronson\thanks{aaronson@cs.utexas.edu }}
\author[2]{Adam Bouland\thanks{abouland@stanford.edu}}
\author[3]{Bill Fefferman\thanks{wjf@uchicago.edu}}
\author[3]{Soumik Ghosh\thanks{soumikghosh@uchicago.edu}}
\author[4]{Umesh Vazirani\thanks{vazirani@eecs.berkeley.edu}}
\author[2]{Chenyi Zhang\thanks{chenyiz@stanford.edu}}
\author[2]{Zixin Zhou\thanks{jackzhou@stanford.edu}}
\affil[1]{\normalsize{Department of Computer Science, University of Texas, Austin}}
\affil[2]{\normalsize{Department of Computer Science, Stanford University}}
\affil[3]{Department of Computer Science, University of Chicago}
\affil[4]{\normalsize{Department of Electrical Engineering and Computer Sciences, University of California, Berkeley}}
\begin{document}

\maketitle
\thispagestyle{empty}

\begin{abstract}
Entanglement is a quantum resource, in some ways analogous to randomness in classical computation. Inspired by recent work of Gheorghiu and Hoban, we define the notion of ``pseudoentanglement'', a property exhibited by ensembles of efficiently constructible quantum states which are indistinguishable from quantum states with maximal entanglement. Our construction relies on the notion of quantum pseudorandom states -- first defined by Ji, Liu and Song -- which are efficiently constructible states indistinguishable from (maximally entangled) Haar-random states.  Specifically, we give a construction of pseudoentangled states with entanglement entropy arbitrarily close to $\log n$ across every cut, a tight bound providing an exponential separation between computational vs information theoretic quantum pseudorandomness. We  discuss applications of this result to Matrix Product State testing, entanglement distillation, and the complexity of the AdS/CFT correspondence.
As compared with a previous version of this manuscript (arXiv:2211.00747v1) this version introduces a new pseudorandom state construction, has a simpler proof of correctness, and achieves a technically stronger result of low entanglement across all cuts simultaneously.

\end{abstract}

\clearpage
\pagenumbering{arabic} 
\newpage

\section{Introduction}

Randomness is a resource in classical computation and cryptography, and the theory of pseudorandomness plays a central role in the study of this resource. Entanglement plays an analogous role and is a central resource in quantum information and computation. Inspired by the definition of pseudorandomness, and recent work of Gheorghiu and Hoban \cite{gheorghiu2020estimating}, we define the notion of pseudoentanglement. Informally we say that an ensemble of quantum states is pseudoentangled if the states are efficiently constructible and have small entanglement but are indistinguishable from quantum states with maximal entanglement. 

The study of quantum pseudoentanglement is closely related to the concept of quantum pseudorandom states introduced by Ji, Liu and Song \cite{Ji2018}. \emph{Pseudorandom states} are ensembles of quantum states which can be prepared by efficient quantum circuits, yet which masquerade as Haar-random states, even to arbitrary poly-time quantum algorithms using an arbitrary polynomial number of  copies of the state.
While such a strong form of pseudorandomness is impossible in the information-theoretic setting \cite{brandao2016local}, Ji, Liu and Song showed it is possible to construct such states in a \emph{computational} setting using post-quantum cryptography -- in particular using any quantum secure pseudorandom function, a standard cryptographic primitive \cite{zhandry2012construct}.
This notion has many applications in cryptography \cite{ananth2021cryptography,kretschmer2022quantum}, complexity theory \cite{kretschmer2021quantum}, and quantum gravity \cite{bouland2019computational,kim2020ghost}.

In this paper we give a new family of pseudorandom quantum states which have low entanglement rank (and therefore entropy) across every cut. A simple swap test argument shows that any pseudorandom quantum state must necessarily have $\omega(\log n)$ entanglement entropy across any cut \cite{Ji2018}. It was an open question to exhibit pseudorandom quantum states which saturate this entanglement entropy lower bound. Here we give a construction that is optimal and achieves entanglement entropy arbitrarily close to $\log n$ across every cut.  This should be contrasted with information theoretic notions of pseudorandomness, such as unitary $t$-designs, which require entanglement entropy $\Omega(n)$ across each cut\footnote{This is true in expectation for $t\geq 2$ \cite{smith2006typical,dahlsten2005exact} and becomes more concentrated as $t=4$ and higher \cite{low2009large,cotler2022fluctuations}.} , and consequently our results obtain an exponential separation between computational vs information theoretic quantum pseudorandomness. Moreover, since by definition pseudorandom states are indistinguishable from Haar random states, which have maximal entropy across every cut, this ensemble of states is also pseudoentangled. 

The construction of the pseudoentangled family is quite simple to describe. Let $S \subseteq \{0,1\}^n$ be a pseudorandom subset of superpolynomial support $|S| = s(n)$, and $f:\{0,1\}^n \rightarrow \{0,1\}$ a PRF. Then we prove that the state
\[ \displaystyle\sum_{x\in S} (-1)^{f(x)} \ket{x}\]
is pseudorandom. The Schmidt rank across any cut is bounded by $s(n)$ and therefore the entanglement entropy is bounded by $\log s(n)$, and for $s(n)$ to be superpolynomial it is only necessary for the entanglement entropy to grow faster than $\log n$. Showing that this family of quantum states is efficiently preparable and pseudorandom therefore establishes that it is indistinguishable from the maximally entangled Haar random states, and is therefore a pseudoentangled family of states. 

\begin{restatable}[Pseudorandom states with low entanglement across all cuts (Informal)]{thm}{Lowentanglement}
\label{Thm:Lowentanglement}
 For any function $f(n)=\omega(\log n)$, there exists ensembles of pseudorandom states with entanglement entropy $\Theta(f(n))$ across all cuts of the state\ simultaneously \footnote{Technically, this is across all cuts of the state where one size of the partition is of size $\Omega(f(n))$, as the statement is trivially false otherwise.}. 
\end{restatable}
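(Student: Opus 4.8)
The plan is to fix the parameter $s(n)=2^{\lceil f(n)\rceil}$, which is superpolynomial precisely because $f(n)=\omega(\log n)$, and to take as the ensemble the subset--phase states $\ket{\psi}=s^{-1/2}\sum_{x\in S}(-1)^{f(x)}\ket{x}$ described in the introduction, where $S$ is the pseudorandom subset of size $s$ and $f$ is the PRF. Efficient preparability follows by first preparing the uniform superposition over $S$ (using the efficient indexing of the pseudorandom subset) and then applying the PRF phase oracle, both of which are poly-time. For pseudorandomness I would run the standard sequence of hybrids: replace the PRF by a truly random function and the pseudorandom subset by a uniformly random subset of size $s$, each step justified by the security of the corresponding primitive together with the fact that the distinguisher is efficient and uses only polynomially many copies; this reduces the claim to the statement -- which I assume is established elsewhere in the paper -- that random subset--phase states are statistically indistinguishable from Haar-random states on polynomially many copies.

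The entanglement upper bound is immediate. For any bipartition $A\mid B$, writing $x=(x_A,x_B)$, the reduced density matrix $\rho_A$ is supported on the at most $s$ distinct strings $\{x_A:x\in S\}$, so the Schmidt rank of the cut is at most $\min(s,2^{|A|},2^{|B|})$ and the entanglement entropy is at most $\log s=\Theta(f(n))$.

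The heart of the argument is the matching lower bound. Since the von Neumann entropy dominates the order-$2$ R\'enyi entropy, it suffices to show $\mathrm{Tr}(\rho_A^2)\le 2^{-\Omega(f(n))}$ for every cut whose smaller side has size $\Omega(f(n))$. Writing $n_a=|\{x\in S:x_A=a\}|$ and $k_b=|\{x\in S:x_B=b\}|$, a direct computation of the purity averaged over the phases gives $\mathbb{E}[\mathrm{Tr}(\rho_A^2)]=s^{-2}(\sum_a n_a^2+\sum_b k_b(k_b-1))$: the diagonal of $\rho_A$ is phase-independent and contributes $s^{-2}\sum_a n_a^2$, while every off-diagonal term carries a random sign and survives in expectation only through collisions on the $B$-register, contributing $s^{-2}\sum_b k_b(k_b-1)$. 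Whenever $S$ is spread across the cut, so that $n_a=O(s/2^{|A|})$ and $k_b=O(s/2^{|B|})$, this evaluates to $O(2^{-\min(|A|,|B|)}+1/s)=2^{-\Omega(f(n))}$, as desired.

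The main obstacle is upgrading this bound on the expected purity to a statement that holds for the actual, efficiently preparable state across all $2^n$ cuts simultaneously. I expect this to be the technical crux and would attack it in two pieces. First, I would show that the subset is ``spread'' across every cut with probability $1-2^{-\omega(n)}$, via tail bounds on each $n_a$ and $k_b$ followed by a union bound over cuts, using the regularity of the pseudorandom subset construction. Second, and more delicately, I would need concentration of the phase-dependent purity around its mean that is strong enough to survive a union bound over exponentially many cuts; since $\mathrm{Tr}(\rho_A^2)$ is a degree-$4$ polynomial in the phases, a moment/hypercontractivity argument carried to order $\Theta(n)$ yields the required $2^{-\Omega(n)}$ tails, and crucially this argument only needs the phases to be $\Theta(n)$-wise independent. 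This is the point to be careful about: computational indistinguishability (for instance a swap-test distinguisher that estimates the purity of a fixed cut) controls only polynomially many cuts, so the all-cuts lower bound must be obtained by this direct statistical route on the construction itself -- which is exactly why bounded independence of the phase function, being compatible with efficient preparation, is what makes the exponential union bound go through. Combining the upper and lower bounds then gives entanglement entropy $\Theta(f(n))$ across every cut of size $\Omega(f(n))$.
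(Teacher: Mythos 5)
Your construction, upper bound, and overall architecture coincide with the paper's: subset phase states with a PRP-generated subset and pseudorandom phases, efficient preparation via Hadamards plus an in-place permutation, pseudorandomness via hybrids reducing to the truly random subset/phase case, and the deterministic Schmidt-rank bound $\log|S|$ for every cut. The statistical step you ``assume is established elsewhere'' is in fact the paper's main technical theorem (Theorem~\ref{thmref:SPS-security}, proved via type states and the symmetric-subspace projector following Ananth--Gulati--Qian--Yuen), so citing it is legitimate, but be aware it carries most of the weight of the pseudorandomness claim. Your mean-purity identity is the same second-moment computation the paper performs in the form $\mathbb{E}\big\|\tfrac{1}{2^k}BB^{\mathsf{T}}\big\|_F^2 \le 2^{1-k/2}$, and your observation that the entropy lower bound must come from a statistical property of the construction itself (since a distinguisher certifies only polynomially many cuts) is precisely the paper's stated motivation for combining a statistical primitive with a computational one: it takes the phase function $f = h\circ q$ with $h$ four-wise independent and $q$ a quantum-secure PRP.

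The genuine gap is in your simultaneous all-cuts lower bound. First, your union bound over $2^n$ cuts requires the subset $S=p(\{x0^{n-k}\})$ to be spread across \emph{every} cut except with probability $2^{-\omega(n)}$, and you propose to derive this from ``the regularity of the pseudorandom subset construction'' --- but PRP security cannot supply it: an exponentially rare statistical failure event is not efficiently testable, so computational indistinguishability from a random permutation says nothing about such tails. You would need to replace or augment the PRP by a primitive with statistical guarantees (e.g., $t$-wise almost independent permutations), which is a redesign rather than a lemma. Second, note that the paper never attempts this union bound, and could not with its parameters: its Markov argument gives per-cut failure probability $2^{-\Omega(k)}$, which does not survive $2^n$ cuts when $k=o(n)$; accordingly its tunable-entropy theorem is stated for an arbitrary \emph{fixed} cut $(\mathsf{X},\mathsf{Y})$ with $|\mathsf{X}|,|\mathsf{Y}|\ge k$, holding with high probability over the state, while only the rank upper bound holds across all cuts deterministically (and, in the $|S|=2^{\poly\log n}$ regime, the $\omega(\log n)$ lower bound comes from the SWAP test). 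Your hypercontractivity plan for the phase part with $\Theta(n)$-wise independent phases is plausible arithmetic and would genuinely strengthen the paper's statement if completed, but as written the subset-spreading step fails, so the simultaneous lower bound is unproven; retreating to the paper's per-cut formulation, using only four-wise independent phases and Markov, recovers the theorem as actually claimed.
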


We note that a previous version of this result appeared on the arXiv with identification number arXiv:2211.00747v1 and as a contributed talk at QIP 2023. 
Our prior construction was based on the random phase state construction of \cite{Ji2018,brakerski2019pseudo}, and we showed it is possible to decrease the entanglement to any $f(n)=\omega(\log n)$ across a single fixed cut, which we include in Appendix \ref{app:singlecutjls}.
It turns out it is possible to generalize our prior construction to have low entanglement with respect to additional cuts.  We achieve this by repeatedly using the same technique to reduce the entanglement across certain cuts without accidentally blowing up the entanglement across other cuts.  
While this allows us to produce, for example, 1D pseudorandom states with ``pseudo-area law'' scaling of entanglement (i.e., the entanglement of any cut is upper bounded by $A\cdot \poly\log(n)$ where $A$ is the area of the cut when the qubits are arranged on a line) which we prove in Appendix \ref{app:arealawjls}, the technique requires a careful choice of cuts and does not give us the ability to reduce entanglement across all cuts.
Compared to that result, our current construction introduces a new pseudorandom state construction, has a simpler proof of correctness, and achieves a technically stronger result of low entanglement across all cuts simultaneously.  As we will show, the pseudo-area law scaling of entanglement allows us to prove strong property testing lower bounds, such as for testing Matrix Product States.

\subsection{Applications}
\label{sec:applicationsintro}
Given the central role played by pseudorandomness in classical computer science, we expect that the notion of pseudoentanglement will shed new light on our understanding of quantum entanglement.  Here we scratch the surface by providing some initial applications.  

First, our main result implies  new lower bounds in property testing. For example, suppose one wishes to tell if an $n$-qubit state has a Matrix Product State (MPS) description of bond dimension $k$, or is far from any such state?
This is the ``MPS-testing'' problem.
Soleimanifar and Wright \cite{soleimanifar2022testing} recently showed MPS testing requires $\Omega(\sqrt{n})$ copies of the state in an info-theoretic sense.
We show that MPS testing requires $\Omega\big(\sqrt{k}\big)$ copies of the state, either in info-theoretic or computational settings, as a corollary of our low-entropy PRS construction. While  incomparable to the Soleimanifar-Wright bound, this is a stronger lower bound in the regime of high bond dimension $k$ -- so is saying that it gets more and more difficult to determine if a state is an MPS as the bond dimension grows.  We describe this application in more detail in Section \ref{sec:mps}.

Another classic property testing problem is to estimate the Schmidt rank of many copies of an unknown quantum state \cite{montanaro2013survey}.
While it is known that in general this is a difficult problem \cite{childs2007weak}, prior lower bounds for this problem have relied on input quantum states which are not efficiently constructible. 
Our work implies that Schmidt rank testing remains intractable in the setting where states are efficiently constructible, and gives analogous lower bounds for a number of related property testing/tomography problems, such as estimating the largest Schmidt coefficients of an unknown state (see Section \ref{schmidt rank}).
In addition, our pseudoentanglement construction can be used to prove lower bounds on entanglement distillation protocols for extracting entanglement from Haar-entanglement random states via the Schur transform (see Section \ref{app:distillation}).

Finally our work has applications to quantum gravity theory.
A central theme of quantum gravity that entanglement is related to the geometry of general relativity, through dualities such as the AdS/CFT correspondence.
The construction of pseudoentangled states within these theories might give additional evidence that the duality maps must be exponentially difficult to compute \cite{bouland2019computational}, which was also part of Hoban and Gheorghiu's motivation for their work \cite{gheorghiu2020estimating}.
We discuss this further in Section \ref{subsec:adscft}.

\subsection{A formal definition of pseudo-entanglement}
\label{sec:pseudoentanglement}

We now proceed to define pseudoentanglement. A pseudoentangled state ensemble (PES) with gap $f(n)$ vs. $g(n)$ consists of two ensembles of $n$-qubit states $\ket{\Psi_k},\ket{\Phi_k}$ indexed by a secret key $k\in \{0,1\}^{\text{poly}(n)}$, with the following properties:
\begin{itemize}
    \item Given $k$, $\ket{\Psi_k}$ ($\ket{\Phi_k}$, respectively) is efficiently preparable by a uniform, poly-sized quantum circuit.
    \item  With probability at least $1 - \frac{1}{\text{poly}(n)}$ over the choice of $k$, the entanglement entropy across every cut of $\ket{\Psi_k}$ ($\ket{\Phi_k}$, respectively) is $\Theta(f(n))$ ($\Theta(g(n))$, respectively)
    \item For any polynomial $p(n)$, no poly-time quantum algorithm can distinguish between the ensembles $\rho= \mathbb{E}_k \left[ \ket{\Psi_k}\bra{\Psi_k}^{\otimes p(n)}\right]$ and $\sigma= \mathbb{E}_k \left[ \ket{\Phi_k}\bra{\Phi_k}^{\otimes p(n)}\right]$ with more than negligible probability. That is, for any poly-time quantum algorithm $\mathcal{A}$, we have that
    \[ \left|\mathcal{A}(\rho) - \mathcal{A}(\sigma)\right| \leq \frac{1}{\text{negl}(n)}\]
\end{itemize}
Our definition is inspired by prior work of Gheorghiu and Hoban \cite{gheorghiu2020estimating}, who implicitly considered a similar notion. In our language, \cite{gheorghiu2020estimating} showed that PES ensembles exist with gap $n$ vs $n-k$ for any $k=O(1)$, based on LWE.  Our main result improves this construction to the maximum gap possible:

\begin{restatable}[High gap pseudoentangled states (informal)]{corollary}{Highentanglement}
\label{Thm:Highentanglement}
 There exists a pseudoentangled state ensemble (PES) with entanglement gap $\Theta(n)$ vs $\omega(\log n)$ across all cuts simultaneously, which is simultaneously a pseudorandom state ensemble, assuming there exists any quantum-secure OWF.
\end{restatable}

In contrast to Gheorghiu and Hoban's result, we achieve the maximum possible entanglement gap, are agnostic to the choice of quantum-secure OWF, applies to all cuts simultaneously, and simultaneously maintain indistinguishability from the Haar measure\footnote{One can indeed show \cite{gheorghiu2020estimating}'s construction is not itself a pseudorandom state ensemble, as we describe in Appendix~\ref{appendix:GHnotPRS}.}.
We similarly show our state can be instantiated in logarithmic depth.
%%%%%%%%%%%%%%%%%%%%%%%%%%%%%%%%%%%%%%%%%%%%%%%%%%%%%%%%%%

\section{Main result}
\label{main construction}

In this section, we will prove our main result, Theorem \ref{Thm:Lowentanglement}.  To do this we first construct a pseudorandom quantum state with optimally low entropy across any cut. As discussed, this is the strongest possible notion of pseudoentanglement for a pseudorandom state ensemble, matching the lower bound established by Ji, Liu, and Song \cite{Ji2018}.

We then show how to \emph{tune} the entanglement entropy of our construction to achieve a pseudorandom state with entanglement entropy $\Theta(f(n))$ across each cut, for any function $f=\omega(\log(n))$.

\subsection{The subset phase state construction}
For particular choices of $S$ and a binary phase function $f:\{0,1 \}^n \to \{ 0, 1\}$, our pseudorandom state will have the following form.

\begin{align}\label{eqn:subset-phase}
\ket{\psi_{f, S}}=\frac{1}{\sqrt{|S|}}\sum_{x\in S}(-1)^{f(x)}\ket{x}.
\end{align}
Let us call states that are denoted by \eqn{subset-phase} ``subset phase states." Our next arguments are as follows.
\begin{itemize}
\item \textbf{Efficient preparation:} We show how to efficiently prepare subset phase states, when the subset and phases are chosen pseudorandomly, using appropriate quantum--secure pseudorandom functions and permutations.
\item \textbf{Proof of statistical closeness:} First, we will show that if $|S|= 2^{\omega(\log n)}$, and if $f$ is randomly chosen in \eqn{subset-phase}, then polynomially many copies of the corresponding density matrix are close in trace distance to polynomially many copies of a Haar random state. 

Qualitatively, this result means that a randomly chosen subset phase state is statistically close to a Haar random state, even with polynomially many copies.

\item \textbf{Proof of computational indistinguishability:} Then, we will show that conditioned on a cryptographic conjecture, we can efficiently prepare pseudorandom subset phase states that are computationally indistinguishable to a random subset phase state. 

The proof of security will hinge on a sequence of hybrids.

\item \textbf{Analysis of pseudoentanglement:} We will have a discussion on how our construction can be made to achieve the desired optimally low pseudoentanglement properties across any cut.

\item \textbf{Tight tunability:} Finally, we will discuss how to tightly tune the entanglement of our construction by varying the size of the subset.
\end{itemize}

%===============================================================================

\subsection{Notations}

We will use $\mathsf{TD}(\cdot~,~ \cdot)$ to denote the trace distance between two density matrices. Use $\perm_t$ to denote the set of all permutations among $t$ items. For any subset $S\subseteq\{0,1\}^n$ and any $\sigma\in \perm_t$, we define
\begin{align}
P_S(\sigma)\coloneqq\sum_{x_1,\ldots,x_t\in S}\ket{x_{\sigma^{-1}(1)},\ldots,x_{\sigma^{-1}(t)}}\bra{x_1,\ldots,x_t}.
\end{align}
Then,
\begin{align}\label{eqn:S-projector}
\Pisym^{S,t}=\frac{1}{t!}\sum_{\sigma\in\perm_t}P_S(\sigma)
\end{align}
is the projector onto the symmetric subspace of $(\C^{S})^{\otimes t}$. 

There are two sources of randomness in the subset phase state: randomness in choosing the subset and randomness in choosing the phase. To simplify notations, we use $\ket{\psi_S}$ to denote the subset phase state defined in Eq.~\eqn{subset-phase} with a random phase function and a fixed subset $S$, $\ket{\psi_f}$ to denote the subset phase state with a random subset and a fixed phase function, and $\ket{\psi}$ to denote the subset phase state where both the components are chosen at random. Sometimes, when the choice of the subset is specified by a function $p$, as we will see in the very next section, we slightly modify the notation in Eq.~\eqn{subset-phase} and use $|\psi_{f, p}\rangle$ to denote such an ensemble.

In the next section, we will show that we can efficiently instantiate these states using pseudorandom functions and permutations. We will reference this section later, in our final proof of the pseudorandom and pseudoentangled properties of these states.

\subsection{Efficiently preparing the state}
\label{efficient preparation}
Let $p$ be sampled uniformly at random from a family of quantum--secure pseudorandom permutations $P$ with 
\begin{equation}
P = \{p : [2^n] \rightarrow [2^n]\}
\end{equation}
and $f$ be sampled uniformly at random from a family of quantum--secure pseudorandom functions $F$ with
    \begin{equation}
F = \{f :[2^{n}] \to \{1, -1 \} \}.
\end{equation} Let us suppose we know $f$, $p$ and $p^{-1}$. We will give a recipe of how to efficiently prepare the state:
\begin{equation}
 |\psi_{f, p}\rangle = \frac{1}{\sqrt{2^k}} \sum_{x \in \{0, 1\}^k}  (-1)^{f(p(x 0^{\otimes (n-k)}))}\ket{p(x 0^{\otimes (n-k)})}
\end{equation}
The steps are as follows:

\begin{itemize}
    \item Start with $\ket{0^n}$.
    \item Let the size of the subset $\mathsf{S}$ be $2^k$ for some integer $k \leq n$.
    \item Apply $H^{\otimes k} \otimes I^{\otimes (n-k)}$ to $\ket{0^n}$.
    \item We get the state
    \begin{equation}
         \frac{1}{\sqrt{2^k}} \sum_{x \in \{0, 1\}^k} \ket{x 0^{\otimes (n-k)}}, 
    \end{equation} 
    where $x 0^{\otimes (n-k)}$ means we pad $n - k$ zeros to to end of $x$ to get an $n$-bit string.
    \item 
    
    We apply $p$ to this state to get
     \begin{equation}
          \frac{1}{\sqrt{2^k}} \sum_{x \in \{0, 1\}^k} \ket{x 0^{\otimes (n-k)}} \ket{p(x 0^{\otimes (n-k)})}.
    \end{equation}
    \item Finally, we apply the inverse of $p$, denoted by $p^{-1}$, to un-compute the first register. We get
     \begin{equation}
          \frac{1}{\sqrt{2^k}} \sum_{x \in \{0, 1\}^k}  \ket{p(x 0^{\otimes (n-k)})}.
    \end{equation}
    Observe that if $p$ were sampled from the set of truly random permutations, then this process would output a subset set state $\frac{1}{\sqrt{2^k}} \underset{{x\in S}}{\sum} \ket{x}$ uniformly at random from all subset states with size $2^k$.
    \item Finally, we construct a phase oracle using the description of $f$ to get the state
    \begin{equation}\label{eqn:pseudorandom-subset-phase}
          \frac{1}{\sqrt{2^k}} \sum_{x \in \{0, 1\}^k}  (-1)^{f(p(x 0^{\otimes (n-k)}))}\ket{p(x 0^{\otimes (n-k)})}.
    \end{equation}
    Note that if $P$ were a truly random permutation family and $f$ a truly random phase function, then the output distribution of this process is exactly the uniform distribution over subset phase states with $|\mathsf{S}| = 2^k.$
\end{itemize}

%===============================================================================

\subsection{Proof of statistical closeness to Haar random states}

In this section, we will prove that polynomially many copies of a subset phase state, where both the subset and the phases have been chosen at random, are statistically close to polynomially many copies of a Haar random state. More formally, we establish the following theorem.
\begin{theorem}\label{thmref:SPS-security}
For any $t<K\leq 2^n$, it holds that
\begin{equation}
    \mathsf{TD}\left(\underset{S\text{ with }|S|=K,\ f}{\E}\left[\ket{\psi_{f,S}}\bra{\psi_{f,S}}^{\otimes t}\right], \underset{\ket{\phi}\leftarrow\mathscr{H}(\C^N)}{\E}\left[\ket{\phi}\bra{\phi}^{\otimes t}\right]  \right)  < O\Bigg(\frac{t^2}{K}\Bigg),
    \end{equation}
where $\ket{\psi_{f,S}}$ is defined in \eqn{subset-phase}. That is to say,
\begin{equation}
    \mathsf{TD}\left(\underset{S\text{ with }|S|=K,\ f}{\E}\left[\ket{\psi_{f,S}}\bra{\psi_{f,S}}^{\otimes t}\right], \underset{\ket{\phi}\leftarrow\mathscr{H}(\C^N)}{\E}\left[\ket{\phi}\bra{\phi}^{\otimes t}\right]  \right)  < \frac{1}{\poly(n)},
    \end{equation}
    for $K = 2^{\omega(\log n)}$, any polynomially bounded $t$, and any $\poly(n)$, where $\mathscr{H}(\C^N)$ denotes the ensemble of Haar random states in the Hilbert space with dimension $N=2^n$.
\end{theorem}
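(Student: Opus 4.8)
The plan is to compute both density matrices exactly and compare them in an orthonormal basis of the symmetric subspace of $(\C^N)^{\otimes t}$, where $N=2^n$. The right-hand side is standard: $\E_{\ket{\phi}\leftarrow\mathscr{H}(\C^N)}[\ket{\phi}\bra{\phi}^{\otimes t}]=\Pisym/\binom{N+t-1}{t}$, the maximally mixed state on the symmetric subspace. The basis I would use is the multiset basis $\{\ket{\mathrm{sym}_M}\}$ indexed by size-$t$ multisets $M$ drawn from $[N]$; in this basis the Haar average is simply $\binom{N+t-1}{t}^{-1}$ times the identity, assigning equal weight $w_{\mathrm{Haar}}=t!/\prod_{i=0}^{t-1}(N+i)$ to every multiset. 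Throughout I abbreviate the two states as $\rho_{\mathrm{Haar}}$ and $\rho_{\mathrm{SPS}}=\E_{S,f}[\ket{\psi_{f,S}}\bra{\psi_{f,S}}^{\otimes t}]$.

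For $\rho_{\mathrm{SPS}}$ I would average first over the random phase function $f$ and then over the subset $S$. Expanding $\ket{\psi_{f,S}}\bra{\psi_{f,S}}^{\otimes t}$ gives $\frac{1}{K^t}\sum_{\vec{x},\vec{y}\in S^t}(-1)^{\sum_i f(x_i)+\sum_i f(y_i)}\ket{\vec{y}}\bra{\vec{x}}$, and since each $(-1)^{f(z)}$ is an independent unbiased sign, $\E_f$ of the phase is the indicator that every value in the combined multiset $\{\vec{x},\vec{y}\}$ occurs an even number of times. The key observation is that when $\vec{x}$ has $t$ distinct entries this condition forces $\vec{y}$ to be a permutation of $\vec{x}$ (each of the $t$ distinct values of $\vec{x}$ must appear an odd, hence positive, number of times in $\vec{y}$, exhausting all $t$ slots); summing over the permutations reproduces exactly $t!\,\ket{\mathrm{sym}_M}\bra{\mathrm{sym}_M}$ for the collision-free multiset $M$ supported by $\vec{x}$ -- this is precisely the collision-free part of $\tfrac{t!}{K^t}\Pisym^{S,t}$ in the paper's notation. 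Averaging over $S$ multiplies each such term by $\Pr_S[M\subseteq S]=\prod_{i=0}^{t-1}(K-i)/\prod_{i=0}^{t-1}(N-i)$, which depends only on $|M|$. Thus on collision-free multisets $\rho_{\mathrm{SPS}}$ is again diagonal, with every weight equal to $w_{\mathrm{SPS}}=\frac{t!}{K^t}\prod_{i=0}^{t-1}\frac{K-i}{N-i}$.

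I would then argue that both operators are block-diagonal with respect to the splitting of the symmetric subspace into (collision-free multisets) $\oplus$ (multisets with a repeat). The Haar average is fully diagonal, and for $\rho_{\mathrm{SPS}}$ the even-multiplicity condition shows that a collision-free $M$ cannot couple to any $M'\neq M$, so the clean part is genuinely isolated. On the collision-free block the trace-norm difference is $\binom{N}{t}\,|w_{\mathrm{SPS}}-w_{\mathrm{Haar}}|$; a short factorial estimate gives $|w_{\mathrm{SPS}}/w_{\mathrm{Haar}}-1|=O(t^2/K)$ (using $\prod_{i=0}^{t-1}(1-i/K)=1-O(t^2/K)$ and $\prod_{i=0}^{t-1}\frac{N+i}{N-i}=1+O(t^2/N)$ with $N\ge K$), and since $\binom{N}{t}w_{\mathrm{Haar}}=\binom{N}{t}/\binom{N+t-1}{t}\le 1$ this block contributes $O(t^2/K)$. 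On the collision block both operators are positive semidefinite (being diagonal blocks of positive operators), so their trace norms equal their traces; the $\rho_{\mathrm{SPS}}$ collision weight is $1-\binom{N}{t}w_{\mathrm{SPS}}=1-\prod_{i=0}^{t-1}(1-i/K)=O(t^2/K)$ and the Haar collision weight is $1-\binom{N}{t}/\binom{N+t-1}{t}=O(t^2/N)$. Summing the two blocks yields $\|\rho_{\mathrm{SPS}}-\rho_{\mathrm{Haar}}\|_1=O(t^2/K)$, hence the claimed bound on the trace distance.

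The main obstacle, and the step requiring the most care, is the bookkeeping of the collision terms. Two things must be checked cleanly: first, the decoupling claim isolating the collision-free block, so that the nearly-Haar clean part does not leak trace norm into the collision part -- this is exactly where the even-multiplicity structure coming from the random phases is essential, and is the conceptual heart of why phases make subset states look Haar-random; and second, that the total collision weight on both sides is genuinely $O(t^2/K)$, which is the birthday bound for drawing $t$ samples from a set of size $K$. The relative-error estimate $|w_{\mathrm{SPS}}/w_{\mathrm{Haar}}-1|=O(t^2/K)$ is then a routine comparison of falling and rising factorial products. I do not expect the phase averaging itself to be difficult once the even-multiplicity characterization is in hand; the genuine content lies in the collision analysis.
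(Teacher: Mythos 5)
Your proof is correct, and all the load-bearing steps check out: your even-multiplicity characterization of the phase average is exactly the paper's condition $\type(\vec{x})\equiv\type(\vec{y})\pmod{2}$; the forcing argument (a collision-free $\vec{x}$ compels $\vec{y}$ to be a permutation of $\vec{x}$) is sound and genuinely isolates the collision-free block, since $\ket{\psi_{f,S}}^{\otimes t}$ lives in the symmetric subspace and the cross terms between clean and colliding multisets vanish; and the identity $\binom{N}{t}w_{\mathrm{SPS}}=\prod_{i=0}^{t-1}(1-i/K)$ confirms your collision-weight bookkeeping. The organization, however, differs from the paper's. The paper modularizes: Lemma~\ref{lemma: fixed subset random phase} fixes $S$ and runs a three-hybrid chain (phase states exactly equal binary-type mixtures, which are $O(t^2/|S|)$-close to type-state mixtures, which are $O(t^2/|S|)$-close to $\Pisym^{S,t}/\Tr(\Pisym^{S,t})$, following \cite{ananth2023pseudorandom} and using the perpendicular-decomposition fact, Fact~\ref{fact:perp-trace-distance}); then Lemma~\ref{symmetric-subspace-expectation} averages the truncated projectors over $S$ -- driven by precisely your observation that $\Pr_S[M\subseteq S]$ depends only on $|M|$, phrased there as uniformity of the induced distribution on types -- and a final triangle-inequality and convexity step assembles Theorem~\ref{thmref:SPS-security}. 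You instead collapse both averages into a single exact diagonalization: both density matrices are identified explicitly on the collision-free block (constant weights $w_{\mathrm{SPS}}$ vs.\ $w_{\mathrm{Haar}}$), and every $O(t^2/K)$ loss is attributed to a concrete birthday probability, with no intermediate ensembles and no appeal to Fact~\ref{fact:perp-trace-distance}. What your route buys is a transparent error budget -- one relative-weight comparison $|w_{\mathrm{SPS}}/w_{\mathrm{Haar}}-1|=O(t^2/K)$ plus two collision masses, rather than three stacked trace-distance bounds -- and an exact description of the state on the clean block; what the paper's route buys is modularity, since its type-state lemmas are imported from prior work and Lemma~\ref{symmetric-subspace-expectation} stands alone as a reusable statement. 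One cosmetic remark: for $t^2\geq K$ the claimed bound is vacuous (trace distance is at most $1$), so your asymptotic product estimates need only be justified in the regime $t^2<K$, where they hold routinely.
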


\subsubsection{Useful results}

Before stating our proofs, let us state some useful results about Haar random states and symmetric subspaces. 
\begin{fact}\label{Haar-random}
\begin{align}
\underset{{\ket{\phi}\leftarrow\mathscr{H}(\C^N)}}{\E}\left[\ket{\phi}\bra{\phi}^{\otimes t}\right] &= \frac{\Pisym^{N,t}}{\Tr(\Pisym^{N,t})}.  
%\\ &= \underset{S}{\E}\left[\frac{\Pisym^{S,t}}{\Tr(\Pisym^{S,t})}\right].
\end{align}
\end{fact}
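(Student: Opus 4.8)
The plan is to prove this standard identity via Schur's lemma, exploiting the $U(N)$-invariance of the Haar average. Write $M \coloneqq \E_{\ket{\phi}\leftarrow\mathscr{H}(\C^N)}\left[\ket{\phi}\bra{\phi}^{\otimes t}\right]$. The first observation is that $M$ is supported entirely on the symmetric subspace: since $\ket{\phi}^{\otimes t}$ is invariant under every permutation of the $t$ tensor factors, it lies in the range of $\Pisym^{N,t}$, and hence $\Pisym^{N,t}\, M\, \Pisym^{N,t} = M$. So it suffices to identify $M$ as an operator on the symmetric subspace.

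The key step is to show that $M$ commutes with $U^{\otimes t}$ for every $U \in U(N)$. This follows from the invariance of the Haar measure under left multiplication: fixing any $V \in U(N)$ and substituting $\ket{\phi'} = V\ket{\phi}$, under which the distribution $\mathscr{H}(\C^N)$ is unchanged, we get
\begin{align}
V^{\otimes t}\, M\, (V^{\dagger})^{\otimes t} = \E_{\ket{\phi}}\left[\left(V\ket{\phi}\bra{\phi}V^{\dagger}\right)^{\otimes t}\right] = \E_{\ket{\phi'}}\left[\ket{\phi'}\bra{\phi'}^{\otimes t}\right] = M,
\end{align}
so that $V^{\otimes t} M = M V^{\otimes t}$ for all $V$.

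Now I invoke the fact that the symmetric subspace of $(\C^N)^{\otimes t}$ is an \emph{irreducible} representation of $U(N)$ under the diagonal action $U \mapsto U^{\otimes t}$ — equivalently, it is the symmetric-power irrep $\mathrm{Sym}^t(\C^N)$, the component of Schur--Weyl duality corresponding to the trivial representation of the symmetric group $S_t$. Since $M$ is supported on this irreducible subspace and commutes with the group action restricted there, Schur's lemma forces $M = c\,\Pisym^{N,t}$ for some scalar $c$. I then fix $c$ by taking traces: on one side $\Tr(M) = \E_{\ket{\phi}}\left[(\braket{\phi}{\phi})^{t}\right] = 1$, and on the other $\Tr(c\,\Pisym^{N,t}) = c\,\Tr(\Pisym^{N,t})$, yielding $c = 1/\Tr(\Pisym^{N,t})$ and hence the claimed identity. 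For completeness one may record $\Tr(\Pisym^{N,t}) = \binom{N+t-1}{t}$, the dimension of the symmetric subspace.

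The only nontrivial input — and thus the main obstacle if one wants a self-contained argument — is the irreducibility claim, that the symmetric subspace carries no proper nonzero $U(N)$-invariant subspace. Everything else (the permutation invariance of $\ket{\phi}^{\otimes t}$, the Haar-invariance computation above, and the trace normalization) is routine. I would either cite this irreducibility from a standard reference on the symmetric subspace, or, if a direct proof is desired, establish it via a highest-weight/polarization argument for $\mathrm{Sym}^t(\C^N)$.
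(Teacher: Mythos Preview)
Your proof is correct and is indeed the standard Schur's-lemma argument for this identity. The paper does not supply its own proof: it states this as a known fact and cites \cite{prs1} and Fact~5.3 of \cite{ananth2023pseudorandom}, so there is nothing further to compare against.
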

\noindent Fact~\ref{Haar-random} states that polynomially many copies of a Haar random state can be interpreted as as a normalized projector onto the symmetric subspace of the full Hilbert space spanned by all $N$ basis states. This can be found in many places, including \cite{prs1}, and also in Fact 5.3 of \cite{ananth2023pseudorandom}.

\begin{fact}[Fact 5.4, \cite{ananth2023pseudorandom}]\label{fact:perp-trace-distance}
Let $\rho_1$, $\rho_2$ be density matrices such that $\rho_2=\alpha\rho_1+\beta\rho_1^{\perp}$ where $\rho_1\rho_1^{\perp}=\rho_1^{\perp}\rho_1=0$, $\alpha,\beta\in[0,1]$ and $\alpha+\beta=1$, then
\begin{align}
\TD(\rho_1,\rho_2)=\beta.
\end{align}
\end{fact}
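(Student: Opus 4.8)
The plan is to compute the trace distance directly from its definition $\TD(\rho_1,\rho_2)=\tfrac12\Tr|\rho_1-\rho_2|$. First I would substitute the hypothesized decomposition $\rho_2=\alpha\rho_1+\beta\rho_1^{\perp}$ and use $\alpha+\beta=1$ to collapse the difference: $\rho_1-\rho_2=(1-\alpha)\rho_1-\beta\rho_1^{\perp}=\beta(\rho_1-\rho_1^{\perp})$. Thus the entire problem reduces to evaluating the single trace norm $\|\rho_1-\rho_1^{\perp}\|_1$, with the factor $\beta$ pulled out as a scalar.

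The key structural step is to extract genuine orthogonality from the hypothesis $\rho_1\rho_1^{\perp}=\rho_1^{\perp}\rho_1=0$. Since $\rho_1$ is positive semidefinite, any vector in the range of $\rho_1^{\perp}$ lies in the kernel of $\rho_1$, and for a positive semidefinite operator the kernel is precisely the orthogonal complement of the range; hence the supports of $\rho_1$ and $\rho_1^{\perp}$ are mutually orthogonal subspaces. I would then diagonalize each density matrix within its own support and observe that, in the orthogonal direct sum of these two supports (together with the remaining kernel), the operator $\rho_1-\rho_1^{\perp}$ is block diagonal, acting as $\rho_1$ (nonnegative spectrum) on one block, as $-\rho_1^{\perp}$ (nonpositive spectrum) on another, and as $0$ elsewhere.

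From this block-diagonal form the absolute value is immediate: passing to $|\rho_1-\rho_1^{\perp}|$ merely flips the sign of the negative block, giving $|\rho_1-\rho_1^{\perp}|=\rho_1+\rho_1^{\perp}$. Taking the trace and using that both operators are density matrices yields $\|\rho_1-\rho_1^{\perp}\|_1=\Tr(\rho_1)+\Tr(\rho_1^{\perp})=2$. Combining with the scalar factor from the first step gives $\TD(\rho_1,\rho_2)=\tfrac12\cdot\beta\cdot 2=\beta$, as claimed.

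I expect the only point needing care is the justification that the spectra of $\rho_1$ and $-\rho_1^{\perp}$ simply concatenate, with no cross terms mixing the two blocks. This is exactly where the operator identity $\rho_1\rho_1^{\perp}=0$ is essential, as opposed to the weaker scalar condition $\Tr(\rho_1\rho_1^{\perp})=0$: it guarantees true orthogonality of the supporting subspaces, and hence the clean block-diagonal structure on which the absolute-value computation relies. Everything else is routine linear algebra, so I do not anticipate further obstacles.
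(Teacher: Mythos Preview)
Your proof is correct. The paper does not actually supply its own proof of this statement; it is quoted as a fact from \cite{ananth2023pseudorandom}, so there is no in-paper argument to compare against, and your direct computation via $\rho_1-\rho_2=\beta(\rho_1-\rho_1^{\perp})$ together with the orthogonal-support block decomposition is exactly the standard justification.
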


Now, let us choose a subset $S$, then instead of a normalized projector onto the symmetric subspace of the full Hilbert space, defined a normalized projector onto the symmetric subspace of the truncated Hilbert space, spanned only by basis vectors in that subspace.

The following lemma states that if $S$ is chosen uniformly at random among all subsets with a fixed size, then on average, this new projector is ``close" to the projector onto the symmetric subspace of the full Hilbert space, in trace distance. The closeness depends on the size of the subset chosen.

\begin{lemma}\label{symmetric-subspace-expectation}
\begin{align}
\mathsf{TD}\left(\underset{S\text{ with }|S|=K}{\E}\left[\frac{\Pisym^{S,t}}{\Tr(\Pisym^{S,t})}\right],\frac{\Pisym^{N,t}}{\Tr(\Pisym^{N,t})}\right)\leq O(t^2/K),
\end{align}
where the expectation is taken over all subsets $S\subseteq\{0,1\}^n$ with fixed size $|S| = K$, and $K \geq t$.
\end{lemma}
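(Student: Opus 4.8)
The plan is to exploit the fact that both density matrices are \emph{simultaneously diagonal} in the orthonormal basis of symmetric states indexed by multisets (``types''), which reduces the trace distance to an $\ell_1$ distance between two explicitly computable sequences of eigenvalues.

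Concretely, for a size-$t$ multiset $M$ of strings in $\{0,1\}^n$, let $\ket{\mathrm{sym}(M)}$ denote the corresponding normalized symmetric basis state; these form an orthonormal basis of the symmetric subspace of $(\C^N)^{\otimes t}$. The key observation is that $\Pisym^{S,t}=\sum_{M:\,\mathrm{supp}(M)\subseteq S}\ket{\mathrm{sym}(M)}\bra{\mathrm{sym}(M)}$, since the symmetric subspace of $(\C^S)^{\otimes t}$ is spanned exactly by those symmetric states supported on $S$. Hence $\E_S[\Pisym^{S,t}/\Tr(\Pisym^{S,t})]$ is diagonal in this basis, and so is $\Pisym^{N,t}/\Tr(\Pisym^{N,t})$, being maximally mixed on the whole symmetric subspace. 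The eigenvalue of the averaged matrix on $\ket{\mathrm{sym}(M)}$ is $\lambda_M=\Pr_S[\mathrm{supp}(M)\subseteq S]/\binom{K+t-1}{t}$, while the Haar side has constant eigenvalue $1/\binom{N+t-1}{t}$. Since both are diagonal in the same basis, $\TD=\frac12\sum_M|\lambda_M-1/\binom{N+t-1}{t}|$.

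I would then observe that $\lambda_M$ depends on $M$ only through the number $d$ of \emph{distinct} elements: if $|\mathrm{supp}(M)|=d$ then $\Pr_S[\mathrm{supp}(M)\subseteq S]=\binom{N-d}{K-d}/\binom{N}{K}=\prod_{i=0}^{d-1}\frac{K-i}{N-i}$. Grouping the sum by $d$ and using that the number of size-$t$ multisets with exactly $d$ distinct elements is $\binom{N}{d}\binom{t-1}{d-1}$, I split the $\ell_1$ sum into the collision-free term $d=t$ and the collision terms $d<t$. For $d=t$ there are $\binom{N}{t}$ multisets and a short manipulation gives $\binom{N+t-1}{t}\lambda_t=\prod_{i=1}^{t-1}\frac{(K-i)(N+i)}{(K+i)(N-i)}\le 1$, so the $d=t$ contribution is at most $\frac{\binom{N}{t}}{\binom{N+t-1}{t}}\big(1-\prod_{i=1}^{t-1}\frac{K-i}{K+i}\big)\le 1\cdot\frac{t(t-1)}{K}$, using $\binom{N}{t}\le\binom{N+t-1}{t}$ and the elementary bound $1-\prod_{i=1}^{t-1}(1-\frac{2i}{K+i})\le\sum_{i=1}^{t-1}\frac{2i}{K}\le t^2/K$. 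For the collision terms I bound the absolute value by the triangle inequality: their total ideal weight is $1-\binom{N}{t}/\binom{N+t-1}{t}=O(t^2/N)$, and their total weight under the averaged matrix is $1-\binom{N}{t}\lambda_t=1-\prod_{i=1}^{t-1}\frac{K-i}{K+i}=O(t^2/K)$; summing the two yields $O(t^2/K)$.

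The main obstacle I anticipate is purely the combinatorial bookkeeping: correctly identifying that the eigenvalue depends only on the number of distinct elements, counting multisets by collision type via $\binom{N}{d}\binom{t-1}{d-1}$, and carefully controlling the full collision contribution (not just the dominant $d=t$ term) so that nothing is over- or double-counted. The analytic content is light, reducing to the single inequality $1-\prod_{i=1}^{t-1}\frac{K-i}{K+i}\le t^2/K$ together with $\prod_{i=1}^{t-1}\frac{N+i}{N-i}\ge 1$; once the eigenvalue structure is in place the estimate $O(t^2/K)$ follows, with the implicit regime of interest being $t^2\ll K$ (for $t^2\gtrsim K$ the claimed bound already exceeds the trivial bound $\TD\le 1$).
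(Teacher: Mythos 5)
Your proof is correct, but it takes a genuinely different route from the paper's. The paper never diagonalizes the averaged state: it introduces the intermediate ensemble of uniformly random \emph{type states} $\ket{\type_T}$ with Hamming-weight-$t$ types, shows (via the orthogonal-decomposition trick of Fact~\ref{fact:perp-trace-distance}, writing the normalized projector as $\alpha\sigma+\beta\sigma^\perp$ with $\beta$ the collision probability $O(t^2/K)$) that this mixture is $O(t^2/K)$-close to $\Pisym^{S,t}/\Tr(\Pisym^{S,t})$ for each fixed $S$, imports the analogous full-space statement from Lemma~5.7 of \cite{ananth2023pseudorandom}, observes that averaging the $S$-restricted type mixture over random $S$ gives exactly the global uniform type mixture, and concludes by the triangle inequality. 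You instead note that both density matrices are simultaneously diagonal in the orthonormal basis of symmetrized multiset states, compute the eigenvalues exactly ($\lambda_M=\binom{K+t-1}{t}^{-1}\prod_{i=0}^{d-1}\tfrac{K-i}{N-i}$ for $d=|\mathrm{supp}(M)|$ distinct elements, versus the flat $\binom{N+t-1}{t}^{-1}$), and bound the resulting $\ell_1$ distance directly, splitting into the collision-free term $d=t$ (controlled by $1-\prod_{i=1}^{t-1}\tfrac{K-i}{K+i}\le t^2/K$, using $K\le N$ to show $\binom{N+t-1}{t}\lambda_t\le 1$) and the collision terms $d<t$ (controlled by their total mass $1-\binom{N}{t}\lambda_t=O(t^2/K)$ and $1-\binom{N}{t}/\binom{N+t-1}{t}=O(t^2/N)$ under the two states respectively); I verified these identities and estimates, and they hold. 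What your approach buys is a self-contained proof with an exact spectral description of $\E_S[\Pisym^{S,t}/\Tr(\Pisym^{S,t})]$, needing neither the intermediate type-state ensembles nor the imported lemma; what the paper's approach buys is economy within the larger argument, since the type-state machinery and Fact~\ref{fact:perp-trace-distance} are reused in the hybrid proof of Lemma~\ref{lemma: fixed subset random phase}, where the random binary phases are handled via binary types and no simultaneous diagonalization is available.
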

Before proving Lemma~\ref{symmetric-subspace-expectation}, we first introduce the following useful concept from \cite{ananth2023pseudorandom}.

\begin{definition}[{\cite[Definition 5.6]{ananth2023pseudorandom}}]\label{defn:typeT}
Let $v\in[N]^t$ for some $N,t\in\mathbb{N}$, then define $\type(v)$ to be a vector in $[t+1]^N$ where the $i^{th}$ entry in $\type(v)$ denotes the frequency of $i$ in $v$. Further, let $T\in[t+1]^N$ for some $N,t\in\mathbb{N}$, then define
\begin{align}
\ket{\type_T}=\beta\sum_{v\in[N]^t\type(v)=T}\ket{v},
\end{align}
where $\beta\in\mathbb{R}$ is the normalization constant.
\end{definition}

The concepts $\type_v\in[t+1]^N$ and $\ket{\type_T}$ provide us a new way to depict the projectors onto the symmetric subspace as well as Haar random states. In particular, we have the following result.

\begin{lemma}\label{symmetry-type}
For all $N,t\in\mathbb{N}$, we have
\begin{align}\label{eqn:symmetry-type-1}
\TD\bigg(\mathop{\mathbb{E}}\limits_{T\leftarrow\{0,1\}^N, \mathrm{hamming}(T)=t}\ket{\type_T}\bra{\type_T},\frac{\Pisym^{N,t}}{\Tr(\Pisym^{N,t})}\bigg)\leq O\Bigg(\frac{t^2}{N}\Bigg),
\end{align}
where $\ket{\type_T}$ is defined in Definition~\ref{defn:typeT}. Furthermore, for all $N,n,t\in\mathbb{N}$ with $N=2^n$ and any $S\subseteq\{0,1\}^n$ with fixed size $|S| = K$ and $K \geq t$, we have
\begin{align}\label{eqn:symmetry-type-2}
\TD\bigg(\mathop{\mathbb{E}}\limits_{T\leftarrow\mathscr{T}(S,t)}\ket{\type_T}\bra{\type_T},\frac{\Pisym^{S,t}}{\Tr\left(\Pisym^{S,t}\right)}\bigg)\leq O\Bigg(\frac{t^2}{K}\Bigg),
\end{align}
where for any subset $S\subseteq\{0,1\}^2$ and any $t\in\mathbb{N}$,  $\mathscr{T}(S,t)$ denotes the set of vectors $T\in\{0,1\}^N$ with Hamming weight $t$ and the non-zero entries of $T$ have indices in the set $S$. Quantitatively,
\begin{align}\label{eqn:mathscrT-defn}
\mathscr{T}(S,t)\coloneqq\{T\in\{0,1\}^N\,|\,\mathrm{hamming}(T)=t\text{ and the non-zero entries of }T\text{ have indices in }S\}.
\end{align}
\end{lemma}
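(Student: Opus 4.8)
The plan is to diagonalize both density matrices in the orthonormal basis of type states and reduce the trace distance to a purely classical $\ell_1$ estimate. The key structural fact is that, as $T$ ranges over all types in $[t+1]^N$ with $\sum_i T_i = t$, the states $\ket{\type_T}$ form an orthonormal basis of the symmetric subspace of $(\C^N)^{\otimes t}$: each $\ket{\type_T}$ is the normalized uniform superposition over the tuples $v$ with $\type(v)=T$, distinct types have disjoint support (hence orthogonality), and applying $\Pisym^{N,t}$ to any computational basis vector $\ket{v}$ returns a scalar multiple of $\ket{\type_{\type(v)}}$ (hence spanning). Consequently $\frac{\Pisym^{N,t}}{\Tr(\Pisym^{N,t})}=\frac{1}{\binom{N+t-1}{t}}\sum_{T}\ket{\type_T}\bra{\type_T}$, summed over all $\binom{N+t-1}{t}$ types.

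First I would observe that both matrices in \eqn{symmetry-type-1} are diagonal in this type basis. Let $\rho_1$ be the uniform mixture of $\ket{\type_T}\bra{\type_T}$ over the \emph{collision-free} types (every coordinate in $\{0,1\}$, equivalently Hamming weight $t$); there are $\binom{N}{t}$ of these, so $\rho_1$ has eigenvalue $\binom{N}{t}^{-1}$ on each and $0$ elsewhere, matching the sampling distribution in \eqn{symmetry-type-1}. The normalized symmetric projector $\rho_2=\frac{\Pisym^{N,t}}{\Tr(\Pisym^{N,t})}$ places eigenvalue $\binom{N+t-1}{t}^{-1}$ on every type. Setting $\alpha=\binom{N}{t}/\binom{N+t-1}{t}$ and letting $\rho_1^{\perp}$ be the uniform mixture over the remaining (collision) types, one checks directly that $\rho_2=\alpha\rho_1+(1-\alpha)\rho_1^{\perp}$ with $\rho_1\rho_1^{\perp}=0$, so Fact~\ref{fact:perp-trace-distance} yields $\TD(\rho_1,\rho_2)=1-\alpha=1-\binom{N}{t}/\binom{N+t-1}{t}$.

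It then remains to bound this binomial ratio, which is the only computation involved. Writing it as a telescoping product $\binom{N}{t}/\binom{N+t-1}{t}=\prod_{i=1}^{t-1}\frac{N-i}{N+i}$ and using $\frac{N-i}{N+i}\geq 1-\frac{2i}{N}$ together with $\prod_i(1-a_i)\geq 1-\sum_i a_i$, I obtain $\binom{N}{t}/\binom{N+t-1}{t}\geq 1-\frac{t(t-1)}{N}$, so $\TD(\rho_1,\rho_2)\leq \frac{t(t-1)}{N}=O(t^2/N)$, proving \eqn{symmetry-type-1}.

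For \eqn{symmetry-type-2} the argument is identical after replacing the ambient space $\C^N$ by the truncated space $\C^S$ of dimension $K=|S|$: the elements of $\mathscr{T}(S,t)$ are exactly the collision-free types supported on $S$, of which there are $\binom{K}{t}$, while $\frac{\Pisym^{S,t}}{\Tr(\Pisym^{S,t})}$ is the uniform mixture over all $\binom{K+t-1}{t}$ multiset-types drawn from $S$; the same application of Fact~\ref{fact:perp-trace-distance} and the same product estimate give $\TD\leq 1-\binom{K}{t}/\binom{K+t-1}{t}\leq t(t-1)/K=O(t^2/K)$. I do not expect a genuine obstacle here: the real content is recognizing that the two measures are mutually diagonal in the type basis, which collapses the problem to a one-line classical estimate, and the binomial bound is routine. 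The only point requiring care is verifying cleanly that the type states are orthonormal and exhaust the symmetric subspace, so that the decomposition $\rho_2=\alpha\rho_1+(1-\alpha)\rho_1^{\perp}$ holds exactly and Fact~\ref{fact:perp-trace-distance} can be invoked.
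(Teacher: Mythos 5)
Your proof is correct and takes essentially the same route as the paper's: both decompose the normalized symmetric projector as $\alpha\sigma+\beta\sigma^{\perp}$, where $\sigma$ is the mixture over collision-free types and $\sigma^{\perp}$ the orthogonal collision part, and then invoke Fact~\ref{fact:perp-trace-distance} so that the trace distance equals $\beta$. The only cosmetic difference is that you compute $\beta = 1-\binom{N}{t}/\binom{N+t-1}{t}$ exactly and bound it via a telescoping product, whereas the paper bounds $\beta$ directly as the birthday-collision probability, which is $O(t^2/K)$ either way.
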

\begin{proof}
The proof of \eqn{symmetry-type-1} is given in Lemma 5.7 of \cite{ananth2023pseudorandom}, whereas the proof of \eqn{symmetry-type-2} closely follows the proof of Lemma 5.7 of \cite{ananth2023pseudorandom}. Specifically, by \eqn{S-projector} we denote
    \begin{align*}
        \rho\coloneqq\frac{\Pisym^{S,t}}{\Tr(\Pisym^{S,t})}
        &=\frac{1}{t!\Tr(\Pisym^{S,t})} \sum_{\sigma \in \perm_t} P_{S}(\sigma) \\
        &= \frac{1}{t!\Tr(\Pisym^{S,t})} \sum_{\sigma \in \perm_t} \sum_{x_1,\ldots,x_t\in S}\ketbra{x_{\sigma^{-1}(1)},\ldots,x_{\sigma^{-1}(t)}}{x_1,\ldots,x_t}.
    \end{align*}
    Moreover, we denote
    \begin{align}
        \sigma\coloneqq\mathop{\mathbb{E}}\limits_{T\leftarrow\mathscr{T}(S,t)} \ketbra{\type_T}{\type_T} 
        &= \mathop{\mathbb{E}}\limits_{T\leftarrow\mathscr{T}(S,t)} \left(\frac{1}{\sqrt{t!}}\sum_{\substack{v\in S^t\\ \type(v) = T}}\ket{v}\right)\left(\frac{1}{\sqrt{t!}}\sum_{\substack{v'\in S^t\\ \type(v') = T}}\bra{v'}\right) \\
        &= \frac{1}{t!} \mathop{\mathbb{E}}\limits_{T\leftarrow\mathscr{T}(S,t)} \left(\sum_{\substack{v,v'\in S^t\\ \type(v) = \type(v') = T}}\ketbra{v}{v'}\right) \\
        &= \frac{1}{t!} \mathop{\mathbb{E}}\limits_{T\leftarrow\mathscr{T}(S,t)} \left(\sum_{\substack{v\in S^t\\ \type(v) = T}}\sum_{\substack{\sigma\in \perm_t\\ v' = \sigma(v)}}\ketbra{v}{v'}\right) \\
        &= \frac{1}{t! {K \choose t}} \sum_{\substack{v\in S^t\\ \type(v) \in \{0,1\}^N }}\sum_{\sigma\in\perm_t}\ketbra{v}{\sigma(v)} \\
        &= \frac{1}{t! {K \choose t}} \sum_{\substack{x_1,\ldots,x_t\in S\\ x_1,\ldots,x_t \text{ are distinct}}}\sum_{\sigma\in \perm_t}\ketbra{x_1,\ldots,x_t}{x_{\sigma(1)},\ldots,x_{\sigma(t)}},
    \end{align}
    where the third line follows the fact that vector of same type as permutation of each other, and the fourth line is an expectation (since there are a total of $\binom{K}{t}$ strings of hamming weight of $t$ in $S$). The fifth line follows since $\type(v)\in S$, hence all elements of $v$ are distinct. Further, we define
    \begin{align}
    \sigma^{\perp} \coloneqq \frac{1}{t! \left({K + t - 1\choose t} - {K \choose t}\right)} \sum_{\substack{x_1,\ldots,x_t\in S\\ x_1,\ldots,x_t \text{ are not distinct}}}\sum_{\sigma\in \perm_t}\ketbra{x_1,\ldots,x_t}{x_{\sigma(1)},\ldots,x_{\sigma(t)}},
    \end{align}
    which satisfies $\sigma\sigma^{\perp}=\sigma^{\perp}\sigma=0$, and
    \begin{align}
    \rho=\alpha\sigma+\beta\sigma^{\perp},
    \end{align}
    where $\beta$ equals the probability of picking $x_1,\ldots,x_t\in S$ such that there is a collision, which is less than $O(t^2/K)$. Then by Fact~\ref{fact:perp-trace-distance}, we can derive that
    \begin{align}
    \TD\bigg(\mathop{\mathbb{E}}\limits_{T\leftarrow\mathscr{T}(S,t)}\ket{\type_T}\bra{\type_T},\frac{\Pisym^{S,t}}{\Tr\left(\Pisym^{S,t}\right)}\bigg)\leq O\Bigg(\frac{t^2}{K}\Bigg).
    \end{align}
    
\end{proof}

We are now ready to present the proof of Lemma~\ref{symmetric-subspace-expectation}.

\begin{proof}[Proof of Lemma~\ref{symmetric-subspace-expectation}]

By Lemma~\ref{symmetry-type}, we know that
\begin{align}
\TD\bigg(\mathop{\mathbb{E}}\limits_{T\leftarrow\mathscr{T}(S,t)}\ket{\type_T}\bra{\type_T},\frac{\Pisym^{S,t}}{\Tr(\Pisym^{S,t})}\bigg)\leq O\Bigg(\frac{t^2}{K}\Bigg)
\end{align}
for any $S$ with size $K$, and
\begin{align}
\TD\bigg(\mathop{\mathbb{E}}\limits_{T\leftarrow\{0,1\}^N, \mathrm{hamming}(T)=t}\ket{\type_T}\bra{\type_T},\frac{\Pisym^{N,t}}{\Tr(\Pisym^{N,t})}\bigg)\leq O\Bigg(\frac{t^2}{N}\Bigg).
\end{align}
For any $T\in\{0,1\}^N$ with hamming$(T)=t$, denote
\begin{align}
p_T=\Pr_{T'\leftarrow\mathscr{T}(S,t)}\{T=T'\},\quad S\text{ is a random subset of }\{0,1\}^n\text{ with size }|S|,
\end{align}
where $\mathscr{T}(S,t)$ defined in Eq.~\eqn{mathscrT-defn} denotes the set of vectors $T\in\{0,1\}^N$ with Hamming weight $t$ and the non-zero entries of $T$ have indices in the set $S$. Observe that $p_T$ is a uniform distribution among all possible $T\in\{0,1\}^N$ with hamming$(T)=t$, which leads to
\begin{align}
\underset{S\text{ with }|S|=K}{\E}\Bigg[\mathop{\mathbb{E}}\limits_{T\leftarrow\mathscr{T}(S,t)}\ket{\type_T}\bra{\type_T}\Bigg]
=\mathop{\mathbb{E}}\limits_{T\leftarrow\{0,1\}^N, \mathrm{hamming}(T)=t}\ket{\type_T}\bra{\type_T}.
\end{align}
Hence,
\begin{align}
&\TD\bigg(\underset{S\text{ with }|S|=K}{\E}\Big[\frac{\Pisym^{S,t}}{\Tr(\Pisym^{S,t})}\Big],\frac{\Pisym^{N,t}}{\Tr(\Pisym^{N,t})}\bigg)\nonumber\\
&\qquad\leq \TD\bigg(\mathop{\mathbb{E}}\limits_{T\leftarrow\{0,1\}^N, \mathrm{hamming}(T)=t}\ket{\type_T}\bra{\type_T},\frac{\Pisym^{N,t}}{\Tr(\Pisym^{N,t})}\bigg)\nonumber\\
&\qquad\quad\ +\underset{S\text{ with }|S|=K}{\E}\bigg[\TD\bigg(\mathop{\mathbb{E}}\limits_{T\leftarrow\mathscr{T}(S,t)}\ket{\type_T}\bra{\type_T},\frac{\Pisym^{S,t}}{\Tr(\Pisym^{S,t})}\bigg)\bigg]\\
&\qquad\leq O(t^2/K).
\end{align}
\end{proof}

\subsubsection{Proof of Theorem~\ref{thmref:SPS-security}}

Our proof of Theorem $3.1$ will go through two main steps.
\begin{itemize}
\item First, we consider a fixed subset and a random phase function. Then we prove that 
\begin{lemma}
\label{lemma: fixed subset random phase}
\begin{align}
\mathsf{TD}\left(\underset{f}{\E}\left[\ket{\psi_{f,S}}\bra{\psi_{f,S}}^{\otimes t}\right],\frac{\Pisym^{S,t}}{\Tr(\Pisym^{S,t})}\right)\leq \mathcal{O}(t^2/|S|).
\end{align}
\end{lemma}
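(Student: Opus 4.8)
The plan is to exploit that a uniformly random phase function $f$ assigns to each $x\in S$ an independent, uniformly random sign $\epsilon_x\coloneqq(-1)^{f(x)}\in\{+1,-1\}$. Writing $K=|S|$ and expanding the $t$-fold tensor power, I would first record
\begin{align*}
\E_f\!\left[\ket{\psi_{f,S}}\bra{\psi_{f,S}}^{\otimes t}\right]
=\frac{1}{K^t}\sum_{x_1,\dots,x_t,\,y_1,\dots,y_t\in S}\E_f\!\left[\textstyle\prod_{i=1}^t \epsilon_{x_i}\epsilon_{y_i}\right]\ket{x_1,\dots,x_t}\bra{y_1,\dots,y_t}.
\end{align*}
Because the signs are independent and mean-zero, the coefficient $\E_f[\prod_i \epsilon_{x_i}\epsilon_{y_i}]$ equals $1$ exactly when every value of $S$ occurs an \emph{even} number of times in the combined list $(x_1,\dots,x_t,y_1,\dots,y_t)$, and is $0$ otherwise. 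Thus only these ``even-multiplicity'' index patterns survive, and the random-phase average becomes a sum over such patterns.

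Next I would split this sum according to whether the tuple $(x_1,\dots,x_t)$ has all distinct entries. If the $x_i$ are distinct, then the even-multiplicity condition forces the $y_j$ to be a permutation of the $x_i$; conversely, if some $x_i$ is repeated then $(y_1,\dots,y_t)$ must also contain a repeat. Hence I can write $\E_f[\ket{\psi_{f,S}}\bra{\psi_{f,S}}^{\otimes t}] = D + R$, where $D$ collects the distinct tuples and $R$ the non-distinct ones. Comparing with the state $\omega_S\coloneqq\E_{T\leftarrow\mathscr{T}(S,t)}\ket{\type_T}\bra{\type_T}$ whose explicit form was computed in the proof of Lemma~\ref{symmetry-type}, one sees $D=\tfrac{t!\binom{K}{t}}{K^t}\,\omega_S$, since the distinct part equals $\tfrac{1}{K^t}\sum_{x\text{ distinct}}\sum_{\sigma\in\perm_t}\ket{x_1,\dots,x_t}\bra{x_{\sigma(1)},\dots,x_{\sigma(t)}}$ while $\omega_S$ is the same sum with prefactor $\tfrac{1}{t!\binom{K}{t}}$. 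Note $\tfrac{t!\binom{K}{t}}{K^t}=\prod_{i=0}^{t-1}(1-i/K)=1-O(t^2/K)$, which is precisely one minus the birthday-collision probability.

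The crux of the argument — and the step I expect to be the main obstacle — is showing that the remainder $R$ is positive semidefinite, so that its trace-norm collapses to its (small) trace. The key observation is that $R$ is supported, on both sides, only on computational basis tuples containing a repeated entry, whereas $D$ (being proportional to $\omega_S$) is supported only on distinct tuples; these two subspaces are spanned by disjoint sets of computational basis states, so $D$ and $R$ have orthogonal supports and $\E_f[\cdots]$ is block-diagonal with respect to them. Since $\E_f[\ket{\psi_{f,S}}\bra{\psi_{f,S}}^{\otimes t}]$ is a density matrix (PSD) and $D$ is PSD, each block is PSD, giving $R\succeq 0$ and therefore $\|R\|_1=\Tr(R)=1-\tfrac{t!\binom{K}{t}}{K^t}=O(t^2/K)$. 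One must verify carefully that $R$ is Hermitian and that these supports are genuinely orthogonal (using that a repeated-entry basis tuple is orthogonal to every distinct-entry basis tuple).

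Finally I would assemble the bound. By trace-norm additivity over the orthogonal blocks,
\begin{align*}
\left\|\E_f\!\left[\ket{\psi_{f,S}}\bra{\psi_{f,S}}^{\otimes t}\right]-\omega_S\right\|_1
=\|D-\omega_S\|_1+\|R\|_1
=\Big(1-\tfrac{t!\binom{K}{t}}{K^t}\Big)+O(t^2/K)=O(t^2/K),
\end{align*}
so $\TD(\E_f[\ket{\psi_{f,S}}\bra{\psi_{f,S}}^{\otimes t}],\omega_S)\le O(t^2/K)$. Combining this with the estimate $\TD\big(\omega_S,\tfrac{\Pisym^{S,t}}{\Tr(\Pisym^{S,t})}\big)\le O(t^2/K)$ from \eqn{symmetry-type-2} of Lemma~\ref{symmetry-type} via the triangle inequality yields the claimed $O(t^2/|S|)$ bound, completing the proof of Lemma~\ref{lemma: fixed subset random phase}.
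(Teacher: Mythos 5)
Your proof is correct, and its skeleton coincides with the paper's: both begin with the same expansion of $\E_f\big[\ket{\psi_{f,S}}\bra{\psi_{f,S}}^{\otimes t}\big]$ into pairs of tuples whose combined multiplicities are all even (equivalently, $\type(x)\bmod 2=\type(y)\bmod 2$), and both finish by invoking \eqn{symmetry-type-2} of Lemma~\ref{symmetry-type} together with the triangle inequality. Where you diverge is the middle step. The paper routes it through an explicit hybrid with binary type states: Hybrid 1 (the random-phase average) is shown to equal Hybrid 2 exactly (a mixture of $\ket{\bintype_T}$ over $T=\type(w)\bmod 2$ for uniform $w\in S^t$), and then Hybrid 2 versus Hybrid 3 (the mixture $\E_{T\leftarrow\mathscr{T}(S,t)}\ket{\type_T}\bra{\type_T}$) is bounded by the collision probability $O(t^2/|S|)$ of the classical sample $w$. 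You instead decompose the averaged operator as $D+R$, with $D$ proportional to that type-state mixture and supported on distinct tuples and $R$ supported on collision tuples, observe that the two blocks occupy disjoint sets of computational basis vectors, deduce $R\succeq 0$ from block-diagonality of a PSD matrix, and conclude $\|R\|_1=\Tr(R)=1-t!\binom{K}{t}/K^t$, with trace-norm additivity over the orthogonal blocks closing the bound. The underlying content is the same birthday estimate, but your packaging is somewhat more airtight: the paper's one-line claim that closeness of the distributions over $T$ gives $O(t^2/|S|)$ trace distance between the quantum mixtures implicitly relies on a coupling of the two ensembles, whereas your positivity-plus-orthogonal-supports argument (in the spirit of Fact~\ref{fact:perp-trace-distance}, which the paper uses inside Lemma~\ref{symmetry-type}) makes this fully explicit and even yields the residual mass exactly rather than as an upper bound; what the paper's hybrid phrasing buys in exchange is uniformity with the framework of \cite{ananth2023pseudorandom} from which the proof is adapted.
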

\begin{proof}
The proof closely follows the proof of Theorem~5.1 in \cite{ananth2023pseudorandom} which also proceeds via a hybrid argument. In particular, we consider the following hybrids.
\paragraph{Hybrid 1.} Choose a binary phase function $f:\{0,1 \}^n \to \{ 0, 1\}$ uniformly at random. Output $t$ copies of the subset phase state $\ket{\psi_{f,S}}$ defined in \eqn{subset-phase}. Note that the output density matrix equals
\begin{align}
\underset{f}{\E}\left[\ket{\psi_{f,S}}\bra{\psi_{f,S}}^{\otimes t}\right]
\end{align}
In expectation.

\paragraph{Hybrid 2.} Sample $w \in S^t$ uniformly at random under the condition that the non-zero entries of $w$ have indices in the set $S$. Let $T = \type(w) \pmod{2}$.
Output $\ket{\bintype_T}^{\otimes t}$, where
\begin{align}
\ket{\bintype_T}\coloneqq \beta\sum_{\substack{v\in S^t\\ \type(v) \pmod{2} = T}}\ket{v}.
\end{align}

\paragraph{Hybrid 3.} Sample $T \in \mathscr{T}(S,t)$ uniformly at random, where $\mathscr{T}(S,t)$ is defined in \eqn{mathscrT-defn}. Output $\ket{\type_T}^{\otimes t}$.

We first show that \textbf{Hybrid 1} and \textbf{Hybrid 2} are identical. The output density matrix $\rho$ of \textbf{Hybrid 1} satisfies
\begin{align}
\rho&=\underset{f}{\E}\left(\frac{1}{\sqrt{|S|^t}}\sum_{x_1,\ldots,x_t\in S}(-1)^{f(x_1)+\cdots+f(x_t)}\ket{x_1,\ldots,x_t}\right)\nonumber\\
&\qquad\qquad\quad\times \left(\frac{1}{\sqrt{|S|^t}}\sum_{y_1,\ldots,y_t\in S}(-1)^{f(y_1)+\cdots+f(y_t)}\bra{y_1,\ldots,y_t}\right)\\
&=\frac{1}{|S|^t}\underset{f}{\E}\left(\sum_{\substack{x_1,\ldots,x_t\in S\\ y_1,\ldots,y_t\in S}}(-1)^{f(x_1)+\cdots+f(x_t)+f(y_1)+\cdots+f(y_t)}\ket{x_1,\ldots,x_t}\bra{y_1,\ldots,y_t}\right)\\
&=\frac{1}{|S|^t}\left(\sum_{\substack{x_1,\ldots,x_t\in S\\ y_1,\ldots,y_t\in S}}\sum_{\substack{\type(x)\text{ mod }2=\type(y)\text{ mod }2}}\ket{x_1,\ldots,x_t}\bra{y_1,\ldots,y_t}\right),\label{eqn:rho-final}
\end{align}
whereas the output density matrix $\rho'$ of \textbf{Hybrid 2} satisfies
\begin{align}
    \rho' &= C_1\cdot\underset{w\in S^t}{\E} \left(\sum_{\substack{v\in S^t \\ \type(v)\bmod{2} = \type(w)\bmod{2}}}\ket{v}\right)\left(\sum_{\substack{v'\in S^t \\ \type(v')\bmod{2} = \type(w)\bmod{2}}}\bra{v'}\right) \\
    &= C_2\cdot\frac{1}{|S|^{t}} \left(\sum_{\substack{v,v'\in S^t \\ \type(v)\bmod{2} = \type(v')\bmod{2}}}\ketbra{v}{v'}\right),
\end{align}
where $C_1$, $C_2$ are appropriate normalization factors. Note that it is the same as $\rho$ as shown in \eqn{rho-final}, indicating that the output of \textbf{Hybrid 1} and \textbf{Hybrid 2} are identical to each other.

Let $w\in S^t$ be sampled uniformly at random and $T = \type(w) \pmod{2}$, then $\mathrm{hamming}(T) = t$ with probability $1-t^2/|S|$ as $w$ has no collisions with probability $1-t^2/|S|$. Hence, the trace distance between the outputs of \textbf{Hybrid 2} and \textbf{Hybrid 3} is $O(t^2/|S|)$. By Lemma~\ref{symmetry-type}, the trace distance between the output of \textbf{Hybrid 3} and 
\begin{align}
\frac{\Pisym^{S,t}}{\Tr(\Pisym^{S,t})}
\end{align}
is also $O(t^2/|S|)$, indicating
\begin{align}
\mathsf{TD}\left(\underset{f}{\E}\left[\ket{\psi_{f,S}}\bra{\psi_{f,S}}^{\otimes t}\right],\frac{\Pisym^{S,t}}{\Tr(\Pisym^{S,t})}\right)\leq \mathcal{O}(t^2/|S|),
\end{align}
since $\underset{f}{\E}\left[\ket{\psi_{f,S}}\bra{\psi_{f,S}}^{\otimes t}\right]$ is the output of \textbf{Hybrid 1}.
\end{proof}

Lemma \ref{lemma: fixed subset random phase} means that for appropriately large choices of $S$, if we fix the subset but pick the binary phases at random, then the corresponding density matrix is \emph{close in trace distance} to the normalized symmetric projector onto the truncated Hilbert space defined by that fixed subspace. The closeness depends on the size of the subset chosen.
\item Then using Fact \ref{Haar-random} in the second line, the triangle inequality for trace distance along with Lemma \ref{symmetric-subspace-expectation} in the third line, a simple convexity argument in the fourth line, and Lemma \ref{lemma: fixed subset random phase} in the fifth line, we write
\begin{equation}
\begin{aligned}
    &\mathsf{TD}\left(\underset{S\text{ with }|S|=K,\ f}{\E}\left[\ket{\psi}\bra{\psi}^{\otimes t}\right], \underset{\ket{\phi}\leftarrow\mathscr{H}(\C^N)}{\E}\left[\ket{\phi}\bra{\phi}^{\otimes t}\right]  \right) \\ 
    &= \mathsf{TD}\left(\underset{S\text{ with }|S|=K,\ f}{\E}\left[\ket{\psi}\bra{\psi}^{\otimes t}\right], \frac{\Pisym^{N,t}}{\Tr(\Pisym^{N,t})}  \right) \\
    &\leq \mathsf{TD}\left(\underset{S\text{ with }|S|=K,\ f}{\E}\left[\ket{\psi}\bra{\psi}^{\otimes t}\right], \underset{S~\text{with}~|S|=K}{\E}\left[\frac{\Pisym^{S,t}}{\Tr(\Pisym^{S,t})}\right]  \right) + \mathcal{O}(t^2/K) \\
        &\leq \mathsf{TD}\left(\underset{f}{\E}\left[\ket{\psi}\bra{\psi}^{\otimes t}\right], \frac{\Pisym^{S,t}}{\Tr(\Pisym^{S,t})}  \right) + \mathcal{O}(t^2/K),~~~~~\forall S\text{ s.t. }|S|=K  \\
        &\leq \mathcal{O}(2t^2/K),
    \end{aligned}
\end{equation}

%%%%%%%%%%%%%%%%%%%%%%%%%%%%%%%%%%%%%%%%%%%%%%%%%
Hence, as long as we choose $K$ to be quasi-polynomially large, the trace distance between the subset phase state $\ket{\psi}$ defined in Eq.~\eqn{subset-phase} is inverse quasi-polynomially close to Haar random state in trace distance, given that $t=\poly(n)$.
\end{itemize}

\subsection{Proof of computational indistinguishability}

\noindent In this section, we will prove the following theorem.
\begin{theorem}
Consider an ensemble of subset phase states $|\psi_{A, p}\rangle$ given by
\begin{equation}
\label{main state}
     |\psi_{f, p}\rangle = \frac{1}{\sqrt{2^k}} \sum_{x \in \{0, 1\}^k}  (-1)^{f(p(x 0^{\otimes (n-k)}))}\ket{p(x 0^{\otimes (n-k)})},
\end{equation}
where $p$ is sampled uniformly at random from a family of quantum--secure pseudorandom permutations $P$ with 
\begin{equation}
P = \{p : [2^n] \rightarrow [2^n]\}
\end{equation}
and $f$ is sampled uniformly at random from a family of quantum--secure pseudorandom functions $F$ with
    \begin{equation}
F = \{f :[2^{n}] \to \{1, -1 \} \}.
\end{equation}
Then, \eqref{main state} defines an ensemble of pseudorandom quantum states, with the secret key $\mathsf{K}$ being the description of $f$, $p$, and $p^{-1}$, where $p^{-1}$ is the inverse permutation of $p$. 
\end{theorem}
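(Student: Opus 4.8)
The plan is to run a standard hybrid argument that interpolates from the pseudorandom construction \eqref{main state} down to the truly random subset phase state, and then to invoke \thmref{SPS-security} to bridge the remaining gap to the Haar measure. Concretely, I would fix an arbitrary polynomial number of copies $t=\poly(n)$ and set $K=2^k$, and define three intermediate ensembles: \textbf{Hybrid 0} is the pseudorandom state $\ket{\psi_{f,p}}$ with $f$ drawn from the PRF family and $p$ from the PRP family; \textbf{Hybrid 1} replaces the PRF $f$ by a truly random phase function while keeping $p$ a PRP; and \textbf{Hybrid 2} further replaces $p$ by a truly uniformly random permutation. In Hybrid 2 the image $\{p(x0^{n-k}) : x\in\{0,1\}^k\}$ is a uniformly random subset $S$ of size $K$ and the phases on $S$ are uniformly random, so Hybrid 2 is exactly the random subset phase state $\ket{\psi}$ studied in \thmref{SPS-security}. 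The goal is to show each consecutive pair is indistinguishable to any poly-time quantum algorithm given $t$ copies.

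For the transition from Hybrid 0 to Hybrid 1 I would reduce to quantum PRF security. Given a distinguisher $\mathcal{A}$ separating the two $t$-copy ensembles, the reduction samples the permutation $p$ (and $p^{-1}$) itself, which it can do efficiently, uses its oracle---either the genuine PRF $f$ or a truly random function---to install the phases, prepares all $t$ copies using the preparation circuit described above, and runs $\mathcal{A}$; any non-negligible advantage violates the assumed PRF security. The transition from Hybrid 1 to Hybrid 2 reduces analogously to PRP security, with the reduction now holding oracle access to either the genuine PRP $p$ together with its inverse $p^{-1}$, or to a truly random permutation and its inverse, and preparing the $t$ copies through the same in-place procedure.

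The step I expect to require the most care is this last reduction, because the PRP reduction must \emph{itself} supply a truly random phase function, which cannot be sampled efficiently. I would resolve this with Zhandry's observation that a truly random function is perfectly indistinguishable from a $2q$-wise independent function against any $q$-query quantum algorithm: since preparing $t$ copies makes at most $\poly(t)$ phase queries, the reduction can substitute an efficiently computable $2\poly(t)$-wise independent hash for the random phases, keeping the reduction polynomial-time while leaving the induced ensembles statistically identical. One must also confirm that the in-place uncomputation step queries $p^{-1}$, so the argument needs the permutation family to be a \emph{strong} quantum-secure PRP (secure even against adversaries querying both $p$ and $p^{-1}$); such families are available from the cited constructions.

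Finally, Hybrid 2 is precisely the random subset phase state of \thmref{SPS-security} with $K=2^k=2^{\omega(\log n)}$, so its $t$-copy density matrix lies within trace distance $O(t^2/K)$ of the $t$-copy Haar ensemble, which is negligible for $t=\poly(n)$. Chaining the two computational steps with this statistical step via the triangle inequality for trace distance shows that $\E_{f,p}\big[\ket{\psi_{f,p}}\bra{\psi_{f,p}}^{\otimes t}\big]$ is computationally indistinguishable from the Haar ensemble for every polynomial $t$, which is exactly the claim that $\ket{\psi_{f,p}}$ forms a pseudorandom state ensemble with secret key the descriptions of $f$, $p$, and $p^{-1}$.
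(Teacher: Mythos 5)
Your proposal is correct and takes essentially the same route as the paper: a hybrid argument that swaps the pseudorandom primitives for truly random ones and then invokes \thmref{SPS-security} to bridge the remaining statistical gap to the Haar ensemble. The only difference is cosmetic---the paper replaces $(f, p, p^{-1})$ with $(r_\mathsf{f}, r_\mathsf{p}, r_\mathsf{p}^{-1})$ in a single hybrid step, whereas you split it into two transitions and explicitly spell out the $2q$-wise-independence simulation of the random phase oracle and the strong-PRP (inverse-access) requirement, details the paper leaves implicit.
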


Note that the efficient preparability of the states in \eqref{main state} follow from Section \ref{efficient preparation}. The rest of the proof will be a security analysis: to prove that this construction is computationally indistinguishable from Haar random states. Finally, in a separate section, we will analyze the entropy of this state ensemble.

%============================================

\subsubsection{Security analysis}
In this subsection, we prove the following proposition.

\begin{proposition}
The ensemble of subset phase states defined in \eqn{pseudorandom-subset-phase} is computationally indistinguishable from Haar random states, with the secret key being the description of $f$ and $p$, when $|\mathsf{S}| = 2^{\omega(\log n)}$.
\end{proposition}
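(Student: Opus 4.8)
The plan is to establish computational indistinguishability via a standard hybrid argument that swaps out each pseudorandom primitive for its truly random counterpart, and then invoke the statistical closeness already proved in Theorem~\ref{thmref:SPS-security}. Concretely, I would define three distributions over $t$-copy density matrices (for an arbitrary polynomial $t$). The starting ensemble is $\rho_0 = \E_{f,p}[\ket{\psi_{f,p}}\bra{\psi_{f,p}}^{\otimes t}]$, where $f$ is drawn from the quantum-secure PRF family $F$ and $p$ from the quantum-secure PRP family $P$. The first hybrid $\rho_1$ replaces the PRP $p$ with a truly random permutation $\pi$ on $[2^n]$ (keeping $f$ a PRF); the second hybrid $\rho_2$ additionally replaces the PRF $f$ with a truly random phase function. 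By construction, $\rho_2 = \E_{S,f}[\ket{\psi_{f,S}}\bra{\psi_{f,S}}^{\otimes t}]$, where $S$ is a uniformly random subset of size $2^k$ and the phase function is uniformly random -- exactly the object bounded in Theorem~\ref{thmref:SPS-security}, which shows it is within $O(t^2/2^k)$ trace distance of the $t$-copy Haar ensemble. Since $|S| = 2^k = 2^{\omega(\log n)}$ and $t = \poly(n)$, this quantity is negligible, so $\rho_2$ is statistically (hence computationally) indistinguishable from Haar.

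It then remains to argue $\rho_0 \approx_c \rho_1 \approx_c \rho_2$. For the $\rho_1 \approx_c \rho_2$ step, I would reduce to the security of the PRF family: any efficient distinguisher telling $\rho_1$ from $\rho_2$ with non-negligible advantage yields an efficient adversary that, given oracle access to either a PRF $f$ or a truly random function, prepares the corresponding $t$ copies of the state and distinguishes -- contradicting PRF security. The key observation enabling this reduction is that the state preparation in Section~\ref{efficient preparation} accesses the phase function only through a phase oracle $\ket{x} \mapsto (-1)^{f(x)}\ket{x}$, which the reduction can implement given oracle access to $f$; preparing $t$ copies requires $t = \poly(n)$ oracle calls, so the reduction is efficient. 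The $\rho_0 \approx_c \rho_1$ step is the analogous reduction to PRP security: a distinguisher between $\rho_0$ and $\rho_1$ gives an adversary distinguishing the PRP $p$ from a random permutation. Here I must be careful that the preparation procedure uses both $p$ and $p^{-1}$ (the in-place permutation trick of Section~\ref{efficient preparation} applies $p$, then uncomputes with $p^{-1}$), so the reduction requires oracle access to both $p$ and its inverse; this is fine provided the PRP family is secure against quantum adversaries with two-sided (inverse) oracle access, which is the standard guarantee for the strong-PRP constructions cited (e.g.~\cite{zhandry2016note}).

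The main subtlety I would flag is ensuring the hybrid reductions remain valid against quantum distinguishers that make superposition queries, and confirming that the PRP/PRF primitives are secure in precisely the query model used. Since the distinguisher receives $t$ honestly-prepared copies rather than oracle access to the state, the reduction itself must prepare these copies internally using its own oracle; the only access to $f$ and $p,p^{-1}$ needed is the polynomially-many oracle calls made during the $t$ state preparations, so the reduction is a legitimate efficient quantum oracle algorithm and the standard quantum-security definitions of $F$ and $P$ apply directly. Assembling the three bounds by the triangle inequality for computational indistinguishability, we conclude that $\rho_0$ is computationally indistinguishable from the $t$-copy Haar ensemble for every polynomial $t$, which is exactly the pseudorandomness claim of the proposition.
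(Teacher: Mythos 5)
Your proposal is correct and takes essentially the same route as the paper: the paper likewise runs a hybrid argument swapping $(f, p, p^{-1})$ for truly random counterparts $(r_\mathsf{f}, r_\mathsf{p}, r_\mathsf{p}^{-1})$ --- done in a single joint step rather than your two --- with the reduction preparing the polynomially many copies itself from black-box access (exactly your observation that only phase-oracle access to $f$ and two-sided access to $p$ are needed), and then closes with Theorem~\ref{thmref:SPS-security} for the statistical step. One minor caveat with your ordering (PRP first, PRF second): in the $\rho_1 \approx_c \rho_2$ reduction the adversary must implement a truly random permutation on $[2^n]$ internally, which is not efficient as written; replacing the PRF first while the PRP is still efficiently sampleable, or simulating the random function/permutation with $2q$-wise independent functions as in \cite{Zha21} (a trick the paper uses in its appendix hybrids), repairs this.
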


\begin{proof}
Note that $f$ is oracle indistinguishable from a random function $r_\mathsf{f}$, from the security analysis in Section \ref{oracle indistinguishability}. Additionally, by definition, $p$ is oracle indistinguishable from a truly random permutation $r_\mathsf{p}$. Moreover, again by definition, the oracle indistinguishability result holds even when the adversary is given access to the inverse of the permutation. That is, no adversary can distinguish between $(p, p^{-1})$ and $(r_\mathsf{p}, r_\mathsf{p}^{-1})$ when given black box access and promised one of these is the case.

So, when given access to three black boxes, promised to either $(f, p, p^{-1})$, or $(r_\mathsf{f}, r_\mathsf{p}, r_\mathsf{p}^{-1})$, no polynomial time adversary, with query access, can distinguish between these two cases. Now, the following sequence of hybrids completes the proof.

\paragraph{Hybrid 0.} This is the case where the adversary is given polynomially many copies of the state
\begin{equation}
\label{first state}
    |\psi_{f, p}\rangle =  \frac{1}{\sqrt{2^k}} \sum_{x \in \{0, 1\}^k}  (-1)^{f(p(x 0^{\otimes (n-k)}))}\ket{p(x 0^{\otimes (n-k)})}.
\end{equation}

\paragraph{Hybrid 1.} This is the case where the adversary is given polynomially many copies of the state
\begin{equation}
\label{second state}
    |\psi_{R, r}\rangle =  \frac{1}{\sqrt{2^k}} \sum_{x \in \{0, 1\}^k}  (-1)^{r_{\mathsf{f}}(r_{\mathsf{p}}(x 0^{\otimes (n-k)}))}\ket{r_{\mathsf{p}}(x 0^{\otimes (n-k)})}.
\end{equation}
This is computationally indistinguishable from $\textbf{Hybrid 0}$ because, otherwise, we can efficiently distinguish between $(f, p, p^{-1})$ and  $(r_\mathsf{f}, r_\mathsf{p}, r_\mathsf{p}^{-1})$ when given black box access, by using the unknown black box to prepare polynomially many copies of a state that has to be either \eqref{first state} or \eqref{second state}. 

\paragraph{Hybrid 2.} The adversary is given polynomially many copies of a Haar random state. This is indistinguishable from \textbf{Hybrid 1}  from Theorem \ref{thmref:SPS-security}. 
\end{proof}

\subsection{Entanglement entropy of pseudorandom subset phase states}
Let $|\psi_{A, p}\rangle$ be a pseudorandom subset phase state and let
\begin{equation}
    \rho_{A, p} = |\psi_{A, p}\rangle \langle \psi_{A, p}|.
\end{equation}
To prevent cluttering notations, we will drop $A$ and $p$ from the subscript of $\rho$ and take them to be implicit whenever we use the symbol, unless otherwise stated. Let ${S}$ be the size of the subset defined by $p$ and let $\mathsf{S}(\cdot)$ be the von Neumann entanglement entropy of a density matrix. 

For an $n$--qubit state $\ket{\psi}$, let $(\mathsf{X}, \mathsf{Y})$ be any partition of the $n$ qubits. Then, for reduced density matrices $\rho_{\mathsf{X}}$ and $\rho_{\mathsf{Y}}$, let the von Neumann entropy, for each, be denoted by $\mathsf{S}(\rho_{\mathsf{X}:\mathsf{Y}})$. 

The following statements are immediate.
\begin{corollary}
\label{first corollary}
For \emph{any} cut $(\mathsf{X}, \mathsf{Y})$ of $n$ qubits, such that $|\mathsf{X}| + |\mathsf{Y}| = n$,
\begin{equation}
    \mathsf{S}(\rho_{\mathsf{X}:\mathsf{Y}}) = \mathcal{O}(\log| {S} |).
\end{equation}
\end{corollary}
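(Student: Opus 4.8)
The plan is to bound the entanglement entropy by the Schmidt rank, which for a subset phase state is at most $|S|$ across \emph{every} cut simultaneously. First I would fix an arbitrary bipartition $(\mathsf{X}, \mathsf{Y})$ with $|\mathsf{X}| + |\mathsf{Y}| = n$ and split each computational basis string $x \in S$ into its $\mathsf{X}$- and $\mathsf{Y}$-components $x = x_{\mathsf{X}} x_{\mathsf{Y}}$ according to the cut, so that
\[ \ket{\psi_{f,S}} = \frac{1}{\sqrt{|S|}} \sum_{x\in S} (-1)^{f(x)} \ket{x_{\mathsf{X}}} \otimes \ket{x_{\mathsf{Y}}}. \]
The key structural observation is that the vectors $\{\ket{x_{\mathsf{X}}} : x \in S\}$ on the $\mathsf{X}$ side span a subspace of dimension at most $|S|$, since the sum has only $|S|$ terms.

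Next I would translate this into a rank bound on the reduced state. The reduced density matrix $\rho_{\mathsf{X}} = \Tr_{\mathsf{Y}}\big(\ket{\psi_{f,S}}\bra{\psi_{f,S}}\big)$ is supported entirely on that subspace, so $\mathrm{rank}(\rho_{\mathsf{X}}) \leq |S|$; equivalently, the Schmidt rank across the cut is at most the number of distinct $\mathsf{X}$-strings appearing in $S$, which is bounded by $|S|$. This step uses nothing about $f$ or the pseudorandom structure — it is a purely combinatorial feature of subset states, and is exactly why the construction controls entanglement across all cuts at once rather than a single fixed cut.

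Finally, since the von Neumann entropy of any density matrix is maximized, for fixed rank, by the uniform distribution over its support, a rank-$r$ state has entropy at most $\log r$. Combining the above gives
\[ \mathsf{S}(\rho_{\mathsf{X}:\mathsf{Y}}) = \mathsf{S}(\rho_{\mathsf{X}}) \leq \log\big(\mathrm{rank}(\rho_{\mathsf{X}})\big) \leq \log |S| = \mathcal{O}(\log|S|), \]
and because no property of the partition $(\mathsf{X}, \mathsf{Y})$ was used, the bound holds for every cut simultaneously. I expect essentially no obstacle in this argument: all of the real work lives in the earlier statistical-closeness and computational-indistinguishability theorems, while this corollary is just the clean observation that subset phase states have Schmidt rank capped by the subset size. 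The only point warranting a word of care is the elementary $\mathsf{S}(\rho_{\mathsf{X}}) \leq \log\mathrm{rank}(\rho_{\mathsf{X}})$ inequality, which I would justify in one line via maximality of the uniform distribution.
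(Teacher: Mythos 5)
Your proposal is correct and is essentially the paper's own argument: the paper's proof likewise observes that the reduced density matrix of a subset phase state has rank at most $|S|$ (since the state is supported on only $|S|$ basis strings) and then bounds the von Neumann entropy by the logarithm of the rank. Your write-up merely spells out the Schmidt decomposition and the $\mathsf{S}(\rho_{\mathsf{X}}) \leq \log \mathrm{rank}(\rho_{\mathsf{X}})$ step more explicitly, which the paper leaves implicit.
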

\begin{proof}
    The proof follows trivially from noting that the rank of the density matrix $\rho$ is at most $|{S}|$, as it is a density matrix corresponding subset state over a subset of size $|{S}|$ and has, at most $|{S}|$ linearly independent rows or columns.
\end{proof}

\begin{corollary}
For $|{S}| = 2^{\poly\log n}$, for \emph{any} cut $(\mathsf{X}, \mathsf{Y})$ of $n$ qubits, such that $|\mathsf{X}| + |\mathsf{Y}| = n$,
\begin{equation}
    \mathsf{S}(\rho_{\mathsf{X}:\mathsf{Y}}) = \Theta(\poly\log n).
\end{equation}
\end{corollary}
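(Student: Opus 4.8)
The upper bound is free: Corollary~\ref{first corollary} already gives $\mathsf{S}(\rho_{\mathsf{X}:\mathsf{Y}}) = O(\log|S|) = O(\poly\log n)$ for every cut, so the entire content is the matching lower bound $\mathsf{S}(\rho_{\mathsf{X}:\mathsf{Y}}) = \Omega(\poly\log n)$ (understood, per the footnote to Theorem~\ref{Thm:Lowentanglement}, for cuts whose smaller side has $\Omega(\poly\log n)$ qubits, since the entropy is trivially capped by $\min(|\mathsf{X}|,|\mathsf{Y}|)$). The plan is to pass to the second R\'enyi entropy, using $\mathsf{S}(\rho_{\mathsf{X}}) \ge -\log\Tr(\rho_{\mathsf{X}}^2)$, and to show that the purity satisfies $\Tr(\rho_{\mathsf{X}}^2) \le |S|^{-\Omega(1)}$; this already yields $\mathsf{S}(\rho_{\mathsf{X}}) = \Omega(\log|S|) = \Omega(\poly\log n)$.

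The key simplification I would exploit is that the reduced state on the \emph{smaller} side is essentially diagonal, which purges the phase function from the calculation. Fix a cut and assume without loss of generality $|\mathsf{Y}| \ge n/2$. Writing $x = (x_{\mathsf{X}}, x_{\mathsf{Y}})$ for $x \in S$, tracing out $\mathsf{Y}$ gives
\[
\rho_{\mathsf{X}} = \frac{1}{|S|}\sum_{\substack{x,x'\in S\\ x_{\mathsf{Y}} = x'_{\mathsf{Y}}}} (-1)^{f(x)+f(x')}\,\ket{x_{\mathsf{X}}}\!\bra{x'_{\mathsf{X}}}.
\]
Because $|\mathsf{Y}| \ge n/2 \gg 2\log|S|$ (recall $|S| = 2^{\poly\log n}$), a birthday estimate shows that for a spread subset the restrictions $\{x_{\mathsf{Y}}\}_{x\in S}$ are pairwise distinct, whence $x_{\mathsf{Y}} = x'_{\mathsf{Y}}$ forces $x = x'$, every off-diagonal (phase-bearing) term drops, and $\rho_{\mathsf{X}} = |S|^{-1}\sum_a n_a \ket{a}\!\bra{a}$ with $n_a := |\{x\in S : x_{\mathsf{X}} = a\}|$. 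Then $\Tr(\rho_{\mathsf{X}}^2) = |S|^{-2}\sum_a n_a^2$, and by concavity of $\log$,
\[
\mathsf{S}(\rho_{\mathsf{X}}) = \log|S| - \frac{1}{|S|}\sum_a n_a\log n_a \ge \log|S| - \log\!\Big(\tfrac{1}{|S|}\textstyle\sum_a n_a^2\Big).
\]
It thus remains to show the collision count $\sum_a n_a^2 = O(|S|)$, i.e. that the smaller side is also spread; since $|\mathsf{X}| \gtrsim \log|S|$, this is a routine balls-into-bins estimate ($|S|$ balls into $\gtrsim |S|$ bins), giving $\mathsf{S}(\rho_{\mathsf{X}}) \ge \Omega(\log|S|)$.

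The hard part is making this hold \emph{across all cuts at once with high probability}, as the main theorem demands, and this is where I expect essentially all the difficulty to lie. First, at cuts where the large side is not perfectly collision-free the off-diagonal terms survive, and one must instead control the purity in expectation over the phases, where the $4$-wise independence of the phase function yields $\E_f[\Tr(\rho_{\mathsf{X}}^2)] = (C_{\mathsf{X}}+C_{\mathsf{Y}})/|S|^2$ with $C_{\mathsf{X}} = \sum_a n_a^2$, $C_{\mathsf{Y}} = \sum_b m_b^2$ --- this is exactly why the construction instantiates the phases as a $4$-wise independent function composed with a pseudorandom permutation. Second, a naive union bound over cuts fails: there are up to $2^n$ cuts, whereas the first/second-moment tail bounds for the events ``$\{x_{\mathsf{Y}}\}$ distinct'' and ``$C_{\mathsf{X}} = O(|S|)$'' give per-cut failure probabilities only of order $2^{-\Omega(n)}$ (as weak as $2^{-n/2}$ for balanced cuts), which $2^n$ overwhelms. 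I would attack this by stratifying cuts by the size $\ell$ of the smaller side --- for small $\ell$ there are only $\binom{n}{\ell} = 2^{O(\ell\log n)}$ cuts, which one can afford, while for $\ell$ near $n/2$ one needs a genuine concentration inequality (bounded-differences, or a higher-moment/$U$-statistic bound on $\sum_a n_a^2$) rather than Markov. Finally, a more conceptual obstacle is that entanglement entropy is a \emph{statistical} quantity not preserved under computational indistinguishability, so these spreading properties must be argued for the \emph{actual} pseudorandom subset $S = \{p(x0^{n-k})\}$, not merely a uniformly random one; since an anomalous collision count is detectable only with $\mathrm{quasipoly}(n)$ queries it does not follow from the poly-query security of $p$, and I would instead derive it from the fact that each pairwise collision event depends on only $O(1)$ values of $p$, bootstrapping concentration from matching the random-permutation pairwise statistics.
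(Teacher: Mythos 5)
Your proposal diverges from the paper's proof at the very first step, and the divergence matters. The paper's proof of this corollary is two lines: the upper bound is the rank bound of Corollary~\ref{first corollary}, and the lower bound is \emph{computational}, inherited from Proposition~\ref{SWAPtest} --- since the subset-phase ensemble has already been proven pseudorandom, any cut across which the entropy were $O(\log n)$ would make $\Tr(\rho_{\mathsf{X}}^2) \geq 1/\poly(n)$, and a SWAP test between two copies would distinguish the state from Haar-random with inverse-polynomial bias, contradicting pseudorandomness. This yields $\omega(\log n)$ for free, with no statistical analysis of $S$ whatsoever, and the paper reads ``$\Theta(\poly\log n)$'' as exactly this sandwich: $\omega(\log n)$ from the SWAP test against $O(\log|S|) = O(\poly\log n)$ from the rank. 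You instead read the $\Theta$ as demanding the tight bound $\Omega(\log|S|)$ and set out to prove it directly; the paper only attempts that stronger statement later (the tunable-entropy theorem), per fixed cut, and precisely via the $4$-wise-independent-phase purity computation you sketch.

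Within your stronger program there is a genuine gap, and it sits exactly where you park it: the transfer from a truly random permutation to the concrete PRP. Your per-cut calculations are sound (the diagonal reduction when $\mathsf{Y}$-restrictions are distinct, the Jensen step, the $4$-wise phase-averaged purity), and even the union bound over cuts is likely salvageable for a \emph{truly random} subset, since collision counts of $2^{\poly\log n}$ balls admit concentration with failure probability far below $2^{-n}$. But step (b) cannot be completed from the stated assumptions: PRP security guarantees only poly-query indistinguishability, and, as you yourself observe, an $|S|^{-\epsilon}$ anomaly in restriction-collision statistics is detectable only with quasipolynomially many queries --- so a perfectly secure PRP family may exhibit it, and nothing forces its ``pairwise statistics'' to match a random permutation's. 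Your proposed bootstrap (``each collision event depends on $O(1)$ values of $p$'') quietly assumes a statistical property that the PRP definition does not supply, so it would fail. This is precisely why the paper routes its lower bound through an efficient distinguisher rather than through statistics of $S$: the SWAP test is itself a poly-time algorithm, so low entanglement would break the already-established computational security no matter how pathological the pseudorandom subset is. For the pseudorandom instantiation, only that computational $\omega(\log n)$ bound is available --- and it is all the corollary, as the paper intends it, requires.
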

\begin{proof}
    The upper bound follows from Corollary \ref{first corollary} and the lower bound follows from the SWAP test, as described in Appendix \ref{SWAPtest}.
\end{proof}

This shows that when $|{S}| = 2^{\text{poly}\log n}$, we get optimally low pseudoentanglement across every cut, no matter what the spatial geometry is. We will now see a way of tuning the entanglement entropy by varying the size of the subset.

\subsection{Tuning the entanglement entropy of the random subset phase state construction}

By varying the size of the subset, we can tune the entanglement entropy of our random subset phase state construction. However, since the SWAP test lower bound of $\Omega(\log n)$ is no longer tight for these cases, we need a different way of proving a tight lower bound. For that, we will consider a very specific form of the pseudorandom phase function. 

This is what we will motivate and discuss in the next parts. Before that discussion, just for convenience of analysis, we will define a pseudorandom matrix.

\subsubsection{Pseudorandom matrices for subset phase states}
Let a given subset be ${S}$, a subset state $|\psi_{f, S}\rangle$, and a given partition $(\mathsf{X}, \mathsf{Y})$, where $|\mathsf{X}| = m$ and $| \mathsf{Y}| = n - m$. Let us write the reduced density matrix across the partition $\mathsf{X}$. Let $|{S}| = 2^k$. 

\begin{align}
\rho_{\mathsf{X}} 
&= \frac1{2^k} \left(\sum_{i \in \{ 0, 1\}^{m}, j \in \{ 0, 1\}^{n-m}, ij \in {S} }~   \sum_{k \in \{ 0, 1\}^{m}, l \in \{ 0, 1\}^{n-m}, kl \in {S} } (-1)^{f(i, j)+f(k, l)} \Tr_2( \ket{i} \ket{j} \bra{k}\bra{l} )\right) \\
& = \frac1{2^k} \left(\sum_{i, k \in \{ 0, 1\}^{m}, j \in \{ 0, 1\}^{n-m}, ij \in {S}, kj \in {S} }   (-1)^{f(i, j)+f(k, j)}  \ket{i} \bra{k} \right) \\
\label{dropB1}
& = \frac1{2^k} \left(\sum_{i, k \in \{ 0, 1\}^{m}, j \in \{ 0, 1\}^{n-m}, ij \in {S}, kj \in {S}} B^{}_{i, j}B^{}_{k, j}\ket{i} \bra{k} \right) \\
\label{dropB2}
& = \frac1{2^k} \left(\sum_{i, k \in \{ 0, 1\}^{m}, j \in \{ 0, 1\}^{n-m}, ij \in {S}, kj \in {S} } B^{}_{i, j}B^{}_{k, j}\ket{i}\langle j| j \rangle \bra{k} \right) \\
\label{dropB3}
& = \frac1{2^k} \left(\sum_{i \in \{ 0, 1\}^{m} }  \sum_{j \in \{ 0, 1\}^{n-m}, ij \in {S} } B^{}_{i, j}\ket{i}\langle j| \right) \left(\sum_{j \in \{ 0, 1\}^{n-m} }  \sum_{k \in \{ 0, 1\}^{m}, kj \in {S} } B^{}_{k, j}\ket{j}\langle k| \right) \\
\label{section2.1-last-line}
&=\frac{1}{2^k} BB^{\mathsf{T}},
\end{align}
where we define a pseudorandom matrix $B_{\mathsf{X}:\mathsf{Y}, f}$ as follows.
\begin{equation}
\begin{aligned}
    B_{\mathsf{X}:\mathsf{Y}, f, i, j} &= f(i, j)~~~~~\text{when}~~ij \in {S}, ~i \in \{0, 1\}^m,~ j \in \{0, 1\}^{n-m} \\
    &=0~~~~~~~~~~~~\text{otherwise}.
    \end{aligned}
\end{equation}
When the context is clear, we drop the corresponding subscripts from our notation, which we have done in \eqref{dropB1}, \eqref{dropB2}, \eqref{dropB3}, and \eqref{section2.1-last-line}. 
\subsubsection{Tuning the entanglement entropy}
For tuning the entanglement entropy, we will consider  an ensemble of pseudorandom subset phase states $|\psi_{f, p}\rangle$, where $f$ is chosen as
\begin{equation}
\label{choosingf}
f(i):= h(q(i)),
\end{equation}
where $h$ is uniformly drawn from a $4$-wise independent function family 
\begin{equation}
H = \{h :[2^{n}] \to \{1, -1 \} \},
\end{equation}
and $q$ is uniformly drawn from $Q$ --- a quantum-secure pseudorandom permutation (PRP) family --- where
\begin{equation}
    Q = \{q :[2^{n}] \to [2^{n}] \}.
\end{equation}
Note that by a simple hybrid argument, $f$ is computationally indistinguishable from a truly random function: so, it is both pseudorandom and $4$-wise independent. This is proven in detail in the Appendix, in Section \ref{oracle indistinguishability}, \ref{security}, and \ref{security2}. We will now prove the following theorem.
\begin{theorem}
    Let $\omega(\log n) \leq k \leq n$ and let $|{S}|=2^k$. Consider a cut  $(\mathsf{X}, \mathsf{Y})$ of $n$ qubits, such that $|\mathsf{X}| + |\mathsf{Y}| = n$ and $|\mathsf{X}|, |\mathsf{Y}| \geq k$. Let the pseudorandom phase function satisfy \eqref{choosingf}. Then, with high probability over the choice of the state, 
    \begin{equation}
        \mathsf{S}(\rho_{\mathsf{X}:\mathsf{Y}}) = \Theta(k).
    \end{equation}
\end{theorem}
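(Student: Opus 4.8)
The plan is to sandwich $\mathsf{S}(\rho_{\mathsf{X}:\mathsf{Y}})$ between two matching bounds. The upper bound $\mathsf{S}(\rho_{\mathsf{X}:\mathsf{Y}}) \le k$ is immediate from Corollary~\ref{first corollary}, since $\rho_{\mathsf{X}}$ has rank at most $|S| = 2^k$. All the work is in the lower bound $\mathsf{S}(\rho_{\mathsf{X}:\mathsf{Y}}) = \Omega(k)$, and here I would pass to the order-$2$ (collision) R\'enyi entropy: since R\'enyi entropies are non-increasing in their order, $\mathsf{S}(\rho_{\mathsf{X}:\mathsf{Y}}) \ge -\log \Tr(\rho_{\mathsf{X}}^2)$. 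Thus it suffices to show $\Tr(\rho_{\mathsf{X}}^2) \le \poly(n)\cdot 2^{-k}$ with high probability, since taking $-\log$ then gives $k - O(\log n) = \Omega(k)$, using that $k = \omega(\log n)$.

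Starting from $\rho_{\mathsf{X}} = \tfrac{1}{2^k} B B^{\mathsf{T}}$ in \eqref{section2.1-last-line}, we have $\Tr(\rho_{\mathsf{X}}^2) = \tfrac{1}{2^{2k}}\|BB^{\mathsf{T}}\|_F^2 = \tfrac{1}{2^{2k}}\sum_{i,k'}\big(\sum_j B_{i,j}B_{k',j}\big)^2$, where $i,k'$ range over $\{0,1\}^{|\mathsf{X}|}$. Expanding the square produces terms $f(ij)f(k'j)f(ij')f(k'j')$ weighted by subset indicators, and the next step is to average over the phase. Here I would invoke the $4$-wise independence of $f=h\circ q$, which holds exactly over the choice of $h$ for \emph{any} fixed permutation $q$ (since $q$ is a bijection, distinct inputs map to distinct inputs on which $h$ is $4$-wise independent). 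Whenever the four arguments $ij, k'j, ij', k'j'$ are distinct $n$-bit strings the expectation vanishes; the only surviving ``collision'' configurations are $i=k'$ or $j=j'$, and a short bookkeeping calculation collapses them to
\begin{equation}
\E_f\big[\Tr((BB^{\mathsf{T}})^2)\big] = \sum_i r_i^2 + \sum_j c_j^2 - 2^k,
\end{equation}
where $r_i$ is the number of elements of $S$ whose $\mathsf{X}$-part equals $i$, and $c_j$ the number whose $\mathsf{Y}$-part equals $j$, so that $\sum_i r_i = \sum_j c_j = 2^k$.

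It remains to control the two collision sums, and this is where the spread of $S$ and the hypotheses $|\mathsf{X}|,|\mathsf{Y}|\ge k$ enter. For a (pseudo)random subset of size $2^k$, $\sum_i r_i^2 = 2^k + \sum_{a\ne b\in S}\mathbbm{1}[a_{\mathsf{X}} = b_{\mathsf{X}}]$ has expectation $2^k + O(2^{2k}/2^{|\mathsf{X}|})$, and symmetrically $\E[\sum_j c_j^2] = 2^k + O(2^{2k}/2^{|\mathsf{Y}|})$. The constraints $|\mathsf{X}|,|\mathsf{Y}|\ge k$ force $2^{2k}/2^{|\mathsf{X}|}\le 2^k$ and $2^{2k}/2^{|\mathsf{Y}|}\le 2^k$, so $\E[\sum_i r_i^2]+\E[\sum_j c_j^2] = O(2^k)$ and hence $\E_{S,f}[\Tr(\rho_{\mathsf{X}}^2)] = O(2^{-k})$. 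A single application of Markov's inequality over the joint choice of subset and phase then gives $\Tr(\rho_{\mathsf{X}}^2)\le \poly(n)\cdot 2^{-k}$ except with probability $1/\poly(n)$, which completes the lower bound and hence the theorem.

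The main obstacle I anticipate is not the phase averaging---the exact $4$-wise independence makes that a finite case analysis---but rather controlling the subset-dependent collision sums in the correct probabilistic sense. A worst-case subset of size $2^k$ can be packed into a single ``row'' (legal precisely because $|\mathsf{Y}|\ge k$), which would make $\rho_{\mathsf{X}}$ nearly pure and destroy the bound entirely; so the argument genuinely requires that $S$ be \emph{typical} (spread across rows and columns), and one must argue that the subset produced by the pseudorandom permutation inherits this spread with high probability. Reconciling this statistical requirement on $S$ with the fact that it is generated computationally is the delicate point, and is precisely why the construction pairs a statistical $4$-wise independent $h$ with the permutation rather than relying on a generic pseudorandom function.
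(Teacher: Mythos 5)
Your proof is correct and follows essentially the same route as the paper's: the upper bound via the rank-$|S|$ bound of Corollary~\ref{first corollary}, and the lower bound via the second R\'enyi entropy, i.e.\ bounding $\E\big[\|\tfrac{1}{2^k}BB^{\mathsf{T}}\|_F^2\big]$ using the exact $4$-wise independence of $f = h\circ q$ and concluding with Markov's inequality. Your exact row/column-collision identity $\sum_i r_i^2 + \sum_j c_j^2 - 2^k$ is just a cleaner reorganization of the paper's diagonal/off-diagonal term bounds (and makes explicit where $|\mathsf{X}|,|\mathsf{Y}|\geq k$ enters, which the paper handles only for the $n/2$ cut), and the statistical-versus-pseudorandom subset subtlety you flag at the end is glossed over at the same level in the paper's own proof.
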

\begin{proof}
    The upper bound follows from Corollary \ref{first corollary}. For the lower bound, we will use the inequality,
    \begin{equation}
    \label{second inequality}
    \mathsf{S}(\rho_{\mathsf{X}:\mathsf{Y}}) \geq -\log \left(\bigg|\bigg|\frac{1}{2^k} B_{\mathsf{X}:\mathsf{Y}}{B_{\mathsf{X}:\mathsf{Y}}}^{\mathsf{T}} \bigg|\bigg|_F\right),
    \end{equation}
    just as we did for Section \ref{subsection: high entropy}, where $B_{\mathsf{X}:\mathsf{Y}}$ is the pseudorandom matrix corresponding to the partition $(\mathsf{X}, \mathsf{Y})$. \eqref{second inequality} can be derived in the same way as \eqref{rank and entropy}, by  Jensen's inequality. Hence, it suffices to lower bound the quantity
    \begin{equation}
        \log \left(\bigg|\bigg|\frac{1}{2^k} B_{\mathsf{X}:\mathsf{Y}}{B_{\mathsf{X}:\mathsf{Y}}}^{\mathsf{T}} \bigg|\bigg|_F\right).
    \end{equation}
    %\sout{
    In this proof, for simplicity, we prove the statement for partitions of size $n/2$. Note that the same proof follows for any other partition, just by changing the dimensions of the matrix $B_{\mathsf{X}:\mathsf{Y}}$.
    %}
    
    Having fixed the partition, let us drop the subscripts from $B$, to avoid any redundant notational clutter. Note that, 
	\begin{align*}
		& \mathrm{E} \left [ \left \|\frac{1}{2^{k}} BB^{\mathsf{T}} \right \|^2_F \right] \\ &= \frac{1}{2^{2k}} \mathrm{E} \left [ \left \| BB^{\mathsf{T}} \right \|^{2}_F \right] \\
		& = \frac{1}{2^{2k}} \sum_{i=1}^{2^{k/2}}  \sum_{j=1}^{2^{n/2}} \mathrm{E} \left[\left ( \sum_{l=1}^{2^{n/2}}B_{il}\cdot B_{jl}\right)^2 \right] \\
		& = \frac{1}{2^{2k}} \sum_{i=1} ^{2^{n/2}}  \mathrm{E} \left[\left ( \sum_{l=1}^{2^{n/2}}B_{il}\cdot B_{il} \right)^2 \right] + \frac{1}{2^{2k}} \sum_{i \ne j, i, j = 1}^{2^{n/2}} \mathrm{E} \left[\left ( \sum_{l=1}^{2^{n/2}}B_{il}\cdot B_{jl}\right)^2 \right] \\
		& = \frac{1}{2^{2k}}  \sum_{i=1} ^{2^{n/2}}  \mathrm{E} \left[\left ( \sum_{l=1}^{2^{n/2}} B_{il} \right ) +  2 \left(\sum_{l \ne l', l, l' = 1}^{2^{n/2}} B_{il}\cdot B_{il'} \right) \right] +  \\
		& \quad \quad \frac{1}{2^{2k}} \sum_{i \ne j, i, j = 1}^{2^{n/2}} \mathrm{E} \left[\left ( \sum_{l=1}^{2^{n/2}}B_{il}\cdot B_{jl}\right)^2 \right] \\
		& \le \frac{1}{2^{2k}} \left ( 2^k +2^{n/2 + 1} \cdot 2^n \cdot \left ( \frac{2^k}{2^n} \right)^2 \right) + \frac{1}{2^{2k}} \sum_{i \ne j, i, j = 1}^{2^{n/2}} \sum_{l=1}^{2^{n/2}}\mathrm{E} \left[\left ( B_{il}\cdot B_{jl}\right)^2 \right] \\
		& \le \frac{1}{2^{k - 1}} + \frac{2^n}{2^{2k}} 2^{n/2} \frac{2^{2k}}{2^{2n}} \\
		& \le \frac{1}{2^{k/2 - 1}},
	\end{align*}
    where we have used the fact that because $f$ is $4$--wise independent, conditioned on any choice of $\mathsf{S}$ we have
    \begin{align*}
        \mathrm{E} \left[\left ( \sum_{l=1}^{2^{n/2}}B_{il}\cdot B_{jl}\right)^2 \right] & = \sum_{l=1}^{2^{n/2}}\mathrm{E} \left[\left ( B_{il}\cdot B_{jl}\right)^2 \right] \\
        \le 2^{n/2} \frac{2^{2k}}{2^{2n}}.
    \end{align*}
Finally, by the Markov's inequality, we have 
	\begin{equation}
	\Pr \left [ \left \|\frac{1}{2^{k}} BB^{\mathsf{T}} \right \|^2_F  > 2^{-k/4}\right] \le 2^{1 - k/2}.
	\end{equation}
	Therefore, \begin{equation}
	\Pr \left [ \left \|\frac{1}{2^{k}} BB^{\mathsf{T}} \right \|_F  > 2^{-k/8}\right] \le 2^{1 - k/2}.
	\end{equation}
 Hence, the proof follows.
\end{proof}

\subsection{Instantiating our constructions using low depth circuits}
\label{construction}
A natural question to ask is how we can explicitly  construct our two pseudorandom states. Note that pseudorandom functions, $4$-wise independent functions, and pseudorandom permutations can be instantiated using one-way functions \cite{WEGMAN1981265, Katz2014}. So, our pseudorandom states can also be instantiated using quantum-secure one-way functions. 

Moreover, we can instantiate our states using low-depth circuits. There are three components in our constructions, which we will instantiate one by one. 

\begin{itemize}
    \item Pseudorandom functions: Pseudorandom functions can be implemented in the complexity class $\mathsf{NC}^{1}$ (the class of $\mathcal{O}(\log n)$ depth circuits with bounded fan-in but potentially unbounded fan-out), by using the Naor-Reingold construction \cite{Naor1999}. The construction is secure assuming the existence of pseudorandom synthesizers, which can again be constructed in $\mathsf{NC}^{1}$ assuming the existence of primitives like weak pseudorandom functions or trapdoor one-way permutations. From \cite{Zha15}, the quantum security of this construction depends on the quantum security of the underlying primitives. So, from \cite{Naor1999} and \cite{Zha15}, quantum-secure Naor-Reingold pseudorandom functions exist assuming the existence of quantum-secure trapdoor one-way permutations or quantum-secure weak pseudorandom functions.
    
    There are other efficient short depth constructions of quantum-secure pseudorandom functions. For instance, the pseudorandom function given in \cite{BPR}, based on the hardness of Learning With Errors (LWE), can be compiled in $\mathsf{NC}^{1}$. In \cite{Zha15}, Zhandry gives an explicit proof of quantum security of this function.  This is another function we could potentially use, instead of the Naor-Reingold based construction.
    \item Pseudorandom permutations: We use the $4$-round Luby-Rackoff construction of pseudorandom permutations \cite{Luby1988}. This was proven to be quantum-secure in \cite{Luby-Rackoff_secure}. For the round function in the Luby-Rackoff construction, we use one of the $\mathsf{NC}^{1}$ implementable pseudorandom functions from either \cite{Naor1999} or \cite{BPR_2012}. Hence, the security of this construction can also be argued from the security of the primitives underlying the pseudorandom function we use in its construction. Since there are only four rounds, the most depth-intensive part of the Luby-Rackoff construction is the pseudorandom round-function for each round: consequently, the Luby-Rackoff construction can be implementable in $\mathsf{NC}^{1}$ using a pseudorandom function that can be implementable in $\mathsf{NC}^{1}$.
    
    \item $4$-wise independent function: We could use a $4$-wise independent hash function, of appropriate domain and range, of the type in \cite{WEGMAN1981265}, based on polynomials. Since addition and multiplication modulo a prime can be done in $\mathsf{NC}^{1}$, the function is implementable in $\mathsf{NC}^{1}$. 
    
    We could also instantiate 4-wise independent functions using BCH codes \cite{Bose1960}, which are implementable in $\mathsf{NC}^{1}$, as matrix multiplication and other standard linear algebra techniques are in $\mathsf{NC}^{1}$. 
\end{itemize}

\noindent Note that to compile these primitives by log-depth quantum circuits, the fact that these primitives are computable by  $\mathsf{NC}^{1}$ circuits is not enough. This is because any quantum gate set usually has bounded fan-out gates, and compiling log-depth circuits with potentially unbounded fan-out with that gate set can blow up the depth to linear. 

In more formal terms, it is not clear whether these primitives can be compiled by $\mathsf{QNC}^{1}$ circuits -- these are the quantum analogues of $\mathsf{NC}^{1}$ circuits, with the only difference being that they do not have unbounded fan-out gates. However, they can be implemented in $\mathsf{QNC}^{1}_\mathsf{f}$: these are $\mathcal{O}(\log n)$ depth quantum circuits with bounded fan-in gates and special ``quantum fan-out" gates. Hence, our pseudorandom states are implementable in $\mathsf{QNC}^{1}_\mathsf{f}$.

%%%%%%%%%%%%%%%%%%%%%%%%%%%%%%%%%%%%%%%%%%%%%%%%%%%%%%%%%%%%%%%%%%%%%%%%%%%%%%%%%%%%%%%%%%%%%%%%%%%%%%%%%%%%%%%%%

\section{Applications}
In this section, we will describe the applications of our construction. 

\subsection{Low entropy pseudorandom states imply inefficient entropy distillation protocols}
\label{app:distillation}
In this section, we will discuss connections between our pseudorandom state constructions and entanglement distillation.  

Consider $m$ copies of an uknown $d$-dimensional quantum state $\ket{\psi}$. Consider a bipartition $(A, B)$ of the qubits in $\ket{\psi}$. Let $\rho^{A}$ and $\rho^{B}$ be the reduced density matrix across each bipartition, and, to avoid clutter of notation, let $\mathsf{S}(\rho) = \mathsf{S}(\rho^{A}) = \mathsf{S}(\rho^{B})$ be the von Neumann entropy across each bipartition. Then we know, due to previous results:
\begin{lemma}\cite{hayashi, harrow}
\label{protocol1}
Given an unknown $\ket{\psi}^{\otimes m}$, there is an $\mathsf{LOCC}$ protocol, which runs in $\text{poly}(n)$ time, to get at least $p$ EPR pairs, where
\begin{equation}
\label{distillation}
    p \geq m \left(\mathsf{S}(\rho) - \eta(\delta) - \delta \log d \right) - \frac{1}{2} d (d + 1) \log (m + d),
\end{equation}
with probability at least
\begin{equation}
    1 - \mathsf{exp}\left( \frac{-n \delta^2}{2} \right)\left(n + d \right)^{d (d+1)/2},
\end{equation}
where $\eta(\cdot)$ is the binary entropy function.
\end{lemma}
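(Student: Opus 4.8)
The plan is to follow the representation-theoretic entanglement concentration protocol of \cite{hayashi, harrow}, built on the Schur transform and Schur--Weyl duality; since the lemma is quoted from these works, the goal is to recall the structure of that argument. First I would Schmidt-decompose $\ket{\psi} = \sum_{i} \sqrt{r_i}\ket{i}_A\ket{i}_B$, so that $\mathrm{spec}(\rho) = \{r_i\}_{i=1}^d$ and $\mathsf{S}(\rho) = H(\{r_i\})$, the Shannon entropy of the spectrum. The key structural tool is the decomposition $(\C^d)^{\otimes m} = \bigoplus_{\lambda \vdash m} \mathcal{Q}_\lambda \otimes \mathcal{P}_\lambda$, where $\lambda$ ranges over Young diagrams with at most $d$ rows, $\mathcal{Q}_\lambda$ is the corresponding irrep of the unitary group, and $\mathcal{P}_\lambda$ is the multiplicity (Specht) space carrying the irrep of the symmetric group $S_m$.

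The protocol then proceeds as follows. Alice and Bob each apply the Schur transform to their $m$ subsystems and measure the Young-diagram label $\lambda$. Because $\ket{\psi}^{\otimes m}$ has identical Schmidt coefficients on the two sides, the two measurements are perfectly correlated---Alice and Bob obtain the same $\lambda$---and each measurement is local, so the whole operation is $\mathsf{LOCC}$. Conditioned on the outcome $\lambda$, the residual state on the pair of multiplicity registers $\mathcal{P}_\lambda \otimes \mathcal{P}_\lambda$ is maximally entangled of dimension $\dim\mathcal{P}_\lambda$, while the $\mathcal{Q}_\lambda$ registers carry only the ``shape'' information and are discarded. Applying a local isometry that identifies $\mathcal{P}_\lambda$ with $(\C^2)^{\otimes \lfloor \log\dim\mathcal{P}_\lambda\rfloor}$ therefore distills $\lfloor\log\dim\mathcal{P}_\lambda\rfloor$ EPR pairs, all via $\poly(n)$-time implementations of the Schur transform.

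It remains to lower bound $\log\dim\mathcal{P}_\lambda$ on a high-probability event. Writing $\bar\lambda = \lambda/m$ for the normalized shape, I would invoke the standard Specht-module dimension estimate $\log\dim\mathcal{P}_\lambda \geq m\,H(\bar\lambda) - \tfrac12 d(d+1)\log(m+d)$, which produces the additive $\tfrac12 d(d+1)\log(m+d)$ loss in the statement. Next, the Keyl--Werner spectrum-estimation bound guarantees that $\bar\lambda$ concentrates on $\mathrm{spec}(\rho)$: the probability that $\bar\lambda$ deviates from the true spectrum by more than $\delta$ is at most $(n+d)^{d(d+1)/2}\,e^{-n\delta^2/2}$, where the prefactor bounds the number of admissible Young diagrams, giving exactly the claimed failure probability. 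Finally, a Fannes-type continuity estimate converts $\ell_1$-closeness of $\bar\lambda$ to $\mathrm{spec}(\rho)$ into entropy-closeness, $H(\bar\lambda) \geq \mathsf{S}(\rho) - \eta(\delta) - \delta\log d$; combining these bounds on the good event yields $\log\dim\mathcal{P}_\lambda \geq m(\mathsf{S}(\rho) - \eta(\delta) - \delta\log d) - \tfrac12 d(d+1)\log(m+d) = p$.

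The part requiring the most care is the claim that, conditioned on $\lambda$, the post-measurement state is \emph{exactly} maximally entangled on the multiplicity registers: this uses that $\rho^{\otimes m}$ commutes with the $S_m$-action, so within a fixed irrep permutation invariance forces the reduced state on $\mathcal{P}_\lambda$ to be proportional to the identity, whence the shared state there is maximally entangled. The remaining ingredients---the Specht-module dimension/entropy estimate and the Keyl--Werner large-deviation bound with its explicit polynomial prefactor---are bookkeeping that can be imported directly from \cite{hayashi, harrow}.
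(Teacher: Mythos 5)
Your proposal is correct and matches the paper's treatment: the paper does not reprove this lemma but cites \cite{hayashi, harrow} and notes only that the protocol is ``apply the Schur transform to $\ket{\psi}^{\otimes m}$ and measure,'' with efficiency from \cite{Krovi_2019}, which is exactly the Schur--Weyl concentration argument you reconstruct (correlated Young-diagram measurements, maximally entangled multiplicity registers via Schur's lemma, the Specht-dimension bound giving the $\tfrac12 d(d+1)\log(m+d)$ loss, and the Keyl--Werner/Fannes estimates giving the spectrum and entropy terms). No gaps; your reconstruction is faithful to the cited proof.
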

The protocol involves applying a Schur transform to $\ket{\psi}^{\otimes m}$ and then measuring in the standard basis. Note that the Schur transform can be efficiently implemented in $\text{poly}(n, \log d)$ time, using \cite{Krovi_2019}, up to inverse exponential precision. 
A trivial upper bound to $p$ is $\mathsf{S}(\rho)$ --- one cannot distill more EPR pairs than the amount of distillable entanglement entropy present, which, for pure states, is equal to the von Neumann entropy $\mathsf{S}(\rho)$ \cite{horodecki}. However, distilling all of the distillable entanglement entropy is non-trivial and the upper bound could potentially be very loose.

Note that when $d = 2^n$, and $m = \text{poly}(n)$, the RHS in \eqref{distillation} is negative, for any value of $\mathsf{S}(\rho)$. Hence, the lower bound on $p$ is vacuous as $p \geq 0$. Tighter lower bounds to $p$ are not known. So, the distillation protocol could essentially terminate without generating a single EPR pair.

We will sketch an argument that for any efficient distillation protocol, working with polynomially many copies of $\ket{\psi}$, the lower bounds on $p$ are unlikely to be too tight. At a high level, our sketch would show that assuming a cryptographic conjecture, no efficient entanglement distillation protocol, working with polynomially many copies of an unknown quantum state, can guarantee a distillation of more than polylogarithmically many EPR pairs.

\begin{proposition}
For an unknown quantum state $\ket{\psi}^{\otimes m}$ with $m = \poly(n)$, $d = 2^n$, and for a bipartition where each side has $\Omega(n)$ qubits, there is no efficient distillation protocol such that the number of EPR pairs produced $p = \omega(\poly\log \mathsf{S}(\rho))$ with non-negligible probability, assuming the existence of quantum secure one-way functions.
\end{proposition}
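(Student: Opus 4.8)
The plan is to prove this by contraposition, using the pseudoentanglement construction established above as the engine of a cryptographic reduction. Concretely, I would instantiate two ensembles of pseudorandom states, both on $n$ qubits and both efficiently preparable from a quantum-secure one-way function: a \emph{high-entanglement} ensemble $\mathcal{E}_{\mathrm{high}}$ with entanglement entropy $\mathsf{S}(\rho)=\Theta(n)$ across every cut (the tuned construction of Section~\ref{main construction} with $k=\Theta(n)$), and a \emph{low-entanglement} ensemble $\mathcal{E}_{\mathrm{low}}$ with entanglement entropy $g(n)=\omega(\log n)$ across every cut (taking $|{S}|=2^{g(n)}$, with the upper bound furnished by Corollary~\ref{first corollary}). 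By the security analysis of Section~\ref{main construction}, both ensembles are computationally indistinguishable from the Haar measure, hence from each other, even given $m=\poly(n)$ copies.

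Suppose toward a contradiction that some efficient protocol $\mathcal{P}$, run on $\ket{\psi}^{\otimes m}$ drawn from $\mathcal{E}_{\mathrm{high}}$, outputs more than $T\coloneqq m\cdot g(n)$ certified EPR pairs with non-negligible probability. The first key step is the entanglement-monotonicity bound: since $\mathcal{P}$ is an $\mathsf{LOCC}$ protocol and entanglement cannot increase under $\mathsf{LOCC}$, the number of genuine EPR pairs it can produce across the cut $(\mathsf{X},\mathsf{Y})$ is at most the entanglement entropy of the input across that cut, namely $m\cdot\mathsf{S}(\rho)$ by additivity over tensor copies. For $\mathcal{E}_{\mathrm{low}}$ this is a \emph{hard} cap holding with probability one: regardless of internal randomness, $\mathcal{P}$ cannot produce more than $m\cdot g(n)=T$ genuine EPR pairs.

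The second step converts $\mathcal{P}$ into a distinguisher. Given an unknown ensemble, run $\mathcal{P}$ and then apply an efficient EPR-verification test to the claimed output register --- e.g. have the two parties measure each purported pair in a randomly chosen one of two conjugate bases and check the expected correlations --- accepting iff more than $T$ pairs pass. On $\mathcal{E}_{\mathrm{high}}$ the test accepts with non-negligible probability by assumption; on $\mathcal{E}_{\mathrm{low}}$ the monotonicity cap permits at most $T$ genuine pairs, so the verifier rejects except with negligible probability. This is an efficient distinguisher with non-negligible advantage between two $\mathsf{PRS}$ ensembles, contradicting their computational indistinguishability. I therefore conclude that no efficient $\mathcal{P}$ can certifiably extract more than $m\cdot g(n)$ EPR pairs from the high-entanglement states, even though the information-theoretic optimum is $m\cdot\Theta(n)$; since $g(n)$ may be chosen arbitrarily close to $\log n$, efficient distillation falls short of optimal by an essentially maximal factor.

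The main obstacle I anticipate is the EPR-certification step, which must be simultaneously (i) efficiently implementable given only copies of the state, (ii) \emph{sound}, so that a low-entanglement output carrying only $T$ ebits cannot fool the verifier into accepting more than $T$ pairs --- here one leans on the fact that each certified pair contributes a full ebit together with a fidelity/entanglement-witness argument --- and (iii) \emph{complete} on honest high-entanglement outputs. Additional care is needed to turn the probability-one monotonicity cap for $\mathcal{E}_{\mathrm{low}}$ into a negligible acceptance probability for the verifier, and to confirm that the entire distinguisher, including the test, runs in $\poly(n)$ time.
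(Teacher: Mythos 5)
Your proposal takes essentially the same route as the paper's own (very brief) proof: assume the contrapositive, instantiate the high-entanglement and tunable low-entanglement pseudorandom ensembles from the main construction, cap the output of any efficient $\mathsf{LOCC}$ protocol on the low ensemble via entanglement monotonicity, and turn the protocol's EPR-pair count into an efficient distinguisher, contradicting computational indistinguishability. You are in fact more careful than the paper, whose two-sentence sketch compares pair counts on Haar-random versus low-entropy states ``just by looking at the number of EPR pairs'' and silently drops the factor of $m$: your explicit threshold $T=m\cdot g(n)$ and the verification subroutine are honest about the multiplicity of copies, at the price of establishing a cap of $m\cdot g(n)$ rather than the polylogarithmic cap the proposition literally asserts (the paper's stated bound only follows if one ignores that $m$ copies of a state with per-copy entropy $g(n)$ carry $m\,g(n)$ ebits across the cut).

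One step is wrong as written, though repairable inside your own framework. Entanglement entropy is an $\mathsf{LOCC}$ monotone only \emph{in expectation}, so your ``hard cap holding with probability one'' fails: the Procrustean method converts a single partially entangled pure state into a perfect EPR pair with nonzero probability, exceeding its entropy. Expected monotonicity alone gives only
\begin{equation}
\Pr\left[\text{more than } T' \text{ genuine pairs}\right]\ \leq\ \frac{m\,g(n)}{T'},
\end{equation}
which is inverse polynomial rather than negligible, so the soundness half of your distinguisher would not go through as argued. The fix is to replace entropy by the Schmidt rank across the cut, which cannot increase under $\mathsf{LOCC}$ even probabilistically (in any measurement branch): the subset phase states have Schmidt rank at most $2^{g(n)}$ per copy --- this rank bound, not merely the entropy bound, is what Corollary~\ref{first corollary} actually provides --- so $m$ copies have Schmidt rank at most $2^{m\,g(n)}$, and producing more than $m\,g(n)$ \emph{exact} EPR pairs has probability zero. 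Approximate pairs that merely pass your conjugate-basis test are then handled by the fidelity slack you already anticipate: a state of Schmidt rank $2^{T}$ has fidelity at most $2^{-(T'-T)/2}$ with $T'$ perfect EPR pairs, so setting the acceptance threshold at, say, $2T$ makes the low-ensemble acceptance probability negligible, since $T=m\cdot\omega(\log n)$.
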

\begin{proof}
Assume the contrapositive. Choose a bipartition of size $n/2$ \footnote{A similar argument works for any bipartition where each side has size $\Omega(n)$.}. For a Haar random state, $\mathsf{S}(\rho) = \Theta(n)$ \cite{Page_1993}. 
Consequently, we can distill between  $\omega(\text{poly}\log n)$ to $\mathcal{O}(n)$ EPR pairs from this state. However, we can distill between $\omega(\log(\log n))$  to $\mathcal{O}(\text{poly}\log n)$ EPR pairs from our low entropy pseudorandom state in Section \ref{construction: tunable PRS} (where the entropy is taken to be $\text{poly}\log n$ across the chosen bipartition.) So, just by looking at the number of EPR pairs, we can distinguish between these two states, which breaks the pseudo-entanglement proof. 
\end{proof}

\subsection{Applications to property testing: An overview}

To motivate our results, consider the two following tasks.
\begin{task}
\label{applications: task 1}
Efficiently estimate the largest $t$ eigenvalues of an $n$ qubit mixed state $\rho \in \mathbb{C}^{2^n \times 2^n}$ to $\epsilon = \frac{1}{2^{\mathcal{O}(\text{poly}\log n)}}$ in additive error, starting from $\rho^{\otimes m}$. 
\end{task}

\begin{task}
\label{applications: task 2}
Efficiently estimate whether the Schmidt rank of an $n$ qubit pure state $\ket{\psi}$ is at most $2^{\mathcal{O}(\text{poly}\log n)}$, across an equipartition of qubits, starting from $\ket{\psi}^{\otimes m}$.
\end{task}
\noindent Note that for Task \ref{applications: task 1}, when $t = \omega (\text{poly}(n))$, by a Holevo bound, $m = \omega (\text{poly}(n))$. For $t = \mathcal{O}(\text{poly}(n))$, there could potentially be algorithms for which $m$ is polynomially bounded. However, from the collision bound from quantum query complexity \cite{Aaronson2004}, 
\begin{equation}
\label{two states}
m = 2^{\Omega(\text{poly} \log n)/3},
\end{equation}
for both Task \ref{applications: task 1} and Task \ref{applications: task 2}. Here is the proof sketch. Consider two states,
\begin{equation}
    |\psi_f \rangle = \frac{1}{\sqrt{2^n}} \sum_{x \in \{0, 1\}^n} \ket{x} \ket{f(x)},
\end{equation}
\begin{equation}
    |\psi_g \rangle = \frac{1}{\sqrt{2^n}} \sum_{x \in \{0, 1\}^n} \ket{x} \ket{g(x)},
\end{equation}
where $f$ is a random $1$-to-$1$ function, and $g$ is a random $2^{n - \text{poly}\log n}$-to-$1$ function. Then, if $m$ does not satisfy \eqref{two states}, this violates the quantum collision lower bound from query complexity between a $1$-to-$1$ and a $2^{n - \text{poly}\log n}$-to-$1$ function \cite{Aaronson2004}. However, even though the proof holds, note that neither $|\psi_f \rangle$ or $ |\psi_g \rangle$ has a polynomial sized circuit description. 

Our pseudorandom constructions allow us to boost the lower bound in \eqref{two states} to states that have an efficient description \footnote{Although not explicitly studied, previous pseudoentangled state constructions \cite{gheorghiu2020estimating} also imply such lower bounds, but for inverse exponentially small $\epsilon$ or exponentially large Schmidt rank. So, our lower bounds are stronger.}.

\subsubsection{Lower bound on eigenvalue estimation for efficiently preparable states}
\begin{task}
\label{applications: task 3}
Efficiently estimate the largest $t$ eigenvalues of an $n$ qubit mixed state $\rho \in \mathbb{C}^{2^n \times 2^n}$ to $\epsilon = \frac{1}{2^{\mathcal{O}(\text{poly}\log n)}}$ in additive error starting from $\rho^{\otimes m}$ with high probability, where it is promised that $\rho$ has a polynomial sized circuit description \footnote{The circuit is allowed to have trace-out gates.}. 
\end{task}

\begin{lemma}
    For Task \ref{applications: task 3} with $\epsilon = \frac{1}{2^{\mathcal{O}(\poly\log n)}}$ and $t = \mathcal{O}(\poly(n))$, assuming the existence of quantum secure one-way functions $$m = \omega \left(\poly(n)\right).$$
\end{lemma}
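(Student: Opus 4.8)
The plan is to prove this lower bound by a reduction from our pseudorandom subset phase state construction, exploiting the fact that such states have low Schmidt rank (and hence few large eigenvalues in their reduced density matrices) while being computationally indistinguishable from Haar-random states, which have exponentially many nontrivial eigenvalues. First I would instantiate two state ensembles on $n$ qubits: the ``low-entropy'' ensemble given by our pseudorandom subset phase states $\ket{\psi_{f,p}}$ of \eqref{main state} with subset size $|S| = 2^k$ for $k = \Theta(\poly\log n)$, and the ``high-entropy'' ensemble given by genuine pseudorandom states of maximal entanglement (equivalently, by Theorem~\ref{thmref:SPS-security} and the security analysis, states indistinguishable from Haar-random). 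By Corollary~\ref{first corollary}, the reduced density matrix $\rho_{\mathsf{X}}$ of a subset phase state across an equipartition has rank at most $|S| = 2^{\Theta(\poly\log n)}$, so at most $\poly\log n$-many of its eigenvalues can be nonzero. In contrast, for a Haar-random state the reduced density matrix has $\Theta(2^{n})$ nonzero eigenvalues, each of size $\Theta(2^{-n})$, which is well above the additive error tolerance $\epsilon = 2^{-\mathcal{O}(\poly\log n)}$.

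The key steps, in order, are as follows. Suppose toward a contradiction that some algorithm $\mathcal{A}$ solves Task~\ref{applications: task 3} using $m = \poly(n)$ copies, estimating the largest $t = \mathcal{O}(\poly(n))$ eigenvalues of $\rho = \rho_{\mathsf{X}}$ to additive error $\epsilon = 2^{-\mathcal{O}(\poly\log n)}$ with high probability. I would run $\mathcal{A}$ on $\rho_{\mathsf{X}}^{\otimes m}$ obtained by preparing $m$ copies of the input $n$-qubit pure state and tracing out the $\mathsf{Y}$ register in each copy (permissible since the circuit may contain trace-out gates, and both ensembles are efficiently preparable given the secret key). When the input is drawn from the low-entropy subset phase ensemble, the returned eigenvalue estimates must, with high probability, exhibit the signature of a rank-$2^{\Theta(\poly\log n)}$ matrix: since only $\poly\log n$ eigenvalues are nonzero, and the additive precision $\epsilon$ is small enough to resolve the true eigenvalue magnitudes (which are at least $2^{-\Theta(\poly\log n)}$ on the support), the estimates can certify that many of the top $t$ eigenvalues are effectively zero. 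When the input is drawn from the high-entropy (Haar-indistinguishable) ensemble, each of the top $t$ eigenvalues is $\Theta(2^{-n})$, far below $\epsilon$ on an absolute scale but crucially distinguishable from the low-entropy case because in that case the nonzero eigenvalues are bunched into a set of size $\poly\log n$ with individually large values. Comparing the eigenvalue profiles thus yields an efficient distinguisher between the two ensembles.

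The final step is to derive the contradiction: the low-entropy ensemble is pseudorandom by the security analysis (Theorem~\ref{thmref:SPS-security} together with the computational indistinguishability Proposition), and hence computationally indistinguishable from the high-entropy ensemble; yet the hypothesized $\mathcal{A}$, combined with post-processing of the eigenvalue estimates, distinguishes them with non-negligible advantage in $\poly(n)$ time. This contradicts the assumed existence of quantum-secure one-way functions (which underlie the PRF/PRP primitives instantiating our construction), forcing $m = \omega(\poly(n))$.

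I expect the main obstacle to be making the distinguishing criterion on the eigenvalue estimates robust and rigorous: I must choose the subset size $k = \Theta(\poly\log n)$ so that the two ensembles' top-$t$ eigenvalue profiles are separated by a gap strictly exceeding $2\epsilon$, and argue that the estimation error, the high-probability success guarantee of $\mathcal{A}$, and the negligible trace-distance error from Theorem~\ref{thmref:SPS-security} all compose without swamping this gap. A subtlety is that Haar-random reduced density matrices have eigenvalues of size $\Theta(2^{-n}) \ll \epsilon$, so the distinguisher cannot simply read off individual large eigenvalues from the high-entropy side; rather it must detect the presence of many $2^{-\Theta(\poly\log n)}$-sized eigenvalues on the low-entropy side, which requires $t$ to be large enough to include them and $\epsilon$ small enough to resolve them—precisely the parameter regime stated in the task. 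Handling this asymmetry carefully, and confirming that tracing out one half of an efficiently preparable pure state keeps the reduced state within the promised circuit class, are where the care is needed.
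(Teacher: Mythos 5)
Your proposal is correct and takes essentially the same route as the paper: the paper's proof is a one-line reduction observing that an eigenvalue estimator with these parameters would distinguish the efficiently preparable high-entropy pseudorandom ensemble \eqref{high entropy state construction} from the low-entropy one \eqref{low entropy construction}, contradicting the fact that both are pseudorandom (hence mutually indistinguishable) under quantum-secure one-way functions; your instantiation of the low-entropy side with the subset phase states of \eqref{main state} is an equivalent choice.

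Two small corrections to your quantitative bookkeeping, neither of which breaks the argument. First, rank at most $|S| = 2^{\Theta(\poly\log n)}$ gives \emph{quasipolynomially} many nonzero eigenvalues, not ``$\poly\log n$-many''; what you actually use is that the largest eigenvalue is at least $1/\mathrm{rank}(\rho_{\mathsf{X}}) \geq 2^{-\Theta(\poly\log n)} \gg \epsilon$, so the top-eigenvalue estimate alone already separates the ensembles. Second, across an equipartition the Haar-like reduced density matrix lives in dimension $2^{n/2}$, so it has $\Theta(2^{n/2})$ eigenvalues of size $\Theta(2^{-n/2})$, not $2^{n}$ of size $2^{-n}$. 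More substantively, note that because Task \ref{applications: task 3} promises a polynomial-sized circuit description, the high-entropy side of your reduction cannot be ``Haar-random states'' and its small top eigenvalues cannot be inferred from computational indistinguishability from Haar --- indeed the entire paper is about pseudorandom states whose spectra differ wildly from Haar. You must instead invoke the \emph{statistical} guarantee of the explicit high-entropy construction (the $4$-wise-independence Frobenius-norm bound of Section \ref{subsection: high entropy}, which gives $\|\rho\|_F \leq 2^{-n/8}$ and hence top eigenvalue $2^{-\Omega(n)}$ with high probability over the key); your phrase ``equivalently, \ldots states indistinguishable from Haar-random'' conflates these and should be dropped, after which your argument matches the paper's.
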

\begin{proof}
Follows from the security of the construction in \eqref{high entropy state construction} and \eqref{low entropy construction}. If we can perform Task \ref{applications: task 3} with $\mathcal{O}\left( \poly(n) \right)$ copies, it means we can distinguish our high-entropy pseudorandom state \eqref{high entropy state construction} from our low-entropy pseudorandom state \eqref{low entropy construction}.
\end{proof}

\begin{remark}
Note that an upper bound for Task \ref{applications: task 3} is given in \cite{tomography}. They show
\begin{equation}
    m = \mathcal{O}\left(t^2/\epsilon^2 \right).
\end{equation}
For $\epsilon = \frac{1}{2^{\mathcal{O}(\text{poly}\log n)}}$ and $t = \poly(n)$, 
\begin{equation}
    m = 2^{\mathcal{O}(\poly\log n)}.
\end{equation}
\end{remark}

\subsubsection{Lower bound on estimating the Schmidt rank for efficiently preparable states}
\begin{task}
\label{schmidt rank}
Efficiently estimate whether an $n$ qubit pure state $\ket{\psi}$ has Schmidt rank at most $2^{\mathcal{O}(\poly\log n)}$, across an equipartition of qubits, starting from $\ket{\psi}^{\otimes m}$ with high probability, where it is promised $\ket{\psi}$ has a polynomial sized circuit description.
\end{task}
\begin{lemma}
For Task \ref{schmidt rank}, assuming the existence of quantum secure one way functions, $m = \omega \left(\poly(n)\right)$.
\end{lemma}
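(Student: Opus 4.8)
The plan is to reuse the reduction from the preceding eigenvalue-estimation lemma, now keyed to the Schmidt rank rather than the eigenvalue spectrum. The essential constraint is that the promise in Task~\ref{schmidt rank} forces the input $\ket{\psi}$ to carry a polynomial-sized circuit description, so I cannot feed in a Haar-random state directly; instead I would instantiate the two hard inputs using our two \emph{efficiently preparable} pseudorandom ensembles. Concretely, take the low-entropy subset phase state of Section~\ref{main construction} with $|S| = 2^{\poly\log n}$ as the ``small rank'' input, and the high-entropy pseudorandom state of Section~\ref{subsection: high entropy} (the $4$-wise-independent composed with a PRP, with full support) as the ``large rank'' input. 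Both are pseudorandom states, which is exactly the leverage I will exploit.

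First I would pin down where each ensemble sits relative to the threshold $2^{\mathcal{O}(\poly\log n)}$ across the equipartition. By Corollary~\ref{first corollary} the low-entropy state has Schmidt rank at most $|S| = 2^{\poly\log n}$, so it is a YES instance. For the high-entropy state, the entanglement entropy across the equipartition is $\Theta(n)$ with high probability over the key, and since the von Neumann entropy of a cut is at most the logarithm of the Schmidt rank, its rank is $2^{\Theta(n)}$, which strictly exceeds $2^{\mathcal{O}(\poly\log n)}$ for all large $n$; hence it is a NO instance. Next I would assume for contradiction that some efficient tester $\mathcal{A}$ solves Task~\ref{schmidt rank} with $m = \poly(n)$ copies and high success probability. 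Running $\mathcal{A}$ on either $\ket{\psi_{\mathrm{low}}}^{\otimes m}$ or $\ket{\psi_{\mathrm{high}}}^{\otimes m}$ then returns, with high probability, the answers YES and NO respectively, so $\mathcal{A}$ is a polynomial-time quantum algorithm that distinguishes the two ensembles with non-negligible advantage using only polynomially many copies.

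To finish I would close the loop through pseudorandomness: both ensembles are pseudorandom states, hence each is computationally indistinguishable from the Haar measure on $\poly(n)$ copies, and so by the triangle inequality they are computationally indistinguishable from one another under the assumed quantum-secure one-way function. This contradicts the existence of $\mathcal{A}$ and forces $m = \omega(\poly(n))$. I expect the main obstacle to be not the reduction itself but making the two rank estimates robust \emph{simultaneously}: I must verify that the high-entropy ensemble lies strictly above the threshold and the low-entropy ensemble strictly below it with overwhelming probability over the choice of key, so that the tester's ``high probability'' guarantee and the ``with high probability over the state'' entropy bounds combine into a genuine non-negligible distinguishing advantage rather than cancelling out. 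Once this gap is quantified, the contradiction is immediate.
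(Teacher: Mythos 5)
Your proof is correct and follows essentially the same route as the paper: both arguments pit an efficiently preparable low-Schmidt-rank pseudorandom ensemble against the high-entropy construction of Section~\ref{subsection: high entropy}, note that each is indistinguishable from Haar (hence from the other, by the triangle inequality), and conclude that a $\poly(n)$-copy tester would break pseudorandomness. The only difference is cosmetic --- you instantiate the low-rank side with the subset phase states of Section~\ref{main construction} while the paper cites the tunable-entropy matrix construction \eqref{low entropy construction}, and your derivation of the $2^{\Omega(n)}$ rank lower bound via $\mathsf{S}(\rho)\leq\log\mathrm{rank}(\rho)$ is a slightly cleaner version of the paper's footnote.
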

\begin{proof}
Follows from the security of the construction in \eqref{high entropy state construction} and \eqref{low entropy construction}. If we can efficiently determine whether a state has Schmidt rank at most $r$ for $r = 2^{\mathcal{O}(\text{poly} \log n)}$, with polynomially many copies of $\ket{\psi}$, then we can use that algorithm to distinguish our high-entropy pseudorandom state \eqref{high entropy state construction}, which has Schmidt rank $2^{\Omega(n)}$, from a tunable-entropy pseudorandom state \eqref{low entropy construction} , which has Schmidt rank $2^{\mathcal{O}({r})}$ \footnote{Although our entropy calculations are based on von Neumann entropy, we implicitly also have upper and lower bounds for another entropy measure -- the logarithm of the Schmidt rank, which, by virtue of our construction, is just the logarithm of the rank of the high entropy matrix $A$ in \eqref{high entropy matrix} or the tunable entropy matrix $B$ in \eqref{lowentropymatrix}. From there, we can get corresponding bounds on the Schmidt rank of our constructed states.}.
\end{proof}

\begin{remark}
Note that an upper bound for Task \ref{schmidt rank}, of $m = \mathcal{O}(r)$, is proven in \cite{childs2007weak}. So, when $r$ is superpolynomially large, $m$ is also superpolynomially bounded.
\end{remark}

%========================================================

\subsection{Improved lower bound for testing matrix product states}\label{sec:mps}

Note that pseudorandom states have interesting connections to the learnability of matrix product states, as discussed in the Appendix \ref{matrix product states}. Using pseudoentanglement, we can make these connections much stronger. Specifically, optimally quasi--area law pseudoentanglement means an improved lower bound to the number of copies required to test a matrix product state. 

We show that it is difficult to test if a state is an $n$--qubit MPS with bond dimension $r$, or far from such a state, using fewer than $\Omega(\sqrt{r})$ copies of the state, in either information-theoretic or computational settings. 

\subsubsection{Previous work}

This problem was previously studied in a very recent work by \cite{soleimanifar2022testing}, who proved a lower bound of $\Omega(\sqrt{n})$. The lower bound on \cite{soleimanifar2022testing} had no dependence on $r$. The authors also prove an upper bound of $\mathcal{O}(n r^2)$: so, their bounds are significantly loose when $r$ is at least superpolynomially large.

\subsubsection{Definitions}

First, we will introduce some definitions.

\begin{definition}[Matrix product state, {\cite[Definition 1]{soleimanifar2022testing}}]\label{MPS-defn}
A quantum state $\ket{\psi}\in\C^{d_1}\otimes\cdots\otimes\C^{d_n}$ consisting of $n$ qudits is a matrix product state with bond dimension $r$ if it can be written as
\begin{align}
\ket{\psi_{1,\ldots,n}}=\sum_{i_1\in[d_1],\ldots,i_n\in[d_n]}\Tr[A_{i_1}^{(1)}\cdots A_{i_n}^{(n)}]\cdot\ket{i_1\cdots i_n},
\end{align}
where each matrix $A_j^{(i)}$ is an $r\times r$ complex matrix, for $i\in[n]$ and $j\in[d_i]$. We write $\MPS_n(r)$ for the set of such states, or more simply $\MPS(r)$ when the dependency on $n$ is clear from the context.

Further, for any state $\ket{\phi}\in\C^{d_1}\otimes\cdots\otimes\C^{d_n}$, the distance of $\ket{\phi}$ to the set $\MPS(r)$ is defined as
\begin{align}\label{eqn:Dist_r-defn}
\Dist_r(\ket{\phi})=\min_{\ket{\psi}\in\MPS(r)}\sqrt{1-|\left<\psi|\phi\right>|^2}.
\end{align}
\end{definition}

\noindent \cite{soleimanifar2022testing} also introduced the concept of $\MPS(r)$ tester.

\begin{definition}[$\MPS(r)$ tester]\label{defn:MPS-tester}
An algorithm $\mathcal{A}$ is a property tester for $\MPS(r)$ using $m=m(n,r,\delta)$ copies if, given $\delta>0$ and $m$ copies of $\ket{\psi}\in\C^{d_1}\otimes\cdots\otimes\C^{d_n}$, it acts as follows.
\begin{itemize}
\item (Completeness) If $\ket{\psi}\in\MPS(r)$, then
\begin{align}
\Pr[\mathcal{A}\text{ accepts given }\ket{\psi}^{\otimes m}]\geq\frac{2}{3}.
\end{align}
\item (Soundness) If $\Dist_r(\ket{\psi})\geq\delta$, then 
\begin{align}
\Pr[\mathcal{A}\text{ accepts given $\ket{\psi}^{\otimes m}$}]\leq\frac{1}{3}.
\end{align}
\end{itemize}
\end{definition}

\subsubsection{Our results}

Following the language of Definition~\ref{defn:MPS-tester}, \cite{soleimanifar2022testing} showed that an $\MPS(r)$ tester using $m=O(nr^2/\delta^2)$ copies of the unknown state $\ket{\psi}$ can be constructed, while any $\MPS(r)$ tester must use at least $\Omega(n^{1/2}/\delta^2)$ copies of $\ket{\psi}$. 

In this work, we show that this lower bound can be improved to order $\Omega(\sqrt{r})$, which may scale exponentially in terms of $n$. In particular, we prove the following theorem.

\begin{theorem}\label{thm:MPS-testing-lower}
\label{theoremMPS}
Following the language of Definition~\ref{defn:MPS-tester}, for any $r\leq 2^{n/8}$ and $\delta\leq\frac{1}{\sqrt{2}}$, testing whether a state $\ket{\psi}\in\C^{\otimes n}$, is in $\MPS(r)$ requires $\Omega\big(\sqrt{r}\big)$ copies of $\ket{\psi}$.
\end{theorem}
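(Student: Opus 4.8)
The plan is to prove the $\Omega(\sqrt{r})$ lower bound by an indistinguishability argument between two ensembles: a ``yes'' ensemble consisting of random subset phase states $\ket{\psi_{f,S}}$ with subset size $|S| = r$, and a ``no'' ensemble consisting of Haar-random states $\ket{\phi}\leftarrow\mathscr{H}(\C^N)$. I will argue that every state in the yes ensemble lies in $\MPS(r)$, that a typical state in the no ensemble is $\delta$-far from $\MPS(r)$, and yet that $m$ copies of the two ensembles are statistically $O(m^2/r)$-close by Theorem~\ref{thmref:SPS-security}. Any valid $\MPS(r)$ tester must accept the yes ensemble with probability $\geq 2/3$ (completeness) and the no ensemble with probability $\leq 1/3$ (soundness), hence distinguishes the two mixtures with advantage $\geq 1/3 - o(1)$; since distinguishing advantage is bounded by trace distance, this forces $m^2 = \Omega(r)$.

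First I would handle the yes ensemble. A subset phase state with $|S| = r$ is a superposition of $r$ computational basis strings, each a product state, so its Schmidt rank across every contiguous cut is at most $r$ (this is exactly Corollary~\ref{first corollary}). A pure state whose Schmidt rank across every contiguous bipartition is at most $r$ admits an exact bond-dimension-$r$ MPS representation even in the trace form of Definition~\ref{MPS-defn} (take diagonal site tensors selecting each of the $r$ strings), so $\ket{\psi_{f,S}} \in \MPS(r)$ for every $f$ and every $S$ with $|S| = r$; thus completeness applies to every state in the yes ensemble.

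The main obstacle is the no ensemble: I must show that with probability $1 - o(1)$ a Haar-random $\ket{\phi}$ satisfies $\Dist_r(\ket{\phi}) \geq \delta$. Fixing the equipartition cut and noting that any $\ket{\psi}\in\MPS(r)$ has Schmidt rank at most $r$ across it, the overlap obeys $|\langle\psi|\phi\rangle|^2 \leq \sum_{i=1}^r \lambda_i$, where $\lambda_1 \geq \lambda_2 \geq \cdots$ are the squared Schmidt coefficients of $\ket{\phi}$, so $\Dist_r(\ket{\phi})^2 \geq 1 - \sum_{i=1}^r \lambda_i$. It therefore suffices to show that the top-$r$ Schmidt weight of a Haar-random state is at most $1 - \delta^2$, and the hardest case $\delta = 1/\sqrt2$ requires this to be $\leq 1/2$. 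This is precisely where $r \leq 2^{n/8}$ enters: the reduced density matrix of a Haar state on $n/2$ qubits has spectrum governed by the Marchenko--Pastur law, with each eigenvalue of order $2^{-n/2}$ and top edge $O(2^{-n/2})$, so the sum of the largest $r$ eigenvalues is $O(r\cdot 2^{-n/2}) = O(2^{-3n/8}) \ll 1/2$. A concentration estimate on this linear spectral statistic then upgrades this to a high-probability guarantee; alternatively one can use a net/dimension-counting argument, since $\MPS(r)$ is a manifold of dimension $O(nr^2) \ll 2^n$ for $r\le 2^{n/8}$.

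Finally I would assemble the pieces. By Theorem~\ref{thmref:SPS-security} with $t = m$ and $K = r$, the $m$-copy mixtures $\rho_{\mathrm{yes}} = \E_{S,f}[\ket{\psi_{f,S}}\bra{\psi_{f,S}}^{\otimes m}]$ and $\rho_{\mathrm{Haar}} = \E_{\ket{\phi}}[\ket{\phi}\bra{\phi}^{\otimes m}]$ satisfy $\TD(\rho_{\mathrm{yes}}, \rho_{\mathrm{Haar}}) = O(m^2/r)$. A tester's acceptance is a single POVM element, so its expected advantage in separating the two mixtures is at most this trace distance. Combining with the completeness/soundness gap of $1/3 - o(1)$ from the first two steps gives $1/3 - o(1) \leq O(m^2/r)$, i.e. $m = \Omega(\sqrt r)$. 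The only substantial work is the random-matrix concentration bound of the third step; the remaining steps follow directly from Theorem~\ref{thmref:SPS-security}, Corollary~\ref{first corollary}, and standard facts about MPS and Schmidt decompositions.
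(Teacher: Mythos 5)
Your proposal is correct, and its skeleton is the same as the paper's: subset phase states with $|S|=r$ as the $\MPS(r)$ ensemble, Theorem~\ref{thmref:SPS-security} with $K=r$ to bound the $m$-copy trace distance to Haar by $O(m^2/r)$, and a Young--Eckart/top-$r$-Schmidt-weight argument for farness, assembled via the standard advantage-versus-trace-distance inequality. The one genuine divergence is the ``no'' ensemble. You take Haar states themselves and certify farness via spectral concentration of the induced reduced density matrix (Marchenko--Pastur top edge $O(2^{-n/2})$, hence top-$r$ weight $O(2^{-3n/8})$), or alternatively a net over the $O(nr^2)$-dimensional set $\MPS(r)$; both are valid but import random-matrix or measure-concentration machinery. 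The paper instead uses random \emph{binary phase states} $\mathcal{E}_{\phase}$ (subset $=\{0,1\}^n$) as the far ensemble: the elementary second-moment computation in \eqn{phase-state-lower-bound} (only $4$-wise independence of the phases) plus Markov gives $\|\rho_{\mathsf{A}:\mathsf{B}}\|_F\leq 2^{-n/4}$ with high probability, Cauchy--Schwarz bounds the top-$r$ Schmidt weight by $\sqrt{r\cdot 2^{-n/4}}\leq 1/2$ (this is where $r\leq 2^{n/8}$ enters, exactly as in your sketch), and Lemma~\ref{lem:Young-Eckart} converts this to $\Dist_r\geq 1/\sqrt{2}$; it then conditions on the far event (the ensemble $\mathcal{E}'_{\phase}$) and applies Theorem~\ref{thmref:SPS-security} a second time with $K=2^n$, plus a triangle inequality, to relate $\mathcal{E}'_{\phase}$ to Haar and hence to $\mathcal{E}_r$ --- your bookkeeping, which absorbs the $o(1)$ failure probability directly into the soundness bound, is an equivalent way to handle the conditioning. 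Two remarks on what each route buys. First, you could make your route just as elementary as the paper's: $\E_{\mathrm{Haar}}[\Tr\rho_{\mathsf{A}}^2]\approx 2^{1-n/2}$ by a short symmetric-subspace computation, then Markov and Cauchy--Schwarz as above, with no Wishart or Levy input needed. Second, the paper's choice of phase states is not merely aesthetic: that ensemble can be instantiated with a PRF, which is what allows the concluding Remark to carry the same $\Omega(\sqrt{r})$ bound into the computational setting --- an extension unavailable with a literal Haar ensemble as the soundness case. Your remaining claims all check out: diagonal site tensors realize any subset phase state exactly in the trace form of Definition~\ref{MPS-defn}, matching the paper's use of Fact~\ref{fact:subset-to-MPS}.
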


Before proving the theorem, we first states some useful results.

\begin{fact}[\cite{Vidal_2004}]\label{fact:subset-to-MPS}
For any state $\ket{\psi}\in\C^{\otimes n}$ and any partition of the state into two parts $A$ and $B$, we denote
\begin{align}
\chi_A(\ket{\psi})\coloneqq\rank(\rho_A),\qquad\rho_A(\ket{\psi})\coloneqq\Tr_B(\ket{\psi}\bra{\psi})
\end{align}
and
\begin{align}
\chi(\ket{\psi})\coloneqq\max_A \chi_A(\ket{\psi}),
\end{align}
where the maximum is taken over all possible partitions. Then,
\begin{align}
\ket{\psi}\in\MPS(\chi(\ket{\psi})).
\end{align}
\end{fact}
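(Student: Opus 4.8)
The plan is to prove this by the standard successive Schmidt decomposition (Vidal's canonical form), using only the $n-1$ \emph{contiguous} cuts that separate the first $k$ sites from the remaining $n-k$ sites. Write $\chi_k\coloneqq\chi_{A_k}(\ket{\psi})$ for the Schmidt rank of $\ket{\psi}$ across the bipartition $A_k=\{1,\ldots,k\}$ versus $B_k=\{k+1,\ldots,n\}$, and set $\chi\coloneqq\max_{1\le k\le n-1}\chi_k$. Since each $A_k$ is one admissible partition, $\chi\le\chi(\ket{\psi})$; and because an MPS of bond dimension $\chi$ is trivially also one of any larger bond dimension (pad the matrices with zero rows and columns so $\MPS(\chi)\subseteq\MPS(\chi(\ket{\psi}))$), it suffices to exhibit an MPS representation of $\ket{\psi}$ with bond dimension exactly $\chi$.

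First I would run the iterative decomposition. Schmidt-decompose $\ket{\psi}$ across the first cut as $\ket{\psi}=\sum_{\alpha_1=1}^{\chi_1}\lambda^{[1]}_{\alpha_1}\ket{\Phi^{[1]}_{\alpha_1}}\ket{\Phi^{[2\ldots n]}_{\alpha_1}}$, expand each left factor in the computational basis as $\ket{\Phi^{[1]}_{\alpha_1}}=\sum_{i_1}\Gamma^{[1]i_1}_{\alpha_1}\ket{i_1}$, and then repeat: at step $k$ one rewrites the right-hand factors in terms of $\ket{i_k}$ and re-Schmidt-decomposes the remaining sites across cut $k$, producing tensors $\Gamma^{[k]i_k}_{\alpha_{k-1}\alpha_k}$ and Schmidt coefficients $\lambda^{[k]}_{\alpha_k}$ with $\alpha_k$ ranging over $1,\ldots,\chi_k$. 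Collecting terms yields the canonical form
\[
\ket{\psi}=\sum_{i_1,\ldots,i_n}\ \sum_{\alpha_1,\ldots,\alpha_{n-1}}\Gamma^{[1]i_1}_{\alpha_1}\lambda^{[1]}_{\alpha_1}\Gamma^{[2]i_2}_{\alpha_1\alpha_2}\lambda^{[2]}_{\alpha_2}\cdots\lambda^{[n-1]}_{\alpha_{n-1}}\Gamma^{[n]i_n}_{\alpha_{n-1}}\ \ket{i_1\cdots i_n}.
\]
The whole point of using Schmidt decompositions is that the bond index $\alpha_k$ ranges over exactly the $\chi_k$ nonzero Schmidt values of cut $k$, so every bond has internal dimension at most $\chi$.

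Next I would repackage this into the trace form of Definition~\ref{MPS-defn} by absorbing the Schmidt coefficients into the $\Gamma$ tensors. For interior sites define $(A^{(k)}_{i_k})_{\alpha_{k-1}\alpha_k}=\Gamma^{[k]i_k}_{\alpha_{k-1}\alpha_k}\lambda^{[k]}_{\alpha_k}$, and on the boundaries embed the dimension-one bonds at index $1$ by setting $(A^{(1)}_{i_1})_{\alpha_0\alpha_1}=\delta_{\alpha_0,1}\,\Gamma^{[1]i_1}_{\alpha_1}\lambda^{[1]}_{\alpha_1}$ and $(A^{(n)}_{i_n})_{\alpha_{n-1}\alpha_0}=\delta_{\alpha_0,1}\,\Gamma^{[n]i_n}_{\alpha_{n-1}}$, padding every matrix to size $\chi\times\chi$ with zeros. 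Expanding $\Tr[A^{(1)}_{i_1}\cdots A^{(n)}_{i_n}]=\sum_{\alpha_0,\ldots,\alpha_{n-1}}(A^{(1)}_{i_1})_{\alpha_0\alpha_1}\cdots(A^{(n)}_{i_n})_{\alpha_{n-1}\alpha_0}$, the two Kronecker deltas force $\alpha_0=1$ and collapse the cyclic contraction to precisely the canonical sum displayed above, so $\Tr[A^{(1)}_{i_1}\cdots A^{(n)}_{i_n}]$ equals the amplitude of $\ket{i_1\cdots i_n}$. This exhibits $\ket{\psi}\in\MPS(\chi)\subseteq\MPS(\chi(\ket{\psi}))$, as desired.

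The main obstacle — really the only subtlety — is the boundary bookkeeping: the trace form presumes uniform $\chi\times\chi$ matrices and a single cyclic contraction, whereas an open chain has trivial (dimension-one) bonds beyond the first and last sites. The embedding via $\delta_{\alpha_0,1}$ reconciles these, and one must double-check that the collapsed trace reproduces the genuine open-boundary contraction rather than introducing a spurious extra sum over $\alpha_0$. Everything else is routine: verifying that the successive Schmidt decompositions terminate and that the dimension of bond $k$ is exactly $\chi_k\le\chi$.
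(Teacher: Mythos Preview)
Your proof is correct and is exactly the standard successive Schmidt decomposition argument from Vidal's original paper. The present paper does not supply its own proof of this fact---it is stated as a cited result from \cite{Vidal_2004}---so there is nothing further to compare against.
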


\begin{lemma}[Young-Eckart Theorem, \cite{eckart1936approximation}]\label{lem:Young-Eckart}
Consider a bipartite state $\ket{\psi}\in\C^{d_1}\otimes\C^{d_2}$ with $d_1\geq d_2$ and let
\begin{align}
\ket{\psi}=\sum_{i=1}^{d_2}\sqrt{\lambda_i}\ket{a_i}\ket{b_i}
\end{align}
be its Schmidt decomposition, where $\lambda_1\geq\cdots\geq\lambda_{d_2}$. Then,
\begin{align}
\Dist_r(\ket{\psi})=\sqrt{1-\sum_{i=1}^r\lambda_i}.
\end{align}
\end{lemma}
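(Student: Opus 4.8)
The plan is to reduce the statement to a variational fact about the reduced density matrix $\rho_A=\Tr_B(\ket{\psi}\bra{\psi})$, whose eigenvalues are exactly $\lambda_1\geq\cdots\geq\lambda_{d_2}$ (with $\lambda_i=0$ for $i>d_2$). For a bipartite cut the set $\MPS(r)$ is precisely the set of states of Schmidt rank at most $r$, so by \eqn{Dist_r-defn} it suffices to compute
\[
F_r:=\max_{\ket{\chi}}\,|\langle\chi|\psi\rangle|^2,
\]
where $\ket{\chi}\in\C^{d_1}\otimes\C^{d_2}$ ranges over unit vectors of Schmidt rank at most $r$; the claim is then that $F_r=\sum_{i=1}^r\lambda_i$, whence $\Dist_r(\ket{\psi})=\sqrt{1-F_r}$ follows immediately.

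First I would dispatch the lower bound $F_r\geq\sum_{i=1}^r\lambda_i$ by exhibiting an explicit competitor, namely the (renormalized) truncation of the Schmidt decomposition,
\[
\ket{\chi_\star}=\frac{1}{\sqrt{\sum_{i=1}^r\lambda_i}}\sum_{i=1}^r\sqrt{\lambda_i}\,\ket{a_i}\ket{b_i},
\]
which has Schmidt rank at most $r$ and, by orthonormality of the Schmidt vectors, satisfies $|\langle\chi_\star|\psi\rangle|^2=\sum_{i=1}^r\lambda_i$.

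The substance is the matching upper bound $F_r\leq\sum_{i=1}^r\lambda_i$. Given any unit $\ket{\chi}$ of Schmidt rank at most $r$, let $P$ be the orthogonal projector onto the span of its left Schmidt vectors, so $\rank(P)\leq r$ and $(P\otimes I)\ket{\chi}=\ket{\chi}$. Since $P$ is Hermitian, $\langle\chi|\psi\rangle=\langle\chi|(P\otimes I)|\psi\rangle$, and Cauchy--Schwarz gives $|\langle\chi|\psi\rangle|\leq\|(P\otimes I)\ket{\psi}\|$, hence $|\langle\chi|\psi\rangle|^2\leq\langle\psi|(P\otimes I)|\psi\rangle=\Tr(P\rho_A)$. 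It then remains to maximize $\Tr(P\rho_A)$ over projectors of rank at most $r$. Diagonalizing $\rho_A=\sum_j\lambda_j\ket{a_j}\bra{a_j}$ and setting $p_j=\langle a_j|P|a_j\rangle\in[0,1]$ with $\sum_j p_j=\Tr(P)\leq r$, one gets $\Tr(P\rho_A)=\sum_j\lambda_j p_j$, a linear objective maximized over the polytope $\{0\leq p_j\leq 1,\ \sum_j p_j\leq r\}$ by placing unit weight on the $r$ largest eigenvalues; this is the Ky Fan maximum principle and yields $\Tr(P\rho_A)\leq\sum_{i=1}^r\lambda_i$.

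Combining the two directions gives $F_r=\sum_{i=1}^r\lambda_i$, and therefore $\Dist_r(\ket{\psi})=\sqrt{1-\sum_{i=1}^r\lambda_i}$. I expect the only genuinely load-bearing step to be the upper bound, and within it the key move is the projector-plus-Cauchy--Schwarz reduction of the overlap to $\Tr(P\rho_A)$ on the first factor; the ensuing linear-programming (Ky Fan) argument is routine and could instead be cited. Two mild points to verify are that restricting the projector to the left factor loses nothing---any rank-$\leq r$ state is fixed by $P\otimes I$ for $P$ the projector onto its left Schmidt support---and that the formula degenerates correctly when $r\geq d_2$, where $F_r=1$ and $\Dist_r(\ket{\psi})=0$, consistent with $\ket{\psi}$ itself having Schmidt rank at most $d_2$.
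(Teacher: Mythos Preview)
The paper does not prove this lemma; it simply cites it as the classical Eckart--Young theorem \cite{eckart1936approximation} and uses it as a black box in the proof of Theorem~\ref{thm:MPS-testing-lower}. Your argument is a correct, self-contained proof: the truncated Schmidt state $\ket{\chi_\star}$ witnesses $F_r\geq\sum_{i=1}^r\lambda_i$, and for the upper bound the reduction $|\langle\chi|\psi\rangle|^2\leq\Tr(P\rho_A)$ via the rank-$r$ projector $P$ onto the left Schmidt support of $\ket{\chi}$ together with Cauchy--Schwarz, followed by the Ky Fan linear-programming bound $\Tr(P\rho_A)\leq\sum_{i=1}^r\lambda_i$, is clean and complete.

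One small remark: your opening identification of $\MPS(r)$ for a bipartite system with the set of Schmidt-rank-$\leq r$ states is the intended reading of the lemma (it is exactly the Eckart--Young statement about best rank-$r$ approximation), but strictly under the paper's periodic-boundary Definition~\ref{MPS-defn} a two-site MPS of bond dimension $r$ can have Schmidt rank up to $r^2$. This is a wrinkle in the paper's conventions rather than a flaw in your proof.
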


\noindent Equipped with Fact~\ref{fact:subset-to-MPS} and Lemma~\ref{lem:Young-Eckart}, we are now ready to prove Theorem~\ref{thm:MPS-testing-lower}.

\begin{proof}[Proof of Theorem~\ref{thm:MPS-testing-lower}]

Note that any $\MPS(r)$ tester defined in Definition~\ref{defn:MPS-tester} can distinguish with success probability at least $2/3$ between any two ensembles of quantum states, one only containing matrix product states with bond dimension at most $r$, the other only containing states whose distance to $\MPS(r)$ is at least $\frac{1}{\sqrt{2}}$. In this proof, we explicitly construct such two ensembles and demonstrate that any quantum algorithm having less than $O(\sqrt{r})$ copies of a state $\ket{\psi}$ cannot determine with success probability $2/3$ which of the two ensembles $\ket{\psi}$ is in, thus establishing an $\Omega(\sqrt{r})$ lowerbound for MPS testing.

We use $\mathcal{E}_r$ to denote the ensemble of subset phase states with subset size $r$ and random phase. Quantitatively,
\begin{align}
\mathcal{E}_r\coloneqq\big\{\ket{\psi_{f,S}}\,\big|\,|S|=r\big\},
\end{align}
where $\ket{\psi_{f,S}}$ is defined in \eqn{subset-phase}. Observe that for any partition of any subset phase state $\ket{\psi_{f,S}}$ into two parts $A$ and $B$, the rank of the corresponding reduced density matrix $\rho_A\coloneqq\Tr_B(\ket{\psi}\bra{\psi})$ is upper bounded by $r$. Then by Fact~\ref{fact:subset-to-MPS}, we have
\begin{align}
\mathcal{E}_r\subseteq\MPS(r).
\end{align}

Next, we construct an ensemble of quantum states that are far from $\MPS(r)$. Specifically, we consider the ensemble $\mathcal{E}_{\mathrm{phase}}$ consisting of phase states with random phases,
\begin{align}
\mathcal{E}_{\phase}\coloneqq\left\{\ket{\psi_f}\right\},\qquad\ket{\psi_f}=\frac{1}{2^n}\sum_x(-1)^{f(x)}\ket{x}.
\end{align}
For any cut $(\mathsf{A},\mathsf{B})$ of $n$ qubits such that $|\mathsf{A}|=|\mathsf{B}|=n/2$, by \eqn{phase-state-lower-bound} and Markov's inequality, we know that
\begin{align}
\Pr_{\ket{\psi_f}\leftarrow\mathcal{E}_{\phase}}\left[\|\rho_{\mathsf{A}:\mathsf{B}}\|_F\leq\frac{1}{2^{n/4}}\right]\geq 1-2^{-n/4}.
\end{align}
That is to say, if we uniformly randomly select a state $\ket{\psi_f}$ from $\mathcal{E}_{\phase}$, with probability at least $1-2^{-n/4}$, its Schmidt decomposition
\begin{align}
\ket{\psi_f}=\sum_{i=1}^{d_2}\sqrt{\lambda_i}\ket{a_i}\ket{b_i}
\end{align}
satisfies
\begin{align}
\sum_{i=1}^{d_2}\lambda_i^2\leq 2^{-n/4},
\end{align}
where $\lambda_1\geq\cdots\geq\lambda_{d_2}$. By Cauchy's ineuqality,
\begin{align}
\sum_{i=1}^r\lambda_i\leq\sqrt{r\cdot 2^{-n/4}}\leq 2^{-n/8}\leq\frac{1}{2}.
\end{align}
Then by Lemma~\ref{lem:Young-Eckart}, we have
\begin{align}\label{eqn:large-distance-high-probability}
\Pr_{\ket{\psi_f}\leftarrow\mathcal{E}_{\phase}}\left[\Dist_r(\ket{\psi_f})\geq\frac{1}{\sqrt{2}}\right]\geq 1-2^{-n/4}.
\end{align}
We define $\mathcal{E}_{\phase}'$ to be the set of phase states that are at least $1/\sqrt{2}$-far from $\MPS(r)$. In particular,
\begin{align}
\mathcal{E}_{\phase}'\coloneqq\left\{\ket{\psi_f}\Big|\Dist_r(\ket{\psi_f})\geq\frac{1}{\sqrt{2}}\right\}.
\end{align}
Based on \eqn{large-distance-high-probability}, we have
\begin{align}
\TD\left(\underset{\ket{\psi}\leftarrow\mathcal{E}_{\phase}'}{\E}\big[\ket{\psi}\bra{\psi}^{\otimes t}\big],\underset{\ket{\phi}\leftarrow\mathcal{E}_{\phase}}{\E}\big[\ket{\phi}\bra{\phi}^{\otimes t}\big]\right)\leq 2^{-n/4}
\end{align}
for any $t$. Further, since
\begin{align}
\mathsf{TD}\left(\underset{\ket{\psi}\leftarrow\mathcal{E}_{\phase}}{\E}\left[\ket{\psi}\bra{\psi}^{\otimes t}\right], \underset{\ket{\phi}\leftarrow\mathscr{H}(\C^N)}{\E}\left[\ket{\phi}\bra{\phi}^{\otimes t}\right]  \right)  < O\Bigg(\frac{t^2}{2^n}\Bigg),
\end{align}
we can derive that
\begin{align}
\mathsf{TD}\left(\underset{\ket{\psi}\leftarrow\mathcal{E}_{\phase}'}{\E}\left[\ket{\psi}\bra{\psi}^{\otimes t}\right], \underset{\ket{\phi}\leftarrow\mathscr{H}(\C^N)}{\E}\left[\ket{\phi}\bra{\phi}^{\otimes t}\right]  \right)  < O\left(\frac{t^2}{2^{n/4}}\right),
\end{align}
where $\mathscr{H}(\C^N)$ denotes the ensemble of Haar random states in the Hilbert space with dimension $N=2^n$. Moreover, by Theorem~\ref{thmref:SPS-security},
\begin{align}
    \mathsf{TD}\left(\underset{\ket{\psi}\leftarrow\mathcal{E}_r}{\E}\left[\ket{\psi}\bra{\psi}^{\otimes t}\right], \underset{\ket{\phi}\leftarrow\mathscr{H}(\C^N)}{\E}\left[\ket{\phi}\bra{\phi}^{\otimes t}\right]  \right)  < O\left(\frac{t^2}{r}\right),
\end{align}
which further leads to
\begin{align}
\mathsf{TD}\left(\underset{\ket{\psi}\leftarrow\mathcal{E}_r}{\E}\left[\ket{\psi}\bra{\psi}^{\otimes t}\right], \underset{\ket{\phi}\leftarrow\mathcal{E}_{\phase}'}{\E}\left[\ket{\phi}\bra{\phi}^{\otimes t}\right]  \right)  < O\left(\frac{t^2}{2^{n/4}}\right)+O\left(\frac{t^2}{r}\right)<O\left(\frac{t^2}{r}\right).
\end{align}
Hence, any quantum algorithm distinguishing between $\mathcal{E}_r$ and $\mathcal{E}_{\phase'}$ with $\Omega(1)$ success probability requires at least $t=\Omega(\sqrt{r})$ copies of the unknown state, which is also the lower bound for MPS testing.
\end{proof}

\begin{remark}
    By using appropriate security conjectures and appropriate cryptographic primitives to efficiently instantiate our pseudorandom functions --- for example, by one--way functions which are secure upto subexponential time against quantum adversaries \footnote{One way functions based on LWE are conjectured to have this property.} --- we can get the same lower bounds as in Theorem \ref{theoremMPS} in the computational setting, when the matrix product state under consideration is guaranteed to have an efficient description. 
\end{remark}
%========================================================

\subsection{Applications to quantum gravity}
\label{subsec:adscft}

Another application of our result is to quantum gravity.
The AdS/CFT correspondence \cite{maldacena1999large} is one of the leading candidates for a theory of quantum gravity.
It postulates a duality between a theory of quantum gravity in anti-de Sitter space (AdS) and simple quantum mechanical theory (namely, a conformal field theory (CFT)).
The AdS/CFT ``dictionary'' maps states in one theory to the other, and through this dictionary observables and states are mapped from one theory to the other.
In this way one can study properties of quantum gravity via studying a simpler quantum mechanical system.
This has led to a series of remarkable results connecting quantum gravity with topics in quantum information such as quantum error correction \cite{almheiri2015bulk}, quantum tensor networks~\cite{pastawski2015holographic}, the Eastin-Knill Theorem~\cite{faist2020continuous}, and quantum circuit complexity~\cite{susskind2016computational}.
There has even the suggestion that future quantum computers might shed light into quantum gravity \cite{brown2019quantum,nezami2021quantum}.

Recently, Bouland, Fefferman and Vazirani \cite{bouland2019computational} showed that the AdS/CFT dictionary might be exponentially complex to compute, even for a quantum computer.
This stands in sharp contrast to other dualities in computer science, such as LP and SDP duality, which are efficiently computable. 
Their argument used the fact that certain information about the geometry of the gravitational theory -- in particular information about the interior of a wormhole -- seems to be pseudorandomly scrambled in the quantum theory, in a manner analogous to a block cipher such as DES.
Therefore efficient computation of the dictionary to reconstruct wormhole interiors allows one to break certain forms of cryptography, which is not believed to be tractable for quantum computers.
Their result seems to challenge the quantum Extended Church-Turing thesis, i.e. the conjecture that all physical processes are efficiently simulable by a universal quantum computer.

The arguments of \cite{bouland2019computational} require the presence of a black hole. It is natural to ask if similar arguments might show the dictionary might be exponentially complex in more general gravitational geometries.
Indeed, Susskind \cite{susskind2020horizons} has suggested this might not be possible, i.e. that without black holes (and outside of the event horizon of black holes) the dictionary is easy to compute.

A potential starting point to investigate this issue is the more general connection between entanglement and geometry in AdS/CFT.
It is believed that in AdS/CFT, the entanglement entropy between certain regions of the quantum theory is directly proportional to certain geometrical quantities in the gravity theory (namely, the shortest geodesic between corresponding boundary points in the spacetime), up to small corrections, via the Ryu-Takanayagi formula \cite{ryu2006holographic}.
Therefore the entanglement entropy of CFT is directly connected to the geometry.

It is natural to ask if this connection between entanglement and geometry already implies the dictionary is exponentially hard to compute. 
If so this would provide complementary evidence to \cite{bouland2019computational} for the exponential complexity of the dictionary, and potentially remove the need for a black hole in those arguments.
This was precisely the suggestion of Hoban and Gheorghiu in their construction of what we call pseudoentanglement \cite{gheorghiu2020estimating}.
By showing that entanglement entropy is exponentially difficult to compute, even on average, they argued this was evidence for the exponential complexity of the dictionary.
Their work left open many future directions to further develop this argument.
Most salient is whether or not it is possible to create pseudoentanglement within the subset of holographic states, i.e. states for which the AdS/CFT dictionary is well-defined. 
Such states exhibit many atypical features, for example sub-volume law (but super-area law) entanglement, which are not properties of Hoban and Gheorghiu's construction.

Our result strengthens the case for this argument, as we show that it is possible to construct pseudorandomness or pseudoentanglement with subvolume law entanglement.
This is a necessary but not sufficient condition to construct pseudorandomness and pseudoentanglement within the domain of validity of AdS/CFT, and therefore paves the way to potentially constructing pseudoentanglement with holographic entanglement structures.
More speculatively, we believe the ``tunable'' nature of our construction might be useful for constructing pseudoentanglement with varying geometries, which might form the basis of a future challenge to the quantum ECT without the presence of black holes.
We leave further development of this argument to future work.

Finally, we note this line of argument is complementary to very recent work of Aaronson and Pollack \cite{aaronsonpollack22}, who showed that given as input a list of entropies of a CFT state obeying certain conditions, that there is an efficient algorithm to produce a bulk state with the corresponding entropies. In contrast, our work shows that it is difficult to produce the list of entanglement entropies given as input a quantum state. If our argument could be made holographic, this might show the difficulty of the dictionary stems from the ability of quantum states to hide their entanglement entropies. 

\section{Future directions}
We close with some natural future directions that are left open by this work.
\begin{enumerate}
    \item{A natural question left open in this work is to understand the importance of the random phases in our subset phase state construction.  That is, consider states of the form:
\begin{equation}
    |\psi_S\rangle = \frac{1}{|S|} \sum_{x \in S} \ket{x}
\end{equation}
Are these states pseudorandom and pseudoentangled if $S \in \{0, 1\}^n$ is a pseudo-randomly chosen subset of appropriate size? Note that this is similar to the construction in Section \ref{main construction} which we discussed in this paper without the pseudorandom phases. }
\item{Do other families of quantum states achieve tightly tuned pseudoentanglement across any cut?  We discuss two variants in Appendix \ref{1D area law} and \ref{2D area law}, where we prove a pseudo--area law scaling of entanglement. However, we only prove an upper bound on the entanglement entropy: It remains to see if our upper bound is tight.}
\item{Finally, are there further applications of pseudoentanglement to cryptography, complexity theory, and quantum computing?}
\end{enumerate}

\section{Acknowledgments}
We thank Jordan Docter, Tudor Giurgica-Tiron, Nick Hunter-Jones, and Wilson Nguyen for helpful discussions. 
B.F. and S.G. acknowledge support from AFOSR (FA9550-21-1-0008).
This material is based upon work partially
supported by the National Science Foundation under Grant CCF-2044923 (CAREER) and by the U.S. Department of Energy, Office of Science, National Quantum Information Science Research Centers (Q-NEXT).
This research was also supported in part by the National Science Foundation under Grant No. NSF PHY-1748958.  
A.B. and B.F. were supported in part by the DOE QuantISED grant DE-SC0020360.  
A.B. and C.Z. were supported in part by the AFOSR under grant FA9550-21-1-0392.
A.B. was supported in part by the U.S. DOE Office of Science under Award Number DE-SC0020266. U.V acknowledges the Vannevar Bush faculty fellowship N00014-17-1-3025, and was supported by DOE NQISRC QSA grant FP00010905,
QSA grant FP00010905, and NSF QLCI Grant No. 2016245.
Z.Z. was supported in part by a Stanford School of Engineering Fellowship.

\bibliographystyle{alpha}
%\bibliography{MasterBib.bib}
\newcommand{\etalchar}[1]{$^{#1}$}

\newpage 

\appendix
%\section{Appendix}

\section{Low entanglement across a fixed cut within the JLS phase state construction}
\label{app:singlecutjls}

In a previous version of this work, we presented a method to have a PRS with low entanglement across a fixed cut.  This can be done using the binary pseudorandom phase state construction, proposed by \cite{Ji2018} and proven to be pseudorandom by \cite{brakerski2019pseudo}. We will later generalize this to low entanglement with 1D pseudo-area law entanglement, and then finally describe a construction that has low entanglement across any cut, using a new idea based on subset states.

\subsection{Sketch of result}
Consider states of the form:
\[ \displaystyle\sum_{x\in\{0,1\}^n} (-1)^{f(x)} \ket{x}\]
where $f(x):\{0,1\}^n\rightarrow \{0,1\}$ is a pseudorandom function.
One might expect that typical PRFs would yield maximally entangled phase states.
Let us first assume this is the case (we will describe how to construct such a PRF shortly).
We will now describe how to reduce its entanglement entropy to any desired amount -- thus giving a PRS construction with low entanglement.

To do this, it turns out to be helpful for the analysis to view this PRF as defining a ``pseudorandom matrix''\footnote{That is, simply write out the $2^n$-length truth table of the function in the form of a $2^{n/2}$ by $2^{n/2}$ matrix.} $A_{i,j} \in \{\pm1\}^{2^{n/2}\times 2^{n/2}}$ where the indices run over $\{0,1\}^{n/2}$, so the resulting state is
\[ \displaystyle\sum_{i,j\in\{0,1\}^{n/2}} A_{i,j} \ket{i}\ket{j}=\sum_{i,j\in\{0,1\}^{n/2}}(-1)^{f(i,j)}\ket{i}\ket{j}\]
Now the reduced density matrix on the first half of the qubits is $\rho = \frac{1}{2^{n}} A A^T$, so we wish to minimize
$\mathsf{S}(\rho)= \mathsf{S}\left(\frac{1}{2^{n}} A A^T\right)$.

We consider augmenting the pseudorandom matrix $A$ with certain \emph{row operations} that have measurable effects on the entanglement entropy of the corresponding reduced density matrix $\rho$. In particular, imagine taking $A$, picking a small subset of the rows, and \emph{repeating} those rows many times (overwriting some of the rows of $A$ in the process) to create a matrix $A'$. 
One can easily see this would reduce the entanglement entropy of the resulting state.
The reason is that this reduces the rank of the matrix -- and hence the rank of $\rho\propto A A^T$.
While this reduces the entanglement, it is unclear if this preserves the pseudorandomness of the matrix.

To reduce the rank of $A$ while preserving pseudorandomness, we use a pseudorandom $2^{n/2-k}\rightarrow 1$ function $g:\{0,1\}^{n/2}\rightarrow \{0,1\}^{n/2}$ to ``select'' the subset of repeated rows.  More formally, we define $g=f_1(f_2(x) \mod 2^{k})$ where $f_1$ and $f_2$ are pseudorandom permutations.  It's not difficult to see that $g$ is $2^{n/2-k}$-to-$1$, and we prove that it is pseudorandom using a series of hybrid arguments, see Section \ref{subsection: low entropy}. 
Then we construct a new matrix $A'$ which uses $g$ to \emph{subsample} the rows of $A$ as follows: \[A'_{i,j} = A_{g(i),j}\]

As the pseudorandom function $g$ is $2^{n/2-k}\rightarrow 1$, the rank of $A'$ is at most $2^{k}$ -- so the entanglement entropy of the corresponding reduced density matrix $\rho$ is at most $k$. The problem then reduces to determining how small one can set $k$ in our $2^{n/2-k} \rightarrow 1$ function without destroying its post-quantum security.
Clearly the range of the function $g$ must be at least superpolynomial -- otherwise collisions would be abundant, and by just picking random rows of $A'$ and checking if they are equal\footnote{Say, by preparing phase states corresponding to rows of $A$ and using a SWAP test.}, one would find collisions with inverse polynomial probability. 
Surprisingly the value of $k$ can be set all the way to this limit (e.g. $k=\omega(\log n)$, which produces entanglement entropy $\omega(\log n)$).
This follows from the strength of the collision bound for distinguishing random functions with small range from PRPs \cite{zhandry2012construct, zhandry2013note}.
The proof turns out to be quite involved, putting together a number of different techniques due to Zhandry \cite{zhandry2012construct}.  The most subtle issue at play is that black-box security proofs used in the collision bound are transferable to the white-box settings of pseudorandom state constructions -- essentially because in pseudorandom states the access to the PRF is limited to access to copies of the phase state, which are preparable in a black-box manner.  We defer a full technical proof to Appendix \ref{app:singlecutjls}.

\section{Technical overview of the low entanglement JLS pseudorandom state construction}

For a function $f(i, j)$, from $\{0, 1\}^n$ to $\{0, 1\}$, consider the phase state given by
\begin{equation}
\label{construction1}
    \ket{\psi_f} = \frac{1}{\sqrt{2^{n}}} \sum_{i, j \in \{0, 1\}^{n/2}} (-1)^{f(i, j)} \ket{i, j}.
\end{equation}
It was proven in \cite{brakerski2019pseudo} that if $f(i, j)$ is a pseudorandom function, then the construction in \ref{construction1} is a pseudorandom quantum state. When discussing pseudorandom quantum states, we will often refer to \emph{pseudorandom matrices} which we define below.
\subsection{Pseudorandom matrices}
For a pseudorandom function $f(i, j)$, define a matrix $A^{(f)}$ such that
\begin{equation}
    A^{(f)}_{(i, j)} = (-1)^{f(i, j)},
\end{equation}
for all $i, j \in \{0, 1\}^{n/2}$. We call $A^{(f)}$ a \emph{pseudorandom matrix}. Often, for ease of notation, when the function is clear from the context, we will drop the superscript $f$ and just write $A$ to denote the corresponding matrix. Note that, by definition, a pseudorandom matrix is computationally indistinguishable from a truly random matrix \footnote{The entries of a truly random matrix come from a truly random function $r$.} when given black box query access to the entries of the matrix.  

The motivation behind defining pseudorandom matrices is that they help us neatly characterize the reduced density matrix across each partition in \eqref{construction1}. To illustrate, the reduced density matrix across the first $n/2$ qubits is

\begin{align}
\rho[1] &= \frac1{2^n} \left(\sum_{i \in \{ 0, 1\}^{n/2} } \sum_{j \in \{ 0, 1\}^{n/2} }   \sum_{k \in \{ 0, 1\}^{n/2} } \sum_{l \in \{ 0, 1\}^{n/2} } (-1)^{f(i, j)+f(k, l)} \Tr_2( \ket{i} \ket{j} \bra{k}\bra{l} )\right) \\
& = \frac1{2^n} \left(\sum_{i \in \{ 0, 1\}^{n/2} } \sum_{j \in \{ 0, 1\}^{n/2} }   \sum_{k \in \{ 0, 1\}^{n/2} } (-1)^{f(i, j)+f(k, j)}  \ket{i} \bra{k} \right) \\
& = \frac1{2^n} \left(\sum_{i \in \{ 0, 1\}^{n/2} } \sum_{j \in \{ 0, 1\}^{n/2} }   \sum_{k \in \{ 0, 1\}^{n/2} } A^{}_{i, j}A^{}_{k, j}\ket{i} \bra{k} \right) \\
& = \frac1{2^n} \left(\sum_{i \in \{ 0, 1\}^{n/2} } \sum_{j \in \{ 0, 1\}^{n/2} }   \sum_{k \in \{ 0, 1\}^{n/2} } A^{}_{i, j}A^{}_{k, j}\ket{i}\langle j| j \rangle \bra{k} \right) \\
& = \frac1{2^n} \left(\sum_{i \in \{ 0, 1\}^{n/2} }  \sum_{j \in \{ 0, 1\}^{n/2} } A^{}_{i, j}\ket{i}\langle j| \right) \left(\sum_{j \in \{ 0, 1\}^{n/2} }  \sum_{k \in \{ 0, 1\}^{n/2} } A^{}_{k, j}\ket{j}\langle k| \right) \\
\label{section2:1 last line}
&=\frac{1}{2^n} AA^{\mathsf{T}}.
\end{align}

\noindent We can do a similar calculation for $\rho[2]$, the reduced density matrix across the last $n/2$ qubits, to reach \eqref{section2:1 last line}. Hereon, we will let 
\begin{equation}
\label{definition rho}
\rho = \rho[1] = \rho[2].
\end{equation}

\subsubsection{Entanglement entropy and rank}
For a $2^{n/2} \times 2^{n/2}$ density matrix $\rho$, define the entanglement entropy to be
\begin{equation}
\begin{aligned}
    \mathsf{S}(\rho) &= -\Tr(\rho \log \rho).
\end{aligned}
\end{equation}
Let $r$ be the rank of $\rho$. It holds that
\begin{equation}
\label{entropy bounds}
     -\log ||\rho||_F  \le  -\log ||\rho|| \leq \mathsf{S}(\rho) \leq \log r,
\end{equation}
by the spectral decomposition of $\rho$ and an application of Jensen's inequality. The Frobenius norm of $\rho$, given by $||\rho||_F$, is defined as
\begin{equation}
    ||\rho||_F = \left(\sum_{i=1}^r \lambda_i^2\right)^{1/2} = \left(\sum_{i=1}^{2^{n/2}} \sum_{j=1}^{2^{n/2}} |\rho_{i, j}|^2 \right)^{1/2},
\end{equation}
where $\lambda_i$ is the $i^{\text{th}}$ non-zero eigenvalue of $\rho$. Additionally note that for any matrix $A$,
\begin{equation}
    \mathrm{rank}(A) = \mathrm{rank}(A A^{\mathsf{T}}).
\end{equation}
Now, following \eqref{definition rho}, let
\begin{equation}
    \rho = \frac{1}{2^n} AA^{\mathsf{T}}.
\end{equation}
Hence,
\begin{equation}
\label{rank and entropy}
   -\log \left(\bigg|\bigg|\frac{1}{2^n} AA^{\mathsf{T}} \bigg|\bigg|_F\right) \leq S(\rho) \leq \log \mathrm{rank}(A).
\end{equation}

\subsubsection{Constructing pseudoentangled states}
\label{construction: pseudoentangled states}
In the next two sections, we will construct two different types of pseudorandom state ensembles, the first where a typical state of the family has high entanglement entropy and the second where a typical state of the family has low entanglement entropy. As, by transitive law, these states are computationally indistinguishable from each other, we call them \emph{pseudoentangled states}.

During the construction of each ensemble, we will work towards satisfying two requirements simultaneously. 

Firstly, we want the matrix $A$ that we construct to be quantum-secure pseudorandom matrix. That is, we want $A$ to be indistinguishable from a random matrix to quantum adversaries, when given black-box access to either $A$ or that random matrix. This is because, one special object an adversary can construct when given black box access to a $2^{n/2} \times 2^{n/2}$ matrix $M$ is the phase state
\begin{equation}
    \ket{\psi_{M}} = \frac{1}{\sqrt{2^n}} \sum_{i, j \in \{0, 1\}^{n/2}} (-1)^{M(i, j)} \ket{i, j}.
\end{equation}
In fact, when given black box access to $M$, an adversary can construct polynomially many copies of $\ket{\psi_{M}}$. So, if $A$ is black-box indistinguishable from a random matrix, it implies that polynomially many copies of the phase state $\ket{\psi_{A}}$ must be computationally indistinguishable from polynomially many copies of the phase state $\ket{\psi_{R}}$, where $R$ is a random matrix. If they were distinguishable, it would break the black box indistinguishability result for $A$ -- given black box access, the adversary could just prepare a phase state and run the distinguisher for phase states! Now, from \cite{Ji2018}, polynomially many copies of $\ket{\psi_{R}}$ are statistically close to polynomially many copies of a Haar random state. Hence, by transitive law, $\ket{\psi_{A}}$ is a pseudorandom quantum state. 

Secondly, following \eqref{rank and entropy}, we want the Frobenius norm of $A$ to be large when we are constructing a state with high entanglement entropy. To that end, we will  lower bound
\begin{equation}
    -\log \left(\bigg|\bigg|\frac{1}{2^n} AA^{\mathsf{T}} \bigg|\bigg|_F\right)
\end{equation}
in the high entropy construction. In the low entropy construction, the very fact that our construction gives a pseudorandom quantum state means that it satisfies the generic entropy lower bound for pseudorandom states in \cite{Ji2018} -- this will be sufficient to match our upper bound. So, we do not need to lower bound the Frobenius norm. In lieu of that, we will instead upper bound the rank of $A$ to show that the entanglement entropy is small and matches the generic lower bound.

\subsection{Pseudorandom matrix with high entanglement entropy}
\label{subsection: high entropy}
We will now construct a family of pseudorandom states with high entanglement entropy. To that end, we will construct an ensemble of $2^{n/2} \times 2^{n/2}$ pseudorandom matrices with each entry $\in \{ -1, 1\}$. All except for a negligible fraction of matrices in this ensemble have high entanglement entropy. That is, our construction will guarantee that for a matrix $A$ picked from our ensemble,
\begin{equation}
\mathsf{S}\left(\frac{1}{2^{n}} AA^{\mathsf{T}}\right) = \Omega(n),
\end{equation}
with very high probability over the choice of $A$.
\subsubsection{High level overview of the construction}
\label{entropy: random matrix}
Before delving into the construction of a pseudorandom matrix, let us study some properties of a $2^{n/2} \times 2^{n/2}$ random matrix $R$, with entries in $\{1, -1\}$ to hone our intuition. Specifically, let us try to lower bound
\begin{equation}
\label{frobenius norm 2}
    -\log \left(\bigg|\bigg|\frac{1}{2^n} RR^{\mathsf{T}} \bigg|\bigg|_F\right).
\end{equation}
For \eqref{frobenius norm 2} to be $\Omega(n)$, the Frobenius norm of $R$ should be inverse exponentially small in $n$. Let us try to prove this, in expectation over the choice of $R$.

	\begin{align}\label{eqn:phase-state-lower-bound}
		&\mathrm{E} \left [ \left \|\frac{1}{2^{n}} RR^{\mathsf{T}} \right \|^2_F \right] \nonumber \\
        &= \frac{1}{2^{2n}} \mathrm{E} \left [ \left \| RR^{\mathsf{T}} \right \|_F \right] \nonumber\\
		& = \frac{1}{2^{2n}} \sum_{i=1}^{2^{n/2}}  \sum_{j=1}^{2^{n/2}} \mathrm{E} \left[\left ( \sum_{k=1}^{2^{n/2}}R_{ik}\cdot R_{jk}\right)^2 \right] \nonumber \\
		& = \frac{1}{2^{2n}} \sum_{i=1} ^{2^{n/2}}  \mathrm{E} \left[\left ( \sum_{k=1}^{2^{n/2}}R_{ik}\cdot R_{ik} \right)^2 \right] + \frac{1}{2^{2n}} \sum_{i \ne j, i, j = 1}^{2^{n/2}} \mathrm{E} \left[\left ( \sum_{k=1}^{2^{n/2}}R_{ik}\cdot R_{jk}\right)^2 \right] \nonumber \\
		& = \frac{1}{2^{n/2}} + \frac{1}{2^{2n}} \sum_{i \ne j, i, j = 1}^{2^{n/2}} \sum_{k=1}^{2^{n/2}}\mathrm{E} \left[\left ( R_{ik}\cdot R_{jk}\right)^2 \right] \nonumber \\
		& \le \frac{1}{2^{n/2-1}}.
	\end{align}
Now, we can use Markov's inequality to argue that the Frobenius norm of a typical random matrix $R$ is also inverse exponentially suppressed with high probability. 

Note that the only place where we utilized the fact that $R$ is a random matrix is when going from line $4$ to line $5$. Even there, we only utilized the fact that $R$ is $4$-wise independent \footnote{A random matrix is trivially $k$-wise independent, for any integer $k$.} to argue that for distinct $i, j, k, p, q, s \in \{0,1\}^{n/2}$,
\begin{equation}
\begin{aligned}
    &\mathrm{E}[R_{ik} R_{jk} R_{pq} R_{sq}] \\
    &= \mathrm{E}[R_{ik}] \cdot \mathrm{E}[R_{jk}] \cdot \mathrm{E}[R_{pq}] \cdot \mathrm{E}[R_{sq}] \\ 
    &= 0.
\end{aligned}
\end{equation}
Taking inspiration from this fact, there are two desirable properties we will seek in our construction.
\begin{itemize}
  \item It should be oracle indistinguishable from a random matrix to quantum adversaries, to be compatible with the strategy we outlined in \ref{construction: pseudoentangled states}.
    \item It should be $4$-wise independent. As we just saw, this helps us to easily argue that the Frobenius norm is inverse exponential in $n$ with high probability.
\end{itemize}
\subsubsection{The construction}
\label{A matrix}
First we consider the matrix $A$ with entries:
\begin{equation}
A_{ij}:= f(p(i, j)),
\end{equation}
where $f$ is uniformly drawn from a $4$-wise independent function family 
\begin{equation}
F= \{f :[2^{n/2}] \times [2^{n/2}] \to \{1, -1 \} \},
\end{equation}
and $p$ is uniformly drawn from $P$ -- a quantum-secure pseudorandom permutation (PRP) family. 
\subsubsection{Analysis of oracle indistinguishability}
\label{oracle indistinguishability} 
\begin{definition}
	We say a function $f :[2^{n/2}] \times [2^{n/2}] \to \{1, -1 \} $ is $\epsilon$-biased, if $$\left(\frac12 - \epsilon\right) \cdot 2^{n} \le |f^{-1}(1)| \le \left(\frac12 + \epsilon\right) \cdot 2^{n}.$$
\end{definition}
\begin{lemma} \label{lem:4wisechebyshev}
	If $F$ is a family of $4$-wise independent functions, $f$ is a random sample uniformly drawn from $F$. Then $f$ is $2^{-n/4}$-biased with high probability.
\end{lemma}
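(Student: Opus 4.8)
The plan is to treat $|f^{-1}(1)|$ as a sum of $2^n$ indicator random variables and apply Chebyshev's inequality, using only the pairwise independence that is already implied by $4$-wise independence. Write $N = 2^n$ for the size of the domain $[2^{n/2}]\times[2^{n/2}]$, and for each input $x$ define the indicator $X_x = \mathbf{1}[f(x)=1]$, so that $|f^{-1}(1)| = \sum_x X_x$. Because $f$ is drawn from a $4$-wise independent family, each individual output $f(x)$ is marginally uniform on $\{1,-1\}$, giving $\E[X_x] = 1/2$ and $\mathrm{Var}(X_x) = 1/4$; moreover any two distinct outputs are independent, so the collection $\{X_x\}$ is pairwise independent.

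From here the first two moments are immediate: $\E[|f^{-1}(1)|] = N/2 = 2^{n-1}$, and since pairwise independence kills every covariance term, $\mathrm{Var}(|f^{-1}(1)|) = \sum_x \mathrm{Var}(X_x) = N/4 = 2^{n-2}$. I would then invoke Chebyshev's inequality with deviation set to the target $\epsilon N = 2^{-n/4}\cdot 2^n = 2^{3n/4}$. As the standard deviation is $\sqrt{N}/2 = 2^{n/2 - 1}$, this deviation sits $2^{n/4+1}$ standard deviations out, so
$$\Pr\!\left[\big||f^{-1}(1)| - 2^{n-1}\big| \geq 2^{3n/4}\right] \leq \frac{2^{n-2}}{2^{3n/2}} = 2^{-n/2-2}.$$
Unpacking the definition of $\epsilon$-bias, the complementary event is precisely $(\tfrac12 - 2^{-n/4})\,2^n \leq |f^{-1}(1)| \leq (\tfrac12 + 2^{-n/4})\,2^n$, i.e. that $f$ is $2^{-n/4}$-biased. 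Hence $f$ fails to be $2^{-n/4}$-biased with probability at most $2^{-n/2-2}$, which is negligible, establishing the ``high probability'' claim.

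There is no genuine obstacle in this argument; the only points requiring a little care are (i) confirming that a $4$-wise (indeed pairwise) independent family yields uniform single-coordinate marginals, which is immediate from the definition of $k$-wise independence since the distribution of any one coordinate is uniform on the range $\{1,-1\}$, and (ii) bookkeeping the exponents so that the chosen $\epsilon = 2^{-n/4}$ lands comfortably many standard deviations away from the mean. I note that full $4$-wise independence is stronger than this particular lemma requires, since Chebyshev consumes only the second moment; the extra independence is what gets used elsewhere, e.g.\ in the fourth-moment Frobenius-norm bound of Section~\ref{subsection: high entropy}, so invoking the same primitive here keeps the construction uniform.
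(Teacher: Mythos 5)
Your proof is correct and follows essentially the same route as the paper: both apply Chebyshev's inequality to the aggregate of the function's values, computing the variance via the pairwise independence implied by $4$-wise independence, with deviation threshold $2^{3n/4}$ yielding a failure probability of order $2^{-n/2}$. The only cosmetic difference is that you sum $\{0,1\}$-indicators of $f(x)=1$ while the paper sums the $\{\pm 1\}$ values directly, a rescaling that changes nothing substantive.
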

\begin{proof}
	The fact that $F$ is a family of $4$-wise independent functions implies the variance of sum of all entries of $f$ equals the sum of variances\footnote{This is true even for pairwise independent functions.}. It also implies that the function $f$ is balanced, in expectation \footnote{This is true even for $1$-wise independent functions.}. More specifically,
	\begin{align*}
	\mathrm{Var}\left[ \sum_{i=1}^{2^{n/2}}\sum_{j=1}^{2^{n/2}} f(i, j) \right] & = \sum_{i=1}^{2^{n/2}}\sum_{j=1}^{2^{n/2}} \mathrm{Var}[f(i, j)] = 2^{n}. \\
	 \mathrm{E} \left[ \sum_{i=1}^{2^{n/2}}\sum_{j=1}^{2^{n/2}} f(i, j) \right] &= 0
		\end{align*}
	By Chebyshev's inequality, we have 
	\begin{align*}
		\Pr\left[\left | \sum_{i=1}^{2^{n/2}}\sum_{j=1}^{2^{n/2}} f(i, j) \right | > 2^{\frac34 n} \right] & \le \mathrm{Var}\left( \sum_{i=1}^{2^{n/2}}\sum_{j=1}^{2^{n/2}} f(i, j) \right) \cdot 2^{-\frac32 n} \\
		& = 2^{-n/2}.
	\end{align*}
\end{proof}

\begin{definition}
	We say a random matrix is entry-wise efficiently sampleable if each entry is drawn i.i.d. from an efficiently sampleable distribution $D$.
\end{definition}

\begin{definition}
	We say a random matrix is  $\epsilon$-biased, if each entry is drawn i.i.d. from a distribution $D$ with support $\{ 1, -1 \}$, and 
	\begin{equation}
	\frac{1}{2} - \epsilon \le \underset{x \sim D}{\Pr}[x = 1] \le \frac{1}{2} + \epsilon.
	\end{equation}
\end{definition}

\begin{lemma}\label{lem:biasedmatrix}
	An entry-wise efficiently sampleable $2^{-n/4}$-biased random matrix is oracle-indistinguishable from $0$-biased random matrix.
\end{lemma}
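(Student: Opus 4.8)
The plan is to treat oracle indistinguishability as a statement about $q$-query quantum algorithms (with $q=\poly(n)$, since preparing polynomially many copies of a phase state requires only polynomially many oracle queries) and to reduce the entire question to the behavior of a single low-degree polynomial. I would parametrize the family of candidate matrix distributions by the bias $\delta$, where each entry independently takes the value $1$ with probability $\tfrac12+\delta$ and $-1$ with probability $\tfrac12-\delta$; the $0$-biased matrix is the case $\delta=0$, and a $2^{-n/4}$-biased matrix is any fixed $\delta$ with $|\delta|\le 2^{-n/4}$. Let $P(\delta)$ denote the probability that a fixed $q$-query distinguisher $\mathcal{A}$ outputs $1$ when its oracle has bias $\delta$. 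The distinguishing advantage we must bound is then exactly $|P(\delta)-P(0)|$.

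First I would invoke Zhandry's polynomial method for oracle distributions \cite{zhandry2012construct}: for a quantum algorithm making $q$ queries to an oracle whose entries are drawn independently from a distribution specified by output probabilities $\{p_y\}$, the acceptance probability is a real polynomial of degree at most $2q$ in the variables $\{p_y\}$. In our setting every entry is drawn from the same two-outcome distribution with $p_{1}=\tfrac12+\delta$ and $p_{-1}=\tfrac12-\delta$, so substituting makes $P(\delta)$ a polynomial in the single variable $\delta$ of degree at most $2q$. Crucially, for every $\delta\in[-\tfrac12,\tfrac12]$ the value $P(\delta)$ is a genuine probability, so $P$ maps $[-\tfrac12,\tfrac12]$ into $[0,1]$ and in particular $|P|\le 1$ throughout this interval.

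The final step is an extremal bound for bounded polynomials. By the Markov brothers' inequality, a degree-$d$ polynomial bounded by $1$ in absolute value on an interval of length $1$ has derivative bounded by $2d^2$ there; with $d\le 2q$ this gives $|P'(\delta)|\le 8q^2$ for all $|\delta|\le\tfrac12$. Hence $|P(\delta)-P(0)|\le 8q^2\,|\delta|\le 8q^2\cdot 2^{-n/4}$, which is negligible whenever $q=\poly(n)$. Since $\mathcal{A}$ was an arbitrary $q$-query distinguisher, this establishes oracle indistinguishability; the entry-wise efficient sampleability hypothesis is what guarantees that both oracles are legitimately instantiable, so the conclusion applies to any efficient adversary.

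The main obstacle is the honest justification of the polynomial structure rather than the arithmetic. One must verify that Zhandry's degree-$2q$ bound genuinely applies to the i.i.d.\ entry-wise model used here (the entries are independent and identically distributed, which is precisely the regime his argument covers), and that $P$ is bounded on the \emph{entire} range $[-\tfrac12,\tfrac12]$ rather than merely near $\delta=0$, since the Markov brothers' inequality requires the global bound on the interval. I would also flag the natural alternative---a hybrid ``swapping-lemma'' argument that couples the fair and biased oracles so they differ only on a random sparse set---but this route is messier, because the query magnitudes on the differing coordinates themselves depend on the realized oracle, whereas the polynomial-method argument sidesteps that dependence entirely.
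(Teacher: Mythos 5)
Your proof is correct, but it takes a genuinely different route from the paper's. The paper disposes of this lemma in two lines: it observes that the biased entry distribution $D_1$ and the uniform distribution $D_2$ are statistically (hence computationally) indistinguishable as \emph{sample} distributions, and then invokes Zhandry's lifting theorem (Theorem 4.5 of \cite{Zha21}), which says that oracles with outputs drawn i.i.d.\ from efficiently sampleable, computationally indistinguishable distributions are oracle-indistinguishable; this is exactly where the entry-wise efficient sampleability hypothesis is consumed, since that theorem's reduction must simulate the oracle. You instead unroll the machinery underneath such results: Zhandry's degree-$2q$ polynomial representation of the acceptance probability of a $q$-query algorithm on an i.i.d.\ oracle, specialized to the one-parameter family $p_{\pm 1}=\tfrac12\pm\delta$, followed by the Markov brothers' bound $|P'(\delta)|\le 2(2q)^2=8q^2$ on $[-\tfrac12,\tfrac12]$, yielding the explicit advantage bound $8q^2\cdot 2^{-n/4}$. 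Your checks are the right ones (the i.i.d.\ entry model is precisely the regime of the polynomial lemma, and $P$ is bounded on the whole interval because every $\delta\in[-\tfrac12,\tfrac12]$ parametrizes a genuine distribution), and the arithmetic is sound. What your route buys: a concrete, information-theoretic bound that holds against \emph{computationally unbounded} $q$-query adversaries, with no sampleability assumption actually needed --- indeed your closing remark slightly misattributes the role of that hypothesis, which in your argument is vestigial rather than load-bearing. What the paper's route buys: brevity, and a modular black-box citation that avoids re-verifying the polynomial lemma's applicability. Since your statement is strictly stronger than the lemma as stated, the difference is one of economy, not correctness.
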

\begin{proof}
	Let $D_1$ be a distribution with support $\{ 1, -1 \}$, and 
	\begin{equation}
	\frac{1}{2} - 2^{-n/4} \le \Pr_{x \sim D}[x = 1] \le \frac{1}{2} + 2^{-n/4},
	\end{equation}
	and $D_2$ be a uniform distribution over $\{ 1, -1 \}$. It is clear that $D_1$ and $D_2$ are statistically indistinguishable (thus computationally indistinguishable). By Theorem 4.5 of \cite{Zha21}, we immediately know that $2^{-n/4}$-biased random matrix is oracle-indistinguishable from a random matrix.
\end{proof}

\begin{corollary}\label{coro:epsindistinguishable}
	If $f :[2^{n/2}] \times [2^{n/2}] \to \{1, -1 \} $ is a fixed $2^{-n/4}$-biased function, then the ensemble $f(r(\cdot, \cdot))$ is oracle-indistinguishable from $0$-biased random matrix, where $r: [2^{n/2}] \times [2^{n/2}] \to [2^{n/2}] \times [2^{n/2}]$ is a ($0$-biased) truly random function.
\end{corollary}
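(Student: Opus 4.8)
The plan is to reduce Corollary~\ref{coro:epsindistinguishable} directly to Lemma~\ref{lem:biasedmatrix} by identifying the ensemble $f(r(\cdot,\cdot))$ with an entry-wise i.i.d.\ $2^{-n/4}$-biased random matrix. The only real content is to check that composing a \emph{fixed} biased function with a \emph{truly random} function produces exactly the i.i.d.\ structure that lemma requires; once that is established, the conclusion is immediate.

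First I would analyze the distribution of the matrix whose $(i,j)$ entry is $f(r(i,j))$, where the randomness is solely over the choice of $r$. Because $r:[2^{n/2}]\times[2^{n/2}]\to[2^{n/2}]\times[2^{n/2}]$ is a truly random function, its values on distinct inputs are mutually independent, and each value is uniform over the codomain. Hence the entries $f(r(i,j))$ are mutually independent, and each is distributed as $f(U)$ for $U$ uniform over the domain of $f$. This is the step where it matters that $r$ is a random \emph{function} rather than a permutation: a permutation would sample the truth table of $f$ without replacement and thereby correlate the entries, whereas a truly random function yields genuine independence, which is what lets us apply the lemma verbatim instead of routing through a small-range-distribution argument.

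Next I would pin down the per-entry distribution $D$. We have $\Pr_{U}[f(U)=1]=|f^{-1}(1)|/2^{n}$, and since $f$ is $2^{-n/4}$-biased by hypothesis this quantity lies in $[\tfrac12-2^{-n/4},\,\tfrac12+2^{-n/4}]$. Thus $D$ is a $2^{-n/4}$-biased distribution supported on $\{1,-1\}$, and it is efficiently sampleable, since one samples a uniform index and applies the (efficiently computable) $f$; in our application $f$ is drawn from a $4$-wise independent family, so this efficiency holds. Consequently $f(r(\cdot,\cdot))$ is, by the stated definitions, exactly an entry-wise efficiently sampleable $2^{-n/4}$-biased random matrix.

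Finally I would invoke Lemma~\ref{lem:biasedmatrix}, which asserts precisely that such a matrix is oracle-indistinguishable from the $0$-biased random matrix, completing the proof. I do not anticipate a genuine obstacle here, as the statement is essentially a repackaging of the preceding lemma; the one point that deserves to be spelled out carefully is the independence of the entries, since it is exactly the truly-random-function property of $r$ that reduces the composed ensemble to the i.i.d.\ form to which Lemma~\ref{lem:biasedmatrix} applies.
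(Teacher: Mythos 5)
Your proposal is correct and follows essentially the same route as the paper's own proof: both identify $f(r(\cdot,\cdot))$ as exactly an entry-wise i.i.d.\ efficiently sampleable $2^{-n/4}$-biased random matrix (the entries being independent because $r$ is a truly random function, each distributed as $f(U)$ for uniform $U$) and then invoke Lemma~\ref{lem:biasedmatrix}. Your write-up merely makes explicit the independence and per-entry bias calculations that the paper leaves as ``easy to see.''
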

\begin{proof}
	Note that entries of $r$ are i.i.d. uniformly drawn from $[2^{n/2}] \times [2^{n/2}]$. It is easy to see that ensemble $f(r(\cdot, \cdot))$, where $f$ is a fixed $2^{-n/4}$-biased function, and $r$ is a random function,  is exactly an entry-wise efficiently sampleable $2^{-n/4}$-biased random matrix, by Lemma~\ref{lem:biasedmatrix}, it is oracle-indistinguishable from a random matrix.

\end{proof}

\begin{lemma}\label{lem:permutationrandomindistinguishable}
	$f(p(\cdot, \cdot))$ and $f(r(\cdot, \cdot))$ are oracle-indistinguishable for any fixed function $f: [2^{n/2}] \times [2^{n/2}] \to \{1, -1\}$, where $p$ is drawn from a pseudorandom permutation (PRP) family and $r: [2^{n/2}] \times [2^{n/2}] \to [2^{n/2}] \times [2^{n/2}]$ is a truly random function.
\end{lemma}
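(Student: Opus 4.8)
The plan is to interpose a \emph{truly random permutation} $\pi$ on $[2^{n/2}]\times[2^{n/2}]$ between the two ensembles and prove indistinguishability in two stages, combined by the triangle inequality: first $f(p(\cdot,\cdot))\approx f(\pi(\cdot,\cdot))$ using the (forward) security of the PRP family, and then $f(\pi(\cdot,\cdot))\approx f(r(\cdot,\cdot))$ using the quantum permutation-versus-function switching lemma. Write $N=2^n$ throughout, and note that the adversary is only ever given oracle access to the composed function $[N]\to\{1,-1\}$, never to the inner permutation or function directly.

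The enabling observation, which I would establish first, is that post-composing a \emph{fixed} $f$ onto the output of an inner oracle cannot increase a quantum distinguisher's advantage. Concretely, given access to the standard unitary $\mathcal{O}_g:\ket{x}\ket{b}\mapsto\ket{x}\ket{b\oplus g(x)}$ for an inner function $g$, one simulates a single query to $\mathcal{O}_{f\circ g}$ using two queries to $\mathcal{O}_g$: query $\mathcal{O}_g$ into a fresh ancilla to obtain $\ket{x}\ket{b}\ket{g(x)}$, apply the fixed unitary $U_f:\ket{y}\ket{b}\mapsto\ket{y}\ket{b\oplus f(y)}$ to write $f(g(x))$ into the $b$ register, and query $\mathcal{O}_g$ once more to uncompute the ancilla. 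Hence any $q$-query distinguisher for $f\circ g_1$ versus $f\circ g_2$ yields a $2q$-query distinguisher for $g_1$ versus $g_2$ of identical advantage.

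Applying this reduction with $(g_1,g_2)=(p,\pi)$ reduces the first stage to distinguishing the PRP $p$ from a truly random permutation $\pi$ with $2q=\poly(n)$ forward queries, which is negligible by the definition of a quantum-secure PRP family. Applying it with $(g_1,g_2)=(\pi,r)$ reduces the second stage to distinguishing a uniformly random permutation from a uniformly random function on $[N]$. By Zhandry's analysis of the quantum collision and set-equality problems \cite{zhandry2013note, Zha15}, no $O(q)$-query quantum algorithm distinguishes these with advantage better than $O(q^3/N)=\poly(n)/2^n=\mathrm{negl}(n)$. Summing the two bounds gives negligible total advantage, establishing the claimed oracle-indistinguishability.

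The routine ingredient is the first stage, a textbook reduction to PRP security. The main obstacle is the second stage: the \emph{quantum} permutation-versus-function switching is genuinely nontrivial, since classically this is a one-line birthday argument but the quantum bound requires Zhandry's machinery for the set-equality problem. Two subtleties are worth flagging in the write-up: because $f$ is post-composed we never need inverse access to $p$, so ordinary forward PRP security suffices; and since $f$ is a fixed function (not required to be efficient) it can be hardwired into the reductions, so no hypothesis on $f$ beyond being fixed is used. This lemma then feeds directly into Corollary~\ref{coro:epsindistinguishable} to yield oracle-indistinguishability of $f(p(\cdot,\cdot))$ from a $0$-biased random matrix.
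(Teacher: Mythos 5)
Your proof follows the paper's own route exactly: the paper's proof is precisely the two-step hybrid through a truly random permutation $p_r$, invoking PRP security for the first hop and Zhandry's permutation-versus-function switching result \cite{Zha15} for the second, and your explicit compute--uncompute simulation (two inner queries per composed query, so a $q$-query distinguisher for $f\circ g_1$ vs.\ $f\circ g_2$ becomes a $2q$-query distinguisher for $g_1$ vs.\ $g_2$) just makes precise what the paper states tersely. One small caution on your final remark: in the computational first hop an arbitrary fixed $f$ cannot literally be hardwired into a polynomial-time PRP distinguisher (its truth table has size $2^n$), so one should either assume $f$ is efficiently computable --- as it is in the paper's application, where $f$ is drawn from a $4$-wise independent family --- or phrase PRP security relative to an $f$-oracle; the information-theoretic second hop is fine for arbitrary $f$.
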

\begin{proof}
	By a straightforward hybrid argument. By the definition of PRP, $f(p(\cdot, \cdot))$ and $f(p_r(\cdot, \cdot))$ are oracle-indistinguishable, where $p_r: [2^{n/2}] \times [2^{n/2}] \to [2^{n/2}] \times [2^{n/2}] $ is a truly random permutation. Next, by \cite{Zha15}, $f(p_r(\cdot, \cdot))$ and $f(r(\cdot, \cdot))$ are oracle-indistinguishable.
\end{proof} 

\begin{lemma} \label{lem:PRPindistinguishable}
	 Let $f :[2^{n/2}] \times [2^{n/2}] \to \{1, -1 \} $ be $2^{-n/4}$-biased, $f(p(\cdot, \cdot))$ is oracle-indistinguishable from $0$-biased random matrix.
\end{lemma}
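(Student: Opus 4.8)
The plan is to derive this statement immediately by chaining the two preceding results through a short hybrid argument, since together they already bridge $f(p(\cdot,\cdot))$ to a $0$-biased random matrix via the intermediate object $f(r(\cdot,\cdot))$, where $r:[2^{n/2}]\times[2^{n/2}]\to[2^{n/2}]\times[2^{n/2}]$ is a truly random function. First I would invoke Lemma~\ref{lem:permutationrandomindistinguishable}, which holds for \emph{any} fixed function $f$, and in particular for our fixed $2^{-n/4}$-biased $f$: it guarantees that $f(p(\cdot,\cdot))$ and $f(r(\cdot,\cdot))$ are oracle-indistinguishable, where $p$ is drawn from the PRP family.

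Next I would apply Corollary~\ref{coro:epsindistinguishable}. Because $f$ is by hypothesis $2^{-n/4}$-biased, that corollary tells us the ensemble $f(r(\cdot,\cdot))$ is oracle-indistinguishable from a $0$-biased (i.e.\ uniform over $\{1,-1\}$) random matrix. The concluding step is a triangle inequality on distinguishing advantage: for any polynomial-time oracle distinguisher $\mathcal{A}$, its advantage against $f(p(\cdot,\cdot))$ versus the $0$-biased random matrix is at most the sum of its advantage against $f(p(\cdot,\cdot))$ versus $f(r(\cdot,\cdot))$ and its advantage against $f(r(\cdot,\cdot))$ versus the $0$-biased random matrix. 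Both summands are negligible by the two cited results, so the total advantage is negligible, which is exactly the desired conclusion.

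The argument is essentially bookkeeping, so there is no substantive obstacle; the only points requiring care are to confirm that oracle-indistinguishability composes transitively here — which it does, since each cited claim is phrased for arbitrary polynomial-time query algorithms and the intermediate hybrid $f(r(\cdot,\cdot))$ is literally the same object in both — and to verify that the bias hypothesis on $f$ is precisely what Corollary~\ref{coro:epsindistinguishable} consumes while the ``for any fixed $f$'' quantifier in Lemma~\ref{lem:permutationrandomindistinguishable} indeed covers our particular biased $f$ rather than a randomly drawn one. Both conditions hold as stated, so the two-hybrid chain goes through directly.
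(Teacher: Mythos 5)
Your proposal is correct and matches the paper's own proof, which likewise derives the lemma by combining Corollary~\ref{coro:epsindistinguishable} with Lemma~\ref{lem:permutationrandomindistinguishable}; the hybrid/triangle-inequality bookkeeping you spell out is exactly the implicit content of that one-line citation. Your checks that the ``fixed $f$'' quantifier covers the biased $f$ and that the bias hypothesis feeds Corollary~\ref{coro:epsindistinguishable} are the right points of care, and both hold as you say.
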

\begin{proof}
	By Corollary~\ref{coro:epsindistinguishable} and Lemma~\ref{lem:permutationrandomindistinguishable}, the proof follows.
\end{proof}

\begin{lemma} \label{lem:4wiseindependent}
Let $F= \{f :[2^{n/2}] \to [2^{n/2}]\}$ be a $4$-wise independent function family, and $P = \{f :[2^{n/2}] \to [2^{n/2}]\}$ be a pseudorandom permutation family. Then $f(p(\cdot, \cdot))$ is oracle-indistinguishable from a random matrix \footnote{A random matrix has $0$ bias.}, where $f$ is drawn from $F$ uniformly at random, and $p$ is drawn from $P$ uniformly at random.
\end{lemma}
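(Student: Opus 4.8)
The plan is to combine the statistical bias bound of Lemma~\ref{lem:4wisechebyshev} with the indistinguishability result of Lemma~\ref{lem:PRPindistinguishable} through a conditioning argument on the event that the sampled $4$-wise independent function is well-balanced. First I would recall that by Lemma~\ref{lem:4wisechebyshev}, a function $f$ drawn uniformly from the $4$-wise independent family $F$ is $2^{-n/4}$-biased except with probability at most $2^{-n/2}$. Call such an $f$ \emph{good}, and let $\mathsf{Good}$ denote this event.

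Next, fix any efficient quantum oracle distinguisher $\mathcal{A}$ and consider its advantage
$$\Delta = \left| \Pr_{f \leftarrow F,\, p \leftarrow P}\!\left[\mathcal{A}^{f(p(\cdot,\cdot))}=1\right] - \Pr_{R}\!\left[\mathcal{A}^{R}=1\right]\right|,$$
where $R$ ranges over truly random ($0$-biased) matrices. Splitting the first probability according to whether $f$ is good, I would bound the contribution of the complementary event trivially by $\Pr[\neg\mathsf{Good}] \le 2^{-n/2}$, since the distinguishing advantage is at most $1$ on that event.

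For the good event I would invoke Lemma~\ref{lem:PRPindistinguishable}: for every \emph{fixed} $2^{-n/4}$-biased function $f$, the ensemble $f(p(\cdot,\cdot))$ with $p \leftarrow P$ is oracle-indistinguishable from a $0$-biased random matrix, so the advantage of $\mathcal{A}$ conditioned on any particular good $f$ is negligible. Averaging this negligible bound over the good $f$'s, and using that the reference object $R$ is the same $0$-biased random matrix in both branches, keeps the contribution negligible. Combining the two parts gives $\Delta \le \mathrm{negl}(n) + 2^{-n/2} = \mathrm{negl}(n)$, which is exactly oracle-indistinguishability of $f(p(\cdot,\cdot))$ from a random matrix.

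The only delicate point, and the step I expect to require the most care, is the conditioning: Lemma~\ref{lem:PRPindistinguishable} is a statement about a single fixed biased $f$, whereas here $f$ itself is random, so one must ensure that the reduction's advantage bound holds \emph{uniformly} over all good $f$ (which it does, since the constant in Lemma~\ref{lem:PRPindistinguishable} depends only on the bias parameter $2^{-n/4}$ and not on the specific $f$) and that the negligible fraction of poorly balanced $f$ cannot be exploited by $\mathcal{A}$. Both are handled by the union bound above; the underlying cryptographic content, namely the security of the PRP and the switch from a truly random permutation to a truly random function, has already been discharged in Lemmas~\ref{lem:biasedmatrix}--\ref{lem:PRPindistinguishable}.
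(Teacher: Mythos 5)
Your proposal is correct and takes essentially the same route as the paper: both condition on the event (via Lemma~\ref{lem:4wisechebyshev}) that the sampled $f$ is $2^{-n/4}$-biased and then apply Lemma~\ref{lem:PRPindistinguishable} for each fixed good $f$, with the paper merely phrasing the averaging step as a proof by contradiction rather than a direct advantage split. Your explicit remark that the bound of Lemma~\ref{lem:PRPindistinguishable} must hold uniformly over the good $f$'s is, if anything, slightly more careful than the paper's own treatment.
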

\begin{proof}
    Let $F' \subseteq F$ be the subset of $F$ such that all functions $f' \in F'$ are $2^{-n/4}$-biased. Suppose quantum algorithm $\mathcal{A}$ distinguishes  $f(p(\cdot, \cdot))$ from a ($0$-biased) random matrix with non-negligible probability. Since $f$ is $2^{-n/4}$-biased with high probability, by Lemma~\ref{lem:4wisechebyshev}, $\mathcal{A}$ must distinguishes  $f'(p(\cdot, \cdot))$ from a random matrix with non-negligible probability, where $f'$ is drawn from $F'$ uniformly at random. However, by Lemma~\ref{lem:PRPindistinguishable}, $f'(p(\cdot, \cdot))$ and a random matrix are oracle-indistinguishable; a contradiction. Therefore, $f(p(\cdot, \cdot))$ is oracle-indistinguishable from a random matrix.
\end{proof}

\subsubsection{High entropy pseudorandom matrix and $4$-wise independence}
\label{Security}

As a final point before our more technical theorems, let us state a simple but important corollary which we will use a lot in our analysis and also in our other constructions.

\begin{corollary} \label{coro:highentropy4wise}
    Let $A$ be the matrix of \eqref{A matrix}. Then $A$ is $4$-wise independent.
\end{corollary}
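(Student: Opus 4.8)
The plan is to reduce the claim to the defining property of the $4$-wise independent family $F$, using the single structural fact that $p$ is a genuine \emph{bijection} of the index set $[2^{n/2}]\times[2^{n/2}]$ for every fixed key (its pseudorandomness plays no role here). Recall that $A$ being $4$-wise independent means that for every collection of at most four distinct index pairs $(i_1,j_1),\ldots,(i_4,j_4)$ and every choice of signs $b_1,\ldots,b_4\in\{1,-1\}$, one has $\Pr[A_{i_1 j_1}=b_1,\ldots,A_{i_4 j_4}=b_4]=2^{-4}$, where the probability is over the independent draws $f\leftarrow F$ and $p\leftarrow P$.

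First I would condition on the permutation $p$. Because $p$ is a bijection, four distinct inputs $(i_1,j_1),\ldots,(i_4,j_4)$ are sent to four \emph{distinct} outputs $p(i_1,j_1),\ldots,p(i_4,j_4)$. Since $f$ is drawn from a $4$-wise independent family independently of $p$, evaluating $f$ at any four distinct points yields four independent, uniform signs; hence for \emph{every} fixed $p$,
\[
\Pr_{f}\big[f(p(i_1,j_1))=b_1,\ldots,f(p(i_4,j_4))=b_4 \,\big|\, p\big]=2^{-4}.
\]
Averaging this identity over the choice of $p$ (valid since $f$ and $p$ are independent) gives the unconditional probability $2^{-4}$, which is exactly the $4$-wise independence condition. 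The identical argument applied to any sub-collection of one, two, or three of the pairs handles the lower-order marginals, so $A$ is $4$-wise independent.

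The only point requiring care --- and it is the conceptual crux rather than a technical obstacle --- is the role of $p$: what we use is its \emph{injectivity}, which guarantees that distinctness of the indices is preserved so that no two of the four evaluations of $f$ collide (a collision would destroy independence). Any fixed permutation would serve equally well, so the corollary is really the statement that pre-composing a $4$-wise independent function with a bijection preserves $4$-wise independence. I expect no genuine difficulty here; the entire content is the observation that a bijection cannot create collisions among the four chosen index pairs.
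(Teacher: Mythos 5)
Your proof is correct and is essentially the paper's own argument: the paper disposes of the corollary in one line by observing that permuting the entries of a $k$-wise independent function preserves $k$-wise independence, which is exactly your conditioning argument (fix $p$, use injectivity to keep the four evaluation points distinct, invoke $4$-wise independence of $f$, average over $p$) spelled out in full. No gap; you have merely made explicit the details the paper leaves implicit.
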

\begin{proof}
Note that
\begin{equation}
A_{ij}:= f(p(i, j)),
\end{equation}
where $f$ is a $4$-wise independent function and $p$ is a permutation. The proof then follows from the observation that permuting the entries of a $k$-wise independent function preserves $k$-wise independence.
\end{proof}

\subsubsection{Analysis of the Frobenius norm}
\begin{lemma}\label{lem:4wisefrobenius}
	Let $f$ be a function uniformly sampled  from a $4$-wise independent function family $F= \{f :[2^{n/2}] \times [2^{n/2}] \to \{1, -1 \} \}$, the Frobenius norm $\| \frac{1}{2^{n}} AA^{\mathsf{T}} \|_F \le 2^{-n/8}$ with high probability, where $A_{ij} := f(i, j).$
\end{lemma}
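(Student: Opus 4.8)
The plan is to bound the \emph{second moment} of the Frobenius norm and then invoke Markov's inequality, mirroring exactly the random-matrix computation carried out in \eqn{phase-state-lower-bound}; the whole point is that that earlier calculation secretly used only $4$-wise independence, so it transfers verbatim once we replace the truly random $R$ by the $4$-wise independent $A$. First I would expand
\begin{equation}
\mathrm{E}\left[\left\|\frac{1}{2^n}AA^{\mathsf{T}}\right\|_F^2\right]
=\frac{1}{2^{2n}}\sum_{i,j=1}^{2^{n/2}}\mathrm{E}\left[\left(\sum_{k=1}^{2^{n/2}}A_{ik}A_{jk}\right)^2\right],
\end{equation}
and split the outer sum into the diagonal terms $i=j$ and the off-diagonal terms $i\neq j$.

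For the diagonal terms I would use that each entry satisfies $A_{ik}^2=1$ (entries lie in $\{1,-1\}$), so $\sum_k A_{ik}A_{ik}=2^{n/2}$ and each such term contributes $2^n$; summing over the $2^{n/2}$ diagonal indices and dividing by $2^{2n}$ gives a contribution of $2^{-n/2}$. The off-diagonal terms are where $4$-wise independence enters: expanding the square as $\sum_{k,k'}A_{ik}A_{jk}A_{ik'}A_{jk'}$, the terms with $k=k'$ again contribute $1$ each (giving $2^{n/2}$ per pair), while for $k\neq k'$ the four indices $(i,k),(j,k),(i,k'),(j,k')$ are all distinct (since $i\neq j$), so $4$-wise independence factorizes the expectation into a product of four mean-zero factors $\mathrm{E}[A_{ik}]\mathrm{E}[A_{jk}]\mathrm{E}[A_{ik'}]\mathrm{E}[A_{jk'}]=0$. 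Hence each off-diagonal pair contributes $2^{n/2}$, and summing over the fewer than $2^n$ such pairs and dividing by $2^{2n}$ yields another $2^{-n/2}$. Combining the two bounds gives
\begin{equation}
\mathrm{E}\left[\left\|\frac{1}{2^n}AA^{\mathsf{T}}\right\|_F^2\right]\le 2^{1-n/2}.
\end{equation}

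Finally I would apply Markov's inequality to this second-moment bound: since the expected value of $\|\frac{1}{2^n}AA^{\mathsf{T}}\|_F^2$ is at most $2^{1-n/2}$, we get $\Pr[\|\frac{1}{2^n}AA^{\mathsf{T}}\|_F^2>2^{-n/4}]\le 2^{1-n/4}$, and taking square roots shows $\|\frac{1}{2^n}AA^{\mathsf{T}}\|_F\le 2^{-n/8}$ with probability at least $1-2^{1-n/4}$, which is the claimed high-probability statement. I do not expect a genuine obstacle here, as the argument is a routine variance computation; the only point requiring care is verifying that in the $k\neq k'$, $i\neq j$ case the four matrix positions are genuinely distinct so that $4$-wise independence (rather than full independence) suffices to kill the cross terms, exactly as noted in the discussion following \eqn{phase-state-lower-bound}.
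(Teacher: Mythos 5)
Your proof is correct and takes essentially the same route as the paper: the identical second-moment expansion of $\|\tfrac{1}{2^n}AA^{\mathsf{T}}\|_F^2$, the same split into diagonal ($i=j$) and off-diagonal ($i\neq j$) terms with $4$-wise independence annihilating the $k\neq k'$ cross terms because the four positions $(i,k),(j,k),(i,k'),(j,k')$ are distinct, and the same Markov finish. Incidentally, your Markov constant $2^{1-n/4}$ is the correct one; the paper's stated bound of $2^{1-n/2}$ at that step is a harmless arithmetic slip, and either constant suffices for the claimed high-probability conclusion.
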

\begin{proof}
 Since entries of matrix $A$ are 4-wise independent,
	\begin{align*}
		& \mathrm{E} \left [ \left \|\frac{1}{2^{n}} AA^{\mathsf{T}} \right \|^2_F \right] \\ &= \frac{1}{2^{2n}} \mathrm{E} \left [ \left \| AA^{\mathsf{T}} \right \|_F \right] \\
		& = \frac{1}{2^{2n}} \sum_{i=1}^{2^{n/2}}  \sum_{j=1}^{2^{n/2}} \mathrm{E} \left[\left ( \sum_{k=1}^{2^{n/2}}A_{ik}\cdot A_{jk}\right)^2 \right] \\
		& = \frac{1}{2^{2n}} \sum_{i=1} ^{2^{n/2}}  \mathrm{E} \left[\left ( \sum_{k=1}^{2^{n/2}}A_{ik}\cdot A_{ik} \right)^2 \right] + \frac{1}{2^{2n}} \sum_{i \ne j, i, j = 1}^{2^{n/2}} \mathrm{E} \left[\left ( \sum_{k=1}^{2^{n/2}}A_{ik}\cdot A_{jk}\right)^2 \right] \\
		& = \frac{1}{2^{n/2}} + \frac{1}{2^{2n}} \sum_{i \ne j, i, j = 1}^{2^{n/2}} \sum_{k=1}^{2^{n/2}}\mathrm{E} \left[\left ( A_{ik}\cdot A_{jk}\right)^2 \right] \\
		& \le \frac{1}{2^{n/2-1}}.
	\end{align*}
	Finally, by the Markov's inequality, we have 
	\begin{equation}
	\Pr \left [ \left \|\frac{1}{2^{n}} AA^{\mathsf{T}} \right \|^2_F  > 2^{-n/4}\right] \le 2^{1 - n/2}.
	\end{equation}
	Therefore, \begin{equation}
	\Pr \left [ \left \|\frac{1}{2^{n}} AA^{\mathsf{T}} \right \|_F  > 2^{-n/8}\right] \le 2^{1 - n/2}.
	\end{equation}
\end{proof}
\subsubsection{Putting the two properties together}
\label{security2}

\begin{theorem}
\label{putting it together}
 Let $F= \{f :[2^{n/2}] \times [2^{n/2}] \to \{1, -1 \} \}$ be a $4$-wise independent function family, and $P= \{p: [2^{n/2}] \times [2^{n/2}] \to [2^{n/2}] \times [2^{n/2}] \}$ be a PRP. If we draw a function $f$ from $F$ uniformly at random, and $p$ from $P$ uniformly at random. Then $f(p(\cdot, \cdot))$ is oracle-indistinguishable from truly random matrix and the following hold with high probability:
 \begin{itemize}
 	\item $\| \frac{1}{2^{n}} AA^{\mathsf{T}} \|_F \le 2^{-n/8}$,  where $A_{ij} := f(p(i, j)).$
 \end{itemize}
\end{theorem}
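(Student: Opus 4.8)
The plan is to assemble this theorem directly from the two ingredients already established, since it is a ``putting it together'' statement; I would treat the two conclusions separately.

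For the oracle-indistinguishability claim, I would simply invoke Lemma~\ref{lem:4wiseindependent}, which is precisely this statement: when $f$ is drawn from a $4$-wise independent family and $p$ from a PRP family, the ensemble $f(p(\cdot,\cdot))$ is oracle-indistinguishable from a truly random matrix. No additional work is needed beyond citing that lemma, whose proof already chains Corollary~\ref{coro:epsindistinguishable} (which controls the $2^{-n/4}$-bias of a typical $4$-wise independent $f$ via Lemma~\ref{lem:4wisechebyshev} together with Theorem~4.5 of \cite{Zha21}) with Lemma~\ref{lem:permutationrandomindistinguishable} (which replaces the PRP by a truly random permutation and then by a truly random function). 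The matching of domains between the theorem and Lemma~\ref{lem:4wiseindependent} is immediate.

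For the Frobenius-norm bound, the key observation is that although $A_{ij}=f(p(i,j))$ involves the permutation $p$, by Corollary~\ref{coro:highentropy4wise} the matrix $A$ is still $4$-wise independent for \emph{every} fixed $p$: composing a $4$-wise independent function with any fixed permutation merely permutes its truth table and hence preserves $4$-wise independence. I would then apply Lemma~\ref{lem:4wisefrobenius}, noting that its proof uses \emph{only} the $4$-wise independence of the entries of $A$ -- it consists of the second-moment computation $\mathrm{E}[\|\frac{1}{2^n}AA^{\mathsf{T}}\|_F^2]\le 2^{-n/2+1}$ followed by Markov's inequality, and nowhere relies on the specific form $A_{ij}=f(i,j)$. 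This yields $\|\frac{1}{2^n}AA^{\mathsf{T}}\|_F\le 2^{-n/8}$ with high probability.

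The only point requiring care -- rather than a genuine obstacle -- is the interaction between the two sources of randomness. The $4$-wise-independence argument for the Frobenius bound holds for each fixed permutation $p$, over the randomness of $f$ alone, so the high-probability conclusion holds uniformly in $p$ and therefore also when $p$ is drawn from the PRP family. The indistinguishability statement, by contrast, is over the joint choice of $f$ and $p$. Since each conclusion holds with overwhelming probability, a union bound over the two (exponentially small) failure events shows both hold simultaneously, completing the proof.
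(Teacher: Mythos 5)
Your proposal is correct and takes essentially the same route as the paper's own proof, which likewise obtains oracle-indistinguishability directly from Lemma~\ref{lem:4wiseindependent} and the Frobenius bound by combining Corollary~\ref{coro:highentropy4wise} (composition with any fixed permutation preserves $4$-wise independence) with Lemma~\ref{lem:4wisefrobenius}. Your closing union-bound remark is a harmless extra precaution rather than a needed step, since the indistinguishability claim is not a probabilistic failure event and the Frobenius bound, holding for every fixed $p$ over the randomness of $f$ alone, transfers immediately to the joint distribution.
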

\begin{proof}
     First, the   oracle-indistinguishability immediately comes from  Lemma~\ref{lem:4wiseindependent}. Next by Corollary~\ref{coro:highentropy4wise}, entries of $A$ are $4$-wise independent. Therefore by Lemma~\ref{lem:4wisefrobenius}, with high probability. $\| \frac{1}{2^{n}} AA^{\mathsf{T}} \|_F \le 2^{-n/8}$ holds. 
\end{proof}
\iffalse
\begin{proof}
	 The oracle-indistinguishability immediately comes from  Lemma~\ref{lem:4wiseindependent}. Next, by Lemma~\ref{lem:4wisefrobenius} we know $\Pr_{f \in F} \left [ \left \|\frac{1}{2^{2n}} AA^{\mathsf{T}} \right \|_F  > 2^{-n/2}\right] \le 2^{1 - n/2}$ for any fixed permutation $p$. Therefore, 
	\begin{align*}
		\mathrm{E}_{f\sim F} \left [ \Pr_{p \in P} \left [ \left \|\frac{1}{2^{2n}} AA^{\mathsf{T}} \right \|_F  > 2^{-n/2}\right] \right] & = \Pr_{f\sim F, p \sim P} \left [ \left \|\frac{1}{2^{2n}} AA^{\mathsf{T}} \right \|_F  > 2^{-n/2}\right] \\
		& \le 2^{1 - n/2}.
	\end{align*}
Finally, by the Markov's inequality, we have \[ 
\Pr_{f\sim F} \left [ \Pr_{p \in P} \left [ \left \|\frac{1}{2^{2n}} AA^{\mathsf{T}} \right \|_F  > 2^{-n/2}\right] > 2^{-n/4} \right] \le 2^{1-n/4}.
\]
\end{proof}
\fi
Finally, with high probability,  by Jensen's inequality,
\begin{equation}
\label{eq28}
\mathsf{S}\left(\frac{1}{2^{n}} AA^{\mathsf{T}}\right) \ge -\log \left \|\frac{1}{2^{n}} AA^{\mathsf{T}} \right \|_F = \Omega(n).
\end{equation}

\subsubsection{Constructing a pseudorandom state with high entropy}
\begin{lemma}
Let $A$ be a matrix such that
\begin{equation}
\label{high entropy matrix}
    A_{i, j} = f(p(i, j)),
\end{equation}
where $F= \{f :[2^{n/2}] \times [2^{n/2}] \to \{1, -1 \} \}$ be a public $4$-wise independent function family, and $P= \{p: [2^{n/2}] \times [2^{n/2}] \to [2^{n/2}] \times [2^{n/2}] \}$ be a public PRP family. Each member of both $F$ and $P$ has an efficient description. $f$ and $p$ are drawn from $F$ and $P$ uniformly at random. Then,
\begin{itemize}
\item The state
    \begin{equation}
    \label{high entropy state construction}
        |\psi_A \rangle = \frac{1}{\sqrt{2^{n}}} \sum_{i, j \in \{0, 1\}^{n/2}} (-1)^{A(i, j)} \ket{i, j}
    \end{equation}
is a pseudorandom state with
\begin{equation}
    \mathsf{S}(\rho) = \Omega(n),
\end{equation}
where $\rho$ is the reduced density matrix on $n/2$ qubits. \item The key $\mathsf{K}$ is a binary string specifying the index of the permutation $p$ and the index of the $4$-wise independent function $f$ in the public families $P$ and $F$ respectively.
\end{itemize}
\end{lemma}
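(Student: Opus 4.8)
The plan is to assemble the lemma from two facts that have essentially already been established in the preceding subsections and packaged together in Theorem~\ref{putting it together}: first, that the matrix $A=f(p(\cdot,\cdot))$ is oracle-indistinguishable from a truly random $\pm 1$ matrix, and second, that with high probability over the choice of $f$ and $p$ the reduced density matrix $\rho=\frac{1}{2^n}AA^{\mathsf{T}}$ satisfies $\|\rho\|_F\leq 2^{-n/8}$. The lemma then follows by converting the first fact into pseudorandomness of $\ket{\psi_A}$ and the second into the entropy lower bound.

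For pseudorandomness, I would invoke the reduction outlined in Section~\ref{construction: pseudoentangled states}. The key observation is that, given only black-box (oracle) access to the entries of a matrix $M$, an adversary can build a phase oracle and thereby efficiently prepare polynomially many copies of $\ket{\psi_M}$. Consequently, any poly-time quantum algorithm that distinguishes polynomially many copies of $\ket{\psi_A}$ from polynomially many copies of $\ket{\psi_R}$ (with $R$ truly random) would immediately yield an oracle distinguisher between $A$ and a random matrix, contradicting the oracle-indistinguishability half of Theorem~\ref{putting it together} (which is in turn Lemma~\ref{lem:4wiseindependent}). It therefore remains to note, following \cite{Ji2018,brakerski2019pseudo}, that polynomially many copies of $\ket{\psi_R}$ for a truly random $R$ are statistically close to polynomially many copies of a Haar-random state. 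Chaining these two indistinguishabilities via the triangle inequality shows $\ket{\psi_A}$ is computationally indistinguishable from Haar-random, i.e.\ a pseudorandom state.

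For the entropy lower bound, I would apply the inequality $\mathsf{S}(\rho)\geq -\log\|\rho\|_F$ derived in \eqref{rank and entropy} via Jensen's inequality, with $\rho=\frac{1}{2^n}AA^{\mathsf{T}}$. Substituting the Frobenius bound $\|\rho\|_F\leq 2^{-n/8}$ from Theorem~\ref{putting it together} gives $\mathsf{S}(\rho)\geq n/8=\Omega(n)$ with high probability, which is exactly the content of \eqref{eq28}. The secret key $\mathsf{K}$ is simply the pair of indices specifying $f$ and $p$ in their respective public families, and efficient preparability follows since each family member has an efficient description.

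The genuinely new technical content lives in the earlier lemmas rather than here, so I expect the main conceptual obstacle to be the black-box-to-white-box bridge in the pseudorandomness argument. In the pseudorandom state security definition the adversary receives only copies of the state and never sees the key, yet the reduction passes through oracle-indistinguishability of the \emph{matrix}; this is legitimate precisely because copies of the phase state are preparable in a black-box manner from the matrix entries, so a state distinguisher is automatically a valid matrix distinguisher. A secondary point to handle carefully is that two distinct high-probability events must hold simultaneously (the Frobenius-norm bound, and the $2^{-n/4}$-biasedness of $f$ used inside the oracle-indistinguishability proof of Lemma~\ref{lem:4wisechebyshev}); since each fails with only exponentially small probability, a union bound preserves ``with high probability'' for both conclusions at once.
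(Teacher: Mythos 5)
Your proposal is correct and follows essentially the same route as the paper's own proof, which likewise derives both the oracle-indistinguishability and the Frobenius-norm bound from Theorem~\ref{putting it together}, converts the latter into $\mathsf{S}(\rho)\geq -\log\|\rho\|_F = \Omega(n)$ via Jensen's inequality, and handles efficient preparation by building a phase oracle for the composite $f(p(\cdot,\cdot))$ from the key. You in fact spell out two points the paper leaves implicit at this spot (the black-box-to-white-box bridge via black-box preparability of phase states, and the union bound over the two high-probability events), which only strengthens the writeup.
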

\begin{proof}
The security and entropy bounds for $\ket{\psi_A}$ already follow from Theorem \eqref{putting it together}. We will prove that when given the key $\mathsf{K}$, we can efficiently prepare $\ket{\psi_A}$. This easily follows from the fact that when given the description of two functions $f$ and $g$, the composite function $f(p(\cdot))$ can be constructed in polynomial time. So, a phase oracle for $A$ can be constructed in polynomial time from the key $\mathsf{K}$. 

Then, we just need to prepare a superposition state and query an efficiently constructed phase oracle for $A$.
\end{proof}

\subsection{Pseudorandom matrix with low entanglement entropy}
\label{subsection: low entropy}
\noindent In this section, we will give a pseudorandom matrix construction with 
\begin{equation}
\mathsf{S}\left(\frac{1}{2^{n}} BB^{\mathsf{T}}\right)= \mathcal{O}(\mathrm{polylog}(n)).
\end{equation}
Hereon, let $k = \mathrm{polylog}(n)$, for ease of notation. 
\subsubsection{High level overview of the construction}
From \eqref{rank and entropy}, if we want to upper bound the entanglement entropy of our construction, it suffices to show that the matrix $B$ has low rank. A simple way to reduce the rank of a matrix is by repeating the rows. 

We will start with a high entropy matrix $A$, which can be constructed like we saw in the previous section. We will repeat the rows of this matrix, but repeat them such that no polynomial time adversary can ``feel" that the rank has gone down. In order to introduce collisions, we will use a ``pseudorandom $2^{n-k}$-to-$1$ function" to subsample the rows, which we will construct by composing a small range pseudorandom function with another pseudorandom function. 

\subsubsection{The construction}

Let $B$ be a matrix such that
\begin{equation}
\label{lowentropymatrix}
    B_{i,j} = A_{g(i), j},
\end{equation}
where
\begin{equation}
\label{low entropy construction}
    g_{i} = f(h(i) ~\mathrm{mod} \, 2^k),
\end{equation}
with $F= \{f :[2^{n/2}] \to [2^{n/2}]\}$ and $H= \{g :[2^{n/2}] \to [2^{n/2}] \}$ being two pseudorandom function families. $f$ and $h$ are drawn from $F$ and $H$ uniformly at random, and $A$ is a high entropy pseudorandom matrix of the type in \eqref{high entropy matrix}. 
\subsubsection{Security analysis of small range distributions}
\label{small range functions}
\begin{definition}[$m$-range functions, \cite{Zha21}]
We say a function $f: [2^{n/2}] \to [2^{n/2}]$ is a $m$-range function, if the range of $f$ has at most $m$ distinct values. 
\end{definition} 
\begin{definition}[Small range distributions, \cite{Zha21}]
    Define $SR_{2^k}$ as the following distribution of function $f: [2^{n/2}] \to [2^{n/2}]$:
      \begin{itemize}
      \item For each $i \in [2^k]$, choose a random value $y_i \in [2^{n/2}]$.
      \item For each $x \in [2^n]$, pick a random $i \in [2^k]$ and set $f(x) = y_i$.
  \end{itemize}
\end{definition}

\noindent The security analysis will go through a sequence of hybrids involving this small range distribution.

\begin{lemma} \label{lem:srupperbound}
  For any $\{ 1, -1\}$-valued matrix $A$, $g: [2^{n/2}] \to [2^{n/2}]$ is a $2^k$-range function, $S(\frac{1}{2^{n}} BB^{\mathsf{T}}) = O(k)$, where $B_{i,j} = A_{g(i),j}$.
\end{lemma}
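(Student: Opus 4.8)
The plan is to bound the entanglement entropy purely through the rank of $B$, exploiting the fact that subsampling rows via a small-range function cannot create more distinct rows than the range size. The key observation is that the $i$-th row of $B$ is $B_{i,\cdot} = A_{g(i),\cdot}$, which depends on $i$ only through $g(i)$. Since $g$ is a $2^k$-range function, it takes at most $2^k$ distinct values, so among all $2^{n/2}$ rows of $B$ there are at most $2^k$ \emph{distinct} rows (each a verbatim copy of some row of $A$). I would state this explicitly and conclude that $\mathrm{rank}(B)\leq 2^k$.

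From there the proof is essentially a one-line invocation of the rank--entropy inequality already established in the excerpt. First I would recall that for any matrix $B$ one has $\mathrm{rank}(B)=\mathrm{rank}(BB^{\mathsf{T}})$, so that the reduced density matrix $\rho=\frac{1}{2^n}BB^{\mathsf{T}}$ has rank at most $2^k$ as well. Then, applying the right-hand inequality of \eqref{rank and entropy}, namely $\mathsf{S}(\rho)\leq\log\mathrm{rank}(B)$, gives
\begin{equation}
\mathsf{S}\left(\frac{1}{2^n}BB^{\mathsf{T}}\right)\leq\log\mathrm{rank}(B)\leq\log 2^k=k=O(k),
\end{equation}
which is exactly the claimed bound.

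I would emphasize that the statement is completely independent of the structure or pseudorandomness of $A$: the hypothesis is only that $A$ is $\{1,-1\}$-valued (so that $B$ is a genuine phase matrix and $\rho$ is a valid density operator after normalization) and that $g$ has range at most $2^k$. In particular no $4$-wise independence or cryptographic property of $A$ is needed here, since we are proving an \emph{upper} bound on entropy, for which controlling the rank suffices; the matching lower bound and the pseudorandomness of the composed function $g$ are handled separately elsewhere.

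There is no real technical obstacle in this lemma — the entire content is the elementary counting argument that a range-$2^k$ relabeling of rows yields at most $2^k$ distinct rows. The only point requiring a modicum of care is to state clearly that row-repetition is a rank-non-increasing operation and to invoke $\mathrm{rank}(B)=\mathrm{rank}(BB^{\mathsf{T}})$ before passing to the entropy bound; everything else follows immediately from the previously established inequality \eqref{rank and entropy}.
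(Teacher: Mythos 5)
Your proposal is correct and follows essentially the same route as the paper: the paper's proof likewise observes that $\frac{1}{2^n}BB^{\mathsf{T}}$ is positive semidefinite with trace $1$ and that $\mathrm{rank}(B)\leq 2^k$, and then concludes $\mathsf{S}\left(\frac{1}{2^n}BB^{\mathsf{T}}\right)\leq k$ via Jensen's inequality, which is exactly the rank--entropy bound \eqref{rank and entropy} you invoke. Your write-up simply makes explicit two steps the paper leaves implicit — the counting argument that a range-$2^k$ relabeling yields at most $2^k$ distinct rows, and the identity $\mathrm{rank}(B)=\mathrm{rank}(BB^{\mathsf{T}})$ — so there is no substantive difference.
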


\begin{proof}
First note that $\frac{1}{2^{n}} BB^{\mathsf{T}}$ is a positive semidefinite matrix with trace $1$. The rank of $B$ is at most $2^k$, therefore by Jensen's inequality, $S(\frac{1}{2^{n}} BB^{\mathsf{T}}) \le k$.
\end{proof}

\begin{lemma}[\cite{Zha21}] \label{lem:srtrulyrandom}
	Let $g$ is a $2^k$-range function drawn from  $SR_{2^k}$. Then $g$ is oracle-indistinguishable from a truly random function for $k = \omega(n)$.
\end{lemma}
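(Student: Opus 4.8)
The plan is to derive this directly from Zhandry's quantitative bound on small-range distributions \cite{Zha21}, together with a short parameter calculation. It is worth stressing first what the statement does \emph{not} say: $g \leftarrow SR_{2^k}$ is \emph{not} statistically close to a truly random function, since a uniformly random function $[2^{n/2}]\to[2^{n/2}]$ typically takes $\Theta(2^{n/2})$ distinct values while $g$ takes at most $2^k$ of them. The claim is the weaker, computational one that no \emph{bounded-query} algorithm can feel this difference. Accordingly, the first step I would take is to recall the central lemma of \cite{Zha21}: for any quantum algorithm making $q$ oracle queries, the advantage in distinguishing oracle access to $g\leftarrow SR_{m}$ from oracle access to a uniformly random function is at most $O(q^3/m)$.

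The heart of that bound, which I would import rather than reprove, is Zhandry's observation that the acceptance probability of a $q$-query quantum algorithm is a polynomial of degree at most $2q$ in the variables encoding the oracle's responses. One writes $SR_m$ as the two-stage process appearing in its definition (sample $m$ outputs $y_1,\dots,y_m$, then assign each input an independent uniformly random bucket in $[m]$) and couples it to the truly random function by viewing both as cell assignments; the two processes are identical unless the inputs touched by the algorithm collide into fewer than $q$ distinct cells. A classical birthday argument would control this at $O(q^2/m)$, but because the amplitudes are degree-$2q$ polynomials the relevant deviation is governed at third order, yielding the quantum bound $O(q^3/m)$. This is precisely where the quantum and classical analyses diverge, and it is the only genuinely nontrivial ingredient; since it is established in \cite{Zha21}, in our write-up it enters as a black box, and the main ``obstacle'' is simply to invoke it with the correct parameters rather than to re-derive the polynomial-method estimate.

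Finally I would instantiate the bound with $m = 2^k$. A polynomial-time adversary makes $q = \poly(n)$ queries (each copy of a state built from $g$ is prepared with a constant number of oracle calls, so the query count is polynomial in the running time), so the distinguishing advantage is $O\big(\poly(n)/2^{k}\big)$. For any $k = \omega(\log n)$ the denominator $2^{k}$ is superpolynomial, whence this quantity is $\mathrm{negl}(n)$; in particular it is negligible for the stated threshold (indeed for that threshold the estimate is far from tight). Therefore $g \leftarrow SR_{2^k}$ is oracle-indistinguishable from a truly random function, which is exactly the claim, and it is this indistinguishability that will later let us replace the row-subsampling function by a small-range pseudorandom function in the low-entanglement construction of \eqref{lowentropymatrix}.
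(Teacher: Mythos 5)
Your proposal is correct and takes essentially the same route as the paper, which states this lemma with no proof of its own as a direct import from \cite{Zha21}: invoking Zhandry's $O(q^3/m)$ small-range-distribution bound as a black box with $m = 2^k$ and $q = \poly(n)$, then checking that the advantage is negligible, is exactly the intended justification. One remark: the threshold in the paper's statement, $k = \omega(n)$, is evidently a typo for $k = \omega(\log n)$ (the regime used everywhere else in the construction), and your parameter calculation correctly establishes the lemma in that intended regime.
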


\begin{lemma} \label{lem:pseudorandomsr}
	Let $F = \{ f: [2^{n/2}] \to [2^{n/2}] \}$, $G = \{ g: [2^{n/2}] \to [2^{n/2}] \}$ be two PRFs. Then $f(g(\cdot) ~\mathrm{mod} \, 2^k)$ and $r$ are oracle-indistinguishable, where $f$ and $g$ are drawn from $F$ and $G$ respectively, and $r$ is drawn from $SR_{2^k}$. 
\end{lemma}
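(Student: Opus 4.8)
The plan is to prove the lemma by a two-step hybrid argument that swaps each pseudorandom function for a truly random one, and then to observe that the fully randomized composition is distributed \emph{exactly} as $SR_{2^k}$. Concretely, I would define \textbf{Hybrid 0} to be oracle access to $f(g(\cdot)\bmod 2^k)$ with $f\leftarrow F$ and $g\leftarrow G$; \textbf{Hybrid 1} to be $f(R_g(\cdot)\bmod 2^k)$, where the inner PRF $g$ is replaced by a truly random $R_g:[2^{n/2}]\to[2^{n/2}]$; and \textbf{Hybrid 2} to be $R_f(R_g(\cdot)\bmod 2^k)$, where the outer PRF $f$ is also replaced by a truly random $R_f$.

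For the step Hybrid 0 $\approx$ Hybrid 1 I would reduce to the security of the family $G$. A distinguisher handed a quantum oracle $\mathcal{O}$ (promised to be either $g\leftarrow G$ or a truly random $R_g$) can coherently simulate the composed oracle $x\mapsto f(\mathcal{O}(x)\bmod 2^k)$: it samples its own $f\leftarrow F$, applies $\mathcal{O}$ into an ancilla, computes the reduction $\bmod\,2^k$ and then $f$, and finally uncomputes $\mathcal{O}$ to disentangle the ancilla. Since $f$ has an efficient description known to the reduction, this is efficient, so any distinguishing advantage between the two hybrids yields advantage against $G$. The step Hybrid 1 $\approx$ Hybrid 2 is symmetric and reduces to the security of $F$; the one wrinkle is that the reduction must now itself implement the truly random inner function $R_g$ against a quantum adversary. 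I would handle this by drawing a $2q$-wise independent function in place of $R_g$, where $q=\poly(n)$ bounds the number of oracle queries, invoking Zhandry's result that a $2q$-wise independent function is perfectly indistinguishable from a random function against $q$ quantum queries; the simulation is then exact.

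Finally I would verify that Hybrid 2 coincides with $SR_{2^k}$ on the nose. The $2^k$ outputs $R_f(0),\dots,R_f(2^k-1)$ are i.i.d.\ uniform elements of $[2^{n/2}]$ and serve as the bucket values $y_i$; and for each input $x$ the bucket index $R_g(x)\bmod 2^k$ is uniform on $[2^k]$, since $R_g(x)$ is uniform on $[2^{n/2}]$ and $2^k\mid 2^{n/2}$ (as $k\le n/2$), and these indices are independent across distinct $x$ and independent of the $y_i$ because $R_f$ and $R_g$ are independent. This is precisely the two-stage sampling defining $SR_{2^k}$, so the distributions are identical and the chain of indistinguishabilities closes.

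The main obstacle is not the combinatorics, which is immediate, but making the two reductions go through in the quantum oracle model: composing the oracles coherently (so that the uncompute step leaves no entangled garbage) and faithfully simulating a truly random inner function against superposition queries. For the latter I would lean on the $2q$-wise independence substitution rather than any form of lazy sampling, which is the cleanest way to make the reduction to $F$ exact.
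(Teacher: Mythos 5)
Your proposal is correct and follows essentially the same route as the paper's proof: the identical two-step hybrid chain (replace the inner PRF $g$ by a truly random function, then the outer PRF $f$), using a $2q$-wise independent function to simulate the truly random inner function in the second reduction exactly as the paper does via Zhandry, and concluding by identifying the fully random composition with $SR_{2^k}$. Your explicit verification of that last identification (the divisibility $2^k \mid 2^{n/2}$ and the independence of bucket indices and bucket values) is a slightly more careful spelling-out of what the paper asserts in one line, but it is the same argument.
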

\begin{proof}
	This lemma is proved by a sequence of hybrids. 
	
	\paragraph{Hybrid 0.} The adversary is given the oracle $f(g(\cdot) ~\mathrm{mod} \, 2^k)$.
	\paragraph{Hybrid 1.} In this case we replace the inner PRF $G$ with a truly random function $r_1$: $f(r_1(\cdot) ~\mathrm{mod} \, 2^k)$.
	\paragraph{Hybrid 2.} In this case we replace the outer PRF $f$ with a truly random function $r_2$: $r_2(r_1(\cdot) ~\mathrm{mod} \, 2^k)$. Note that this is viable since for any $q$-query adversary $\mathcal{A}$, we can always replace the truly random function with a $2q$-wise independent function and observe exactly the same behavior.
	
	Note that in the \textbf{Hybrid 2}, the adversary is given oracle access to a truly random $2^k$-range function $r$. Thus,  $F(G(\cdot) ~\mathrm{mod} \, 2^k)$ and $SR_{2^k}$ are indistinguishable. 
\end{proof}
\subsubsection{Putting everything together}
Now we have enough tools to put everything together. We are ready to prove the last theorem. 

\begin{definition}
	Let $A$ be a $2^{n/2} \times 2^{n/2}$  $\{ 1, -1 \}$-valued matrix, $f$ be a $[2^{n/2}] \to [2^{n/2}]$ function, then the row composition of $A$ and $f$ is a $2^{n/2} \times 2^{n/2}$ $\{ 1, -1 \}$-valued matrix and is defined by $(A \circ_{\mathsf{row}} f)_{i,j} := A_{f(i), j}$.
\end{definition}

\begin{theorem}
		Let $F = \{ f: [2^{n/2}] \to [2^{n/2}] \}$, $G = \{ g: [2^{n/2}] \to [2^{n/2}] \}$ be two PRFs. Let $\mathcal{D}$  be a distribution of $2^{n/2} \times 2^{n/2}$ $\{ 1, -1 \}$-valued matrices, and oracle $O_\mathcal{D} \leftarrow \mathcal{D}$ can be efficiently constructed. Suppose $\mathcal{D}$ is oracle-indistinguishable from the uniformly random distribution $\mathcal{R}$ of $2^{n/2} \times 2^{n/2}$ $\{ 1, -1 \}$-valued matrices. Then  $A \circ_{\mathsf{row}} f(g(\cdot) ~\mathrm{mod} \, 2^k)$   is  oracle-indistinguishable from $R$, where $A$ is a matrix sampled from $\mathcal{D}$ uniformly at random, $f, g$ are drawn from $F$ and $G$ respectively, and $R$ is a truly random matrix.
\end{theorem}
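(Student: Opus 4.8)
The plan is to establish oracle-indistinguishability through a short chain of hybrids that strips away one pseudorandom ingredient at a time, reducing each transition to a result already proved. Fix a $q$-query quantum distinguisher $\mathcal{A}$ with $q=\poly(n)$, and write $h\coloneqq f(g(\cdot)\bmod 2^k)$ for the composed row-selection function, so that the challenge oracle is $A\circ_{\mathsf{row}} h$ with $A\leftarrow\mathcal{D}$.

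First I would move from $A\leftarrow\mathcal{D}$ to a uniformly random matrix $R'\leftarrow\mathcal{R}$ while holding $h$ fixed, comparing $A\circ_{\mathsf{row}} h$ against $R'\circ_{\mathsf{row}} h$. Any distinguisher here yields a distinguisher for $\mathcal{D}$ versus $\mathcal{R}$: given black-box access to a matrix oracle $M$, sample the descriptions of $f$ and $g$ (which are efficiently evaluable), and answer each query $(i,j)$ by returning $M_{h(i),j}$. This perfectly simulates $M\circ_{\mathsf{row}} h$, so the assumed indistinguishability of $\mathcal{D}$ and $\mathcal{R}$ closes this step.

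Second, holding the matrix at a uniformly random $R'$, I would replace $h=f(g(\cdot)\bmod 2^k)$ by a function $r\leftarrow SR_{2^k}$, comparing $R'\circ_{\mathsf{row}} h$ with $R'\circ_{\mathsf{row}} r$. This reduces to Lemma~\ref{lem:pseudorandomsr}: given oracle access to a function $\phi\in\{h,r\}$, simulate access to the truly random matrix $R'$ by a $2q$-wise independent function (which is indistinguishable from a random matrix against a $q$-query algorithm and is efficiently samplable and evaluable), and answer query $(i,j)$ with $R'_{\phi(i),j}$. A distinguisher would then separate $f(g(\cdot)\bmod 2^k)$ from $SR_{2^k}$, contradicting Lemma~\ref{lem:pseudorandomsr}.

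Finally---and this is where the real content lies---I would argue that $R'\circ_{\mathsf{row}} r$ with $r\leftarrow SR_{2^k}$ is oracle-indistinguishable from a genuinely uniform matrix $R$, despite $R'\circ_{\mathsf{row}} r$ having rank only $2^k$. The point is that $\mathcal{A}$ touches at most $q$ distinct row-indices, and two of them collide under $r$ with probability only $O(q^2/2^k)$, negligible once $k=\omega(\log n)$ (equivalently, one may first invoke Lemma~\ref{lem:srtrulyrandom} to replace $r$ by a truly random function, whose queried row-indices are injective except with probability $O(q^2/2^{n/2})$). Conditioned on no such collision, each queried entry $R'_{r(i),j}$ indexes a fresh, independent, uniform $\pm1$ coordinate of $R'$, exactly matching the distribution of the corresponding entry of a truly random matrix, so the two oracles are statistically $O(q^2/2^k)$-close. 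Chaining the four hybrids by the triangle inequality bounds $\mathcal{A}$'s advantage by a negligible quantity. I expect the main obstacle to be precisely this last step: reconciling the drastic rank collapse (the source of the low-entanglement property) with indistinguishability from a full-rank random matrix, which rests on the observation that a query-bounded adversary almost never witnesses a repeated row and therefore cannot feel the reduced rank.
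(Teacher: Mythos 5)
Your hybrid skeleton matches the paper's proof (the paper swaps the row-selection function first and the matrix second, but the ordering is immaterial), and your first two transitions are sound, including the use of $2q$-wise independent functions to simulate the truly random matrix inside the reductions. The genuine gap is in your final step. You argue that a $q$-query adversary ``touches at most $q$ distinct row-indices'' and ``almost never witnesses a repeated row,'' so that conditioned on no collision each queried entry of $R'\circ_{\mathsf{row}} r$ is a fresh uniform $\pm 1$ coordinate. That is a classical lazy-sampling argument, and it does not apply in this setting: oracle-indistinguishability throughout the paper is against \emph{quantum} adversaries, and a single quantum query is a superposition over all $2^{n/2}$ row indices, so there is no well-defined set of $q$ ``queried rows'' and no meaningful event ``no collision was witnessed'' to condition on. Note also that the two oracles are \emph{not} statistically close as whole objects: a truly random function (and certainly $r\leftarrow SR_{2^k}$) induces abundant row collisions over the full domain, so the truth tables of $R'\circ_{\mathsf{row}} r$ and $R$ are statistically far; closeness holds only in a query-bounded sense, and quantumly that is precisely the content of Zhandry's collision-type bounds, not of a birthday union bound. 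The same caveat applies to your $O(q^2/2^k)$ collision estimate for $SR_{2^k}$, whose correct quantum analogue is the small-range theorem (Lemma~\ref{lem:srtrulyrandom}); there, at least, you cite the right lemma as an alternative.

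The repair is exactly the paper's Hybrids 3--4. After replacing the composed function by a truly random \emph{function} (your parenthetical route, via Lemmas~\ref{lem:pseudorandomsr} and \ref{lem:srtrulyrandom}), one further replaces the random function by a truly random \emph{permutation} $p$, which is quantumly indistinguishable by Zhandry's random-function/random-permutation switching result \cite{Zha15} (again simulating the random matrix with a $2q$-wise independent function so the reduction is efficient). Then $R'\circ_{\mathsf{row}} p$ is \emph{exactly} distributed as a uniform random matrix, since selecting the rows of a uniform matrix through a permutation leaves the distribution invariant --- an exact distributional identity rather than a statistical estimate. So your hybrids 1--2 stand as written, but your closing argument must be replaced by this function-to-permutation switch; as stated, it would not survive a quantum-query adversary, which is the adversary the theorem is about.
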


\begin{proof}
	The security is proved by a sequence of hybrids.
	\paragraph{Hybrid 0.} This is the case where the adversary is given oracle access to the matrix $A \circ_{\mathsf{row}}  f(g(\cdot) ~\mathrm{mod} \, 2^k)$. 
	\paragraph{Hybrid 1.} In this case, we switch the pseudorandom small-range function $ f(g(\cdot) ~\mathrm{mod} \, 2^k)$ to be a truly random function, so that the adversary is given $A \circ_{\mathsf{row}} r$, where $r: [2^{n/2}] \to [2^{n/2}]$ is a truly random function.
	
	Suppose $\mathcal{A}$ distinguishes \textbf{Hybrid 0} from \textbf{Hybrid 1} with non-negligible probability. Then we can construct an adversary $\mathcal{B}$, which is given $A \circ_{\mathsf{row}} O$, where $O$ is either $f(g(\cdot) ~\mathrm{mod} \, 2^k)$ or atruly random function $r$. 
 
 This adversary distinguishes $f(g(\cdot) ~\mathrm{mod} \, 2^k)$ from truly random function $r$. Note that $A$ can be efficiently constructed, so that $\mathcal{B}$ is an efficient quantum algorithm. This contradicts Lemma~\ref{lem:srtrulyrandom} and Lemma~\ref{lem:pseudorandomsr} which say that $f(g(\cdot) ~\mathrm{mod} \, 2^k)$ and $r$ are indistinguishable.
	
	\paragraph{Hybrid 2.} In this hybrid, we replace pseudorandom $A$ with a truly random matrix $R$. The adversary is given $R \circ_{\mathsf{row}} r$, where $r: [2^{n/2}] \to [2^{n/2}]$ is a truly random function.
	
	Suppose $\mathcal{A}$ distinguishes \textbf{Hybrid 1} from \textbf{Hybrid 2} with non-negligible probability. Then we can construct an adversary $\mathcal{B}$, which is given $O \circ_{\mathsf{row}} r$, where $O$ is either pseudorandom $A$ or a truly random matrix $R$. This adversary distinguishes pseudorandom $A$ from a truly random matrix $R$. 
 
 Unfortunately, the random function $r$ is inefficient to construct. We solve this by using Corollary 2.2 in \cite{Zha21}, which states that such oracles can be efficiently simulated using $2q$-wise independent functions as long as the number of queries to the oracle is upper bounded by $q$ and $q$ is a polynomial in $n$. Therefore, by efficiently simulating the random function  $r$, we get an efficient algorithm to distinguishes $A$ and $R$ with non-negligible probability. This contradicts our assumption.
	 
	 \paragraph{Hybrid 3.} For this hybrid, we replace truly random function $r$ in \textbf{Hybrid 2} with truly random permutation $p$, so that the adversary is given oracle access to $R \circ_{\mathsf{row}} p$, where $p: [2^{n/2}] \to [2^{n/2}]$ is a truly random permutation.
	 
	 Suppose $\mathcal{A}$ distinguishes \textbf{Hybrid 2} from \textbf{Hybrid 3} with non-negligible probability. Then we can construct an adversary $\mathcal{B}$, which is given ${R \circ_{\mathsf{row}} O}$ that distinguishes truly random functions from truly random permutations. Note that
	 $\mathcal{B}$ can answer any query made by $\mathcal{A}$ by making a constant number of queries to its own oracle $O$ and a random function oracle $R$. By doing the same trick, we can efficiently simulate the random oracle $R$ with $2q$-wise independent functions. Thus, we have an efficient quantum algorithm distinguishes truly random functions and truly random permutations, which contradicts the indistinguishability result by Zhandry~\cite{Zha15}.
	 
	 \paragraph{Hybrid 4.} In the last hybrid, the adversary is given truly random oracles R.
	  It is easy to see that \textbf{Hybrid 3} and \textbf{Hybrid 4} are exactly the same distribution.
	  
	  In conclusion, \textbf{Hybrid 0} and \textbf{Hybrid 4} are indistinguishable, as desired.
 	
\end{proof}

Before talking about constructing pseudorandom quantum states with our pseudorandom matrix, let us first generalize the construction of our pseudorandom matrix.

\subsection{Pseudorandom matrix with tunable entanglement entropy}
In this section, we will give a pseudorandom matrix construction such that we can tune the entanglement entropy of the matrix. That is, for a given $k$, between $\omega(\log n)$ and $\mathcal{O}( n)$, we can construct a pseudorandom state with entropy $\Theta(k)$. This generalizes the low-entropy construction of the previous section.
\subsubsection{Overview of the construction}
The proof will combine techniques from both the high entropy construction and the low entropy construction. The matrix will now be of the form
\begin{equation}
\label{tunable entropy}
    C_{i,j} = A_{g(i), j},
\end{equation}
where
\begin{equation}
\label{tunable entropy construction}
    g(i) = f(h(i) ~\mathrm{mod} \, 2^k),
\end{equation}
with $F= \{f :[2^{n/2}] \to [2^{n/2}]\}$ and $H= \{g :[2^{n/2}] \to [2^{n/2}] \}$ being two pseudorandom permutation families. $f$ and $h$ are drawn from $F$ and $H$ uniformly at random, and $A$ is drawn from a $4$-wise independent pseudorandom matrix family we constructed  \eqref{high entropy matrix}. Note that this is similar to the low entropy construction in \eqref{lowentropymatrix}, with the difference being that we use pseudorandom permutations instead of pseudorandom functions, because we care about the entropy of the matrix in a more fine-grained way.

The upper bound on the entropy comes by analyzing the rank of the matrix in \eqref{tunable entropy}, similar to the analysis in \ref{subsection: low entropy}. The lower bound of the entropy comes from lower bounding  
\begin{equation}
\label{Frobenius norm}
    -\log \left(\bigg|\bigg|\frac{1}{2^n} CC^{\mathsf{T}} \bigg|\bigg|_F\right),
\end{equation}
similar to the analysis in \ref{subsection: high entropy}. To lower bound \eqref{Frobenius norm}, we show that the Frobenius norm of $\frac{1}{2^{n}} CC^{\mathsf{T}}$ is $\mathcal{O}\left(\frac{1}{2^k}\right)$. 

Finally, the security proof of the construction will go through a sequence of hybrids. First, the hybrids will replace the inner and outer pseudorandom permutations with pseudorandom functions. Then, using the same analysis as that of \ref{small range functions}, we can argue that the resultant function is indistinguishable from a random function.

\subsubsection{Upper and lower bound on the entropy}
\label{entropy tunable}

\begin{definition} 
    We say a function $f: [2^{n/2}] \to [2^{n/2}]$ is $2^{n/2-k}$-to-$1$ if the image of $f$ has exactly $2^k$ elements and the size of preimage $f^{-1}(x)$ of any $x \in [2^{n/2}]$ is either $0$ or $2^{n/2-k}$.
\end{definition}

It is clear that $f(h(\cdot) ~\mathrm{mod} \, 2^k)$ is a $2^{n/2-k}$-to-$1$ function for any two permutations $f, h : [2^{n/2}] \to [2^{n/2}]$.

\begin{lemma}\label{lem:tunablefrobenius}
Let $f$ be a function uniformly sampled from a $4$-wise independent function family $F \{f: [2^{n/2}] \times [2^{n/2}] \to \{1, -1 \} \}$, and $g: [2^{n/2}] \to [2^{n/2}]$ is a $2^{n/2-k}$-to-$1$ function. Then the Frobenius norm $\| \frac{1}{2^{n}} CC^{\mathsf{T}} \|_F \le 2^{-k/4}$ with high probability, where $C_{ij} := f(g(i), j).$
\end{lemma}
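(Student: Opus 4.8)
The plan is to follow the template of Lemma~\ref{lem:4wisefrobenius} almost verbatim, bounding $\mathrm{E}\big[\|\tfrac{1}{2^n}CC^{\mathsf{T}}\|_F^2\big]$ and then invoking Markov's inequality; the only genuinely new ingredient is accounting for the many-to-one structure of $g$. The key observation is that since $g$ is a \emph{fixed} $2^{n/2-k}$-to-$1$ function, its image contains exactly $2^k$ values and the $i$-th row of $C$, namely $(f(g(i),j))_{j}$, depends on $i$ only through $g(i)$. Consequently the $2^{n/2}$ rows partition into $2^k$ groups of $2^{n/2-k}$ identical rows each, and the entire randomness is over the $4$-wise independent $f$ while $g$ stays fixed. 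Note that the $2^{n/2-k}$-to-$1$ condition forces $k \le n/2$, which I will use at the end.

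First I would expand
\[
\left\|\tfrac{1}{2^n}CC^{\mathsf{T}}\right\|_F^2
= \frac{1}{2^{2n}}\sum_{i,i'}\Big(\sum_{j} f(g(i),j)\,f(g(i'),j)\Big)^2,
\]
and split the outer sum over pairs $(i,i')$ according to whether $g(i)=g(i')$. When $g(i)=g(i')$ the inner sum equals $\sum_j f(g(i),j)^2 = 2^{n/2}$ deterministically (as $f$ is $\{\pm1\}$-valued), contributing $2^{n}$; there are exactly $2^k\cdot(2^{n/2-k})^2 = 2^{n-k}$ such pairs. When $g(i)\neq g(i')$, write $a=g(i)\neq a'=g(i')$; expanding the square gives $\sum_{j,j'} f(a,j)f(a',j)f(a,j')f(a',j')$, and for $j\neq j'$ the four arguments $(a,j),(a',j),(a,j'),(a',j')$ are all distinct, so $4$-wise independence together with each $f$-value being mean-zero kills that term, leaving only the $j=j'$ diagonal with expectation $2^{n/2}$. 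There are at most $2^{n}$ such pairs.

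Putting these together yields
\[
\mathrm{E}\Big[\big\|\tfrac{1}{2^n}CC^{\mathsf{T}}\big\|_F^2\Big]
\le \frac{1}{2^{2n}}\big(2^{n-k}\cdot 2^{n} + 2^{n}\cdot 2^{n/2}\big)
= 2^{-k}+2^{-n/2}\le 2^{1-k},
\]
where the last step uses $k\le n/2$. Markov's inequality then gives $\Pr[\|\tfrac{1}{2^n}CC^{\mathsf{T}}\|_F^2 > 2^{-k/2}]\le 2^{1-k/2}$, hence $\Pr[\|\tfrac{1}{2^n}CC^{\mathsf{T}}\|_F > 2^{-k/4}]\le 2^{1-k/2}$, which is negligible because $k=\omega(\log n)$.

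I expect the main (though still modest) obstacle to be the bookkeeping in this split: correctly counting the $2^{n-k}$ collision pairs coming from the group structure of $g$, and verifying that the off-diagonal $4$-wise-independence cancellation is unaffected by the fact that many rows of $C$ coincide. The point to get right is that within a single $g$-group the function argument is literally identical, so those pairs land in the deterministic $g(i)=g(i')$ bucket, whereas across distinct images $a\neq a'$ the four indices $(a,j),(a',j),(a,j'),(a',j')$ remain distinct precisely when $j\neq j'$, which is exactly what makes the cancellation go through.
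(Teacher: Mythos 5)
Your proof is correct and follows essentially the same route as the paper: a second-moment bound on $\bigl\|\tfrac{1}{2^n}CC^{\mathsf{T}}\bigr\|_F^2$ in which the collision pairs $g(i)=g(i')$ contribute deterministically, $4$-wise independence plus mean-zero marginals kill all off-diagonal cross terms except the $j=j'$ diagonal, and Markov's inequality finishes. The only difference is bookkeeping: the paper first deduplicates rows into a $2^k\times 2^{n/2}$ matrix $B$ and uses $\|CC^{\mathsf{T}}\|_F^2 = 2^{n-2k}\|BB^{\mathsf{T}}\|_F^2$, whereas you split the pair sum over $C$ directly by whether $g(i)=g(i')$ --- an equivalent rearrangement yielding the same expectation bound $2^{1-k}$ and the same tail bound $2^{1-k/2}$.
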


\begin{proof}
    Suppose $r_1, r_2, \ldots, r_{2^k}$ be $2^k$ distinct elements of the image of $g$. Let $B$ be a $2^k \times 2^{n/2}$ matrix such that $B_{i,j} := C_{r_i, j}$. Since $g$ is a $2^{n/2-k}$-to-$1$ function, each entry of $BB^{\mathsf{T}}$ repeats exactly $2^{n - 2k}$ times in $CC^{\mathsf{T}}$. Therefore, we have $\| \frac{1}{2^{n}} CC^{\mathsf{T}} \|^2_F = 2^{n - 2k} \cdot \| \frac{1}{2^{n}} BB^{\mathsf{T}} \|^2_F $. Note that entries in matrix $B$ are $4$-wise independent. We have,
    \begin{align*}
		& \mathrm{E} \left [ \left \|\frac{1}{2^{n}} BB^{\mathsf{T}} \right \|^2_F \right] \\ &= \frac{1}{2^{2n}} \mathrm{E} \left [ \left \| BB^{\mathsf{T}} \right \|_F \right] \\
		& = \frac{1}{2^{2n}} \sum_{i=1}^{2^k}  \sum_{j=1}^{2^k} \mathrm{E} \left[\left ( \sum_{l=1}^{2^{n/2}}B_{il}\cdot B_{jl}\right)^2 \right] \\
		& = \frac{1}{2^{2n}} \sum_{i=1} ^{2^k}  \mathrm{E} \left[\left ( \sum_{l=1}^{2^{n/2}}B_{il}\cdot B_{il} \right)^2 \right] + \frac{1}{2^{2n}} \sum_{i \ne j, i, j = 1}^{2^k} \mathrm{E} \left[\left ( \sum_{l=1}^{2^{n/2}}B_{il}\cdot B_{jl}\right)^2 \right] \\
		& = 2^{k-n} + \frac{1}{2^{2n}} \sum_{i \ne j, i, j = 1}^{2^k} \sum_{l=1}^{2^{n/2}}\mathrm{E} \left[\left ( B_{il}\cdot B_{jl}\right)^2 \right] \\
		& \le 2^{k-n} + 2^{2k - \frac32n} \\
		& \le 2^{k-n + 1}.
	\end{align*}
	
	\noindent In consequence, \[\mathrm{E}\left \|\frac{1}{2^{n}} CC^{\mathsf{T}} \right \|^2_F = 2^{n-2k}\cdot \mathrm{E}\left \|\frac{1}{2^{n}} BB^{\mathsf{T}} \right \|^2_F \le 2^{1-k}. \]
	Finally, by the Markov's inequality, we have 
	\begin{equation}
	\Pr \left [ \left \|\frac{1}{2^{n}} CC^{\mathsf{T}} \right \|^2_F  > 2^{-k/2}\right] \le 2^{1 - k/2}.
	\end{equation}
\end{proof}
 Therefore, $\left \|\frac{1}{2^{n}} CC^{\mathsf{T}} \right \|_F \le 2^{-k/4}$ with high probability. Finally, with high probability, by Jensen's inequality,
\begin{equation}
\label{eqwhatsoever}
\mathsf{S}\left(\frac{1}{2^{n}} CC^{\mathsf{T}}\right) \ge -\log \left \|\frac{1}{2^{n}} CC^{\mathsf{T}} \right \| = \Omega(k).
\end{equation}
The upper bound is quite simple, as a $2^{n/2-k}$-to-$1$ function is also a $2^k$-range function. By Lemma~\ref{lem:srupperbound} we have \[\mathsf{S}\left(\frac{1}{2^{n}} CC^{\mathsf{T}}\right) = O(k). \]

\subsubsection{Security analysis of $2^{n/2-k}$-to-$1$ functions}
\label{security tunable}

We are going to show that our construction of pseudorandom $2^{n/2-k}$-to-$1$ functions are oracle-indistinguishable from pseudorandom small range functions we have constructed for any $k = \omega(\log n)$. 

\begin{theorem}
    Let $F = \{f: [2^{n/2}] \to [2^{n/2}] \}$, and $G = \{g: [2^{n/2}] \to [2^{n/2}] \}$ be two families of pseudorandom functions. Let $P = \{p: [2^{n/2}] \to [2^{n/2}] \}$, and $H = \{h: [2^{n/2}] \to [2^{n/2}] \}$ be two families of pseudorandom permutations. Let $f, g, p, h$ be functions drawn from $F, G, P, H$ respectively uniformly at random. If $k = \omega(\log n)$ then $f(g(\cdot) ~\mathrm{mod} \, 2^k)$ is oracle-indistinguishable from $p(h(\cdot) ~\mathrm{mod} \, 2^k)$.
\end{theorem}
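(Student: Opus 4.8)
The plan is to prove indistinguishability by routing both constructions through a common reference oracle, namely a truly random function $r:[2^{n/2}]\to[2^{n/2}]$, and then invoking the triangle inequality for distinguishing advantage. The pseudorandom-function side is already in hand: by Lemma~\ref{lem:pseudorandomsr}, $f(g(\cdot)\bmod 2^k)$ is oracle-indistinguishable from a sample of $SR_{2^k}$, and by Lemma~\ref{lem:srtrulyrandom} a sample of $SR_{2^k}$ is in turn oracle-indistinguishable from $r$ once $k=\omega(\log n)$. Hence $f(g(\cdot)\bmod 2^k)$ is indistinguishable from $r$, and it remains only to show the same for $p(h(\cdot)\bmod 2^k)$.

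For the permutation side I would run a short hybrid argument. Starting from $p(h(\cdot)\bmod 2^k)$, first replace the inner permutation $h$ by a truly random permutation $h_r$; a distinguisher for this step yields a distinguisher against the PRP family $H$, since the outer $p$ can be applied efficiently from its description. Next replace the outer $p$ by a truly random permutation $p_r$; this reduces to the PRP security of $P$, where the reduction simulates the now-inefficient $h_r$ using a $2q$-wise independent function exactly as in the earlier hybrids (Corollary~2.2 of \cite{Zha21}). The key structural observation is that the resulting oracle $p_r(h_r(\cdot)\bmod 2^k)$ is distributed \emph{exactly} as a uniformly random $2^{n/2-k}$-to-$1$ function: the map $x\mapsto h_r(x)\bmod 2^k$ partitions the domain into $2^k$ classes each of size $2^{n/2-k}$, uniformly at random among balanced partitions, while $p_r$ restricted to $\{0,\dots,2^k-1\}$ assigns these classes a uniformly random tuple of \emph{distinct} labels, independently of the partition. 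Thus, apart from the two PRP-replacement steps, this link is an exact identity of distributions.

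It then suffices to show that a uniformly random $2^{n/2-k}$-to-$1$ function is oracle-indistinguishable from $r$ for $k=\omega(\log n)$. Against a $q$-query adversary the two oracles induce the same view except on a ``collision'' event in which two distinct queries land in a common fiber, which occurs with probability $O(q^2/2^k)$ classically; combining the three links by the triangle inequality then yields the desired indistinguishability of $f(g(\cdot)\bmod 2^k)$ and $p(h(\cdot)\bmod 2^k)$.

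The main obstacle I anticipate is precisely this last link: establishing \emph{quantum}-query indistinguishability of the balanced $2^{n/2-k}$-to-$1$ function from a truly random function. Classically it is an elementary collision bound, but against quantum queries a naive union bound does not apply and one must appeal to Zhandry's small-range-distribution machinery, the same engine behind Lemma~\ref{lem:srtrulyrandom}. I would adapt that argument to the exactly-balanced, distinct-label distribution; since such a function is if anything ``more uniform'' than a generic $SR_{2^k}$ sample—its fibers are perfectly balanced and its image values never collide—the same $O(q^3/2^k)$ advantage bound should go through, which is negligible exactly when $k=\omega(\log n)$. The cleanest packaging is to keep Lemma~\ref{lem:srtrulyrandom} as a black box and only argue the (quantum) closeness between the balanced distribution and $SR_{2^k}$, so that no new quantum query lower bound has to be proved from scratch.
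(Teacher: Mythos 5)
Your reduction of $p(h(\cdot)\bmod 2^k)$ to a uniformly random \emph{balanced} $2^{n/2-k}$-to-$1$ function (random balanced fibers from $h_r$, independent distinct labels from $p_r$ restricted to $\{0,\dots,2^k-1\}$) is correct and is a nice structural observation. The genuine gap is in your final link as you packaged it: you cannot keep Lemma~\ref{lem:srtrulyrandom} as a black box and ``only argue the (quantum) closeness between the balanced distribution and $SR_{2^k}$,'' because those two distributions are statistically \emph{far} as distributions over truth tables. An $SR_{2^k}$ sample has multinomially fluctuating fiber sizes (it is almost never exactly balanced), and its $2^k$ labels $y_i$ are drawn with replacement, colliding with probability roughly $2^{2k-n/2}$ --- non-negligible once $k$ approaches $n/4$, while the theorem must cover all $k=\omega(\log n)$ up to $n/2$. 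So no statistical coupling closes the link, and the ``closeness'' you need is itself a quantum-query indistinguishability claim of essentially the same difficulty as the lower bound you were trying to avoid; ``the balanced function is if anything more uniform, so the same $O(q^3/2^k)$ bound should go through'' is a heuristic, not a proof. Making it rigorous means proving (or importing, e.g.\ from \cite{Zha15}) a quantum lower bound for random $r$-to-$1$ functions from scratch --- exactly what your packaging was meant to sidestep.

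The paper avoids this entirely by never routing through the truly random function: it runs a direct chain of hybrids on $p(h(\cdot)\bmod 2^k)$, replacing the outer PRP by a truly random permutation, then by a truly random function (Zhandry's indistinguishability of random permutations from random functions on $[2^{n/2}]$, advantage $O(q^3/2^{n/2})$, with the inefficient oracles simulated by $2q$-wise independent functions), then by the PRF $f$, and then repeating the same three swaps on the inner permutation to land exactly at $f(g(\cdot)\bmod 2^k)$. This needs no small-range machinery and, notably, uses the hypothesis $k=\omega(\log n)$ nowhere, so it proves the statement for all $k$; your route genuinely needs $k=\omega(\log n)$ because you insist both sides be close to a truly random function. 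Your last link can in fact be repaired by the same two-swap trick: inside $p_r(h_r(\cdot)\bmod 2^k)$, replace each random permutation by a random function, which converts your balanced function into $r_2(r_1(\cdot)\bmod 2^k)$ --- exactly a sample of $SR_{2^k}$, as in Hybrid~2 of Lemma~\ref{lem:pseudorandomsr} --- but at that point your proof has collapsed into the paper's, with the detour through $r$ adding only an unnecessary $O(q^3/2^k)$ loss.
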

\begin{proof}
    The security is established by two sequences of hybrids. The first sequence of hybrids replaces the outer PRP with a PRF, and the second sequence of hybrids replaces the inner PRP with a PRF.
    
    Let's start with the first sequence of hybrids.
    \paragraph{Hybrid 0.} In this case, the adversary is given oracle access to function $p(h(\cdot) ~\mathrm{mod} \, 2^k)$.
    
    \paragraph{Hybrid 1.} This is the case where the adversary is given oracle access to $\rp(h(\cdot) ~\mathrm{mod} \, 2^k)$, where $\rp$ is a truly random permutation.
    
    Suppose $\mathcal{A}$ distinguishes \textbf{Hybrid 0} and \textbf{Hybrid 1} with non-negligible probability. We can construct an adversary  to $\mathcal{B}$ to distinguish between a PRP and a truly random permutation. $\mathcal{B}$ is given access to $\mathcal{O}$, where $O$ is either a PRP or a truly random permutation. $\mathcal{B}$ constructs $O(h(\cdot)~\text{mod}~2^k)$, and then calls $\mathcal{A}$
    
    \paragraph{Hybrid 2.} In this hybrid, the adversary is given oracle access to $\rf(h(\cdot) ~\mathrm{mod} \, 2^k)$, where $\rf$ is a truly random function.
    
     Suppose $\mathcal{A}$ distinguishes \textbf{Hybrid 1} and \textbf{Hybrid 2} with non-negligible probability. We can construct an adversary  to $\mathcal{B}$ to distinguish between a truly random permutation and a truly random function. $\mathcal{B}$ is given access to $\mathcal{O}$, where $O$ is either a truly random function or a truly random permutation. $\mathcal{B}$ constructs $O(h(\cdot) ~\mathrm{mod} \, 2^k)$, and then calls $\mathcal{A}$. This contradicts the indistinguishability result by Zhandry~\cite{Zha15}.
     
    \paragraph{Hybrid 3.} For this one, the adversary is given oracle access to $f(h(\cdot) ~\mathrm{mod} \, 2^k)$, where $f$ is sampled from a PRF family $F$ uniformly at random.
    
     Suppose $\mathcal{A}$ distinguishes \textbf{Hybrid 2} and \textbf{Hybrid 3} with non-negligible probability. We can construct an adversary  to $\mathcal{B}$ to distinguish between a PRF and a truly random function. $\mathcal{B}$ is given access to $\mathcal{O}$, where $O$ is either a truly random function or a PRF. $\mathcal{B}$ constructs $O(h(\cdot) ~\mathrm{mod} \, 2^k)$, and then calls $\mathcal{A}$. This contradicts the assumption of PRFs.
     
     We continue to replace the inner PRP with a PRF by a similar sequence of hybrids.
     
    \paragraph{Hybrid 4.} For this one, the adversary is given oracle access to $f(\rp(\cdot) ~\mathrm{mod} \, 2^k)$, where $\rp$ is a truly random permutation.

    Suppose $\mathcal{A}$ distinguishes \textbf{Hybrid 3} and \textbf{Hybrid 4} with non-negligible probability. We can construct an adversary  to $\mathcal{B}$ to distinguish between a PRP and a truly random permutation. $\mathcal{B}$ is given access to $\mathcal{O}$, where $O$ is either a truly random permutation or a PRP. $\mathcal{B}$ constructs $f(O(\cdot) ~\mathrm{mod} \, 2^k)$, and then calls $\mathcal{A}$.

    \paragraph{Hybrid 5.} For this one, the adversary is given oracle access to $f(\rf(\cdot) ~\mathrm{mod} \, 2^k)$, where $\rf$ is a truly random function.

    Suppose $\mathcal{A}$ distinguishes \textbf{Hybrid 4} and \textbf{Hybrid 5} with non-negligible probability. We can construct an adversary  to $\mathcal{B}$ to distinguish between a truly random function and a truly random permutation. $\mathcal{B}$ is given access to $\mathcal{O}$, where $O$ is either a truly random permutation or a truly random function. $\mathcal{B}$ constructs $f(O(\cdot) ~\mathrm{mod} \, 2^k)$, and then calls $\mathcal{A}$. This contradicts the indistinguishability result by Zhandry~\cite{Zha15}.

    \paragraph{Hybrid 6.} For this one, the adversary is given oracle access to $f(g(\cdot) ~\mathrm{mod} \, 2^k)$, where $g$ is sampled from a PRF family $G$ uniformly at random.

    Suppose $\mathcal{A}$ distinguishes \textbf{Hybrid 5} and \textbf{Hybrid 6} with non-negligible probability. We can construct an adversary  to $\mathcal{B}$ to distinguish between a truly random function and a PRF. $\mathcal{B}$ is given access to $\mathcal{O}$, where $O$ is either a PRF or a truly random function. $\mathcal{B}$ constructs $f(O(\cdot) ~\mathrm{mod} \, 2^k)$, and then calls $\mathcal{A}$. This contradicts the assumption of PRFs.

In conclusion, \textbf{Hybrid 0} and \textbf{Hybrid 4} are oracle-indistinguishable, as desired.
\end{proof}

We have showed that when we row-compose $f(g(\cdot) ~\mathrm{mod} \, 2^k)$ with a pseudorandom matrix, then the resulting matrix is pseudorandom as well. Since $f(g(\cdot) ~\mathrm{mod} \, 2^k)$ is oracle-indistinguishable from $p(h(\cdot) ~\mathrm{mod} \, 2^k)$, we can also get a pseudorandom matrix by composing $p(h(\cdot) ~\mathrm{mod} \, 2^k)$ with a pseudorandom matrix.

\subsubsection{Constructing a pseudorandom state with tunable entropy}
\label{construction: tunable PRS}
\begin{lemma}
Let $A$ be a matrix such that
\begin{equation}
\label{tunable entropy matrix}
    A_{i, j} = f(p(i, j)),
\end{equation}
where $F= \{f :[2^{n/2}] \times [2^{n/2}] \to \{1, -1 \} \}$ be a public $4$-wise independent function family, and $P= \{p: [2^{n/2}] \times [2^{n/2}] \to [2^{n/2}] \times [2^{n/2}] \}$ be a public PRP family. Each member of both $F$ and $P$ has an efficient description. $f$ and $p$ are drawn from $F$ and $P$ uniformly at random. Let
\begin{equation}
\label{tunable entropy matrix 2}
    C_{i,j} = A_{g(i), j},
\end{equation}
where
\begin{equation}
\label{tunable entropy construction overview}
    g_{i} = q(h(i) ~\mathrm{mod} \, 2^k),
\end{equation}
with $Q= \{q :[2^{n/2}] \to [2^{n/2}]\}$ and $H= \{g :[2^{n/2}] \to [2^{n/2}] \}$ being two public pseudorandom permutation families, each entry of which has an efficient description. $q$ and $h$ are sampled uniformly at random. $k$ is also public knowledge. Then,
\begin{itemize}
\item The state
    \begin{equation}
        |\psi_C \rangle = \frac{1}{\sqrt{2^{n}}} \sum_{i, j \in \{0, 1\}^{n/2}} (-1)^{C(i, j)} \ket{i, j}
    \end{equation}
is a pseudorandom state with
\begin{equation}
    \mathsf{S}(\rho) = \Theta(k),
\end{equation}
where $\rho$ is the reduced density matrix on $n/2$ qubits. \item 

The key $\mathsf{K}$ is a binary string having an efficient description of $A$, and the index of the two permutations $q$ and $h$.
\end{itemize}
\end{lemma}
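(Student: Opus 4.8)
The plan is to establish the three assertions of the lemma in turn—efficient preparability, the two-sided entropy bound $\mathsf{S}(\rho)=\Theta(k)$, and pseudorandomness—reusing the machinery of Sections~\ref{entropy tunable} and~\ref{security tunable} so that almost all of the work is already in place. I expect the entropy and preparability claims to be essentially immediate, and the pseudorandomness to be the only part requiring genuine assembly, since it must chain together the row-composition security theorem of Section~\ref{subsection: low entropy}, the permutation-versus-function swap of Section~\ref{security tunable}, and the standard transfer from matrix oracle-indistinguishability to phase-state computational indistinguishability.

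First I would record efficient preparability. Given $\mathsf{K}$ one can evaluate $C_{i,j}=A_{g(i),j}=f\!\big(p(q(h(i)\bmod 2^k),\,j)\big)$ in polynomial time, since $f,p,q,h$ all have efficient descriptions contained in the key and composition preserves efficiency; hence a phase oracle for $C$ is poly-time constructible, and $|\psi_C\rangle$ is prepared by applying $H^{\otimes n}$ and querying this oracle once, exactly as in Section~\ref{efficient preparation}.

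Next, the entropy bound is essentially established in Section~\ref{entropy tunable}. The reduced state on either half is $\rho=\tfrac{1}{2^n}CC^{\mathsf T}$. Because $q$ and $h$ are permutations, $g(\cdot)=q(h(\cdot)\bmod 2^k)$ is a $2^{n/2-k}$-to-$1$ function and in particular a $2^k$-range function, so Lemma~\ref{lem:srupperbound} gives the upper bound $\mathsf{S}(\rho)=O(k)$. For the matching lower bound, Corollary~\ref{coro:highentropy4wise} shows that $A$ is $4$-wise independent, so $C_{i,j}=A_{g(i),j}$ satisfies the hypotheses of Lemma~\ref{lem:tunablefrobenius} with $A$ playing the role of the $4$-wise independent function; that lemma yields $\|\tfrac{1}{2^n}CC^{\mathsf T}\|_F\le 2^{-k/4}$ with high probability, whence \eqref{eqwhatsoever} gives $\mathsf{S}(\rho)\ge -\log\|\rho\|_F=\Omega(k)$. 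Combining, $\mathsf{S}(\rho)=\Theta(k)$ with high probability over the choice of key, as required.

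Finally, pseudorandomness, which is the crux. The high-entropy matrix $A$ is oracle-indistinguishable from a truly random $\{\pm1\}$-matrix by Lemma~\ref{lem:4wiseindependent}, so the row-composition theorem of Section~\ref{subsection: low entropy} shows $A\circ_{\mathsf{row}} f(g(\cdot)\bmod 2^k)$ is oracle-indistinguishable from a truly random matrix $R$ when $f,g$ are PRFs. Our construction instead subsamples rows with the PRP-based map $q(h(\cdot)\bmod 2^k)$; by the theorem of Section~\ref{security tunable} this map is oracle-indistinguishable from $f(g(\cdot)\bmod 2^k)$ whenever $k=\omega(\log n)$, and since row-composing a fixed, efficiently computable $A$ with an oracle is an efficient reduction, a short hybrid argument shows that $C=A\circ_{\mathsf{row}} q(h(\cdot)\bmod 2^k)$ is itself oracle-indistinguishable from $R$. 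It then remains to transfer this matrix indistinguishability to the phase states: as explained in Section~\ref{construction: pseudoentangled states}, black-box access to $C$ (resp.\ $R$) lets one prepare polynomially many copies of $|\psi_C\rangle$ (resp.\ $|\psi_R\rangle$), so any distinguisher between $|\psi_C\rangle^{\otimes \poly(n)}$ and $|\psi_R\rangle^{\otimes \poly(n)}$ would break the oracle-indistinguishability of $C$ from $R$, while $|\psi_R\rangle$ is statistically close to a Haar-random state by \cite{Ji2018}. Chaining these indistinguishabilities shows $|\psi_C\rangle$ is pseudorandom. The hard part is precisely this chain of reductions—keeping the permutation-to-function swap, the row-composition step, and the phase-state transfer mutually compatible, and in particular ensuring each intermediate truly random oracle is efficiently simulable (via $2q$-wise independent replacements, as in Section~\ref{security tunable})—rather than any new calculation.
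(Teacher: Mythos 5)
Your proposal is correct and follows essentially the same route as the paper: the paper's own proof simply cites Sections~\ref{entropy tunable} and~\ref{security tunable} for the entropy bounds and security, and separately argues efficient preparation via an efficiently constructible phase oracle for $C$, exactly as you do. The only difference is that you explicitly spell out the chain the paper leaves implicit---the upper bound via Lemma~\ref{lem:srupperbound} (since $g$ is $2^{n/2-k}$-to-$1$), the lower bound via Corollary~\ref{coro:highentropy4wise} and Lemma~\ref{lem:tunablefrobenius}, and the security hybrid combining the row-composition theorem with the permutation-versus-function swap and the transfer from matrix oracle-indistinguishability to phase states---which is a faithful and somewhat more careful rendering of the paper's intended argument.
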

\begin{proof}
The entropy and the security analysis follow from \ref{entropy tunable} and \ref{security tunable}.

We will prove that when given the key we can efficiently prepare $\ket{\psi_C}$. Note that given a function $f$, it is easy to construct
\begin{equation}
    t(x) = f(x)~\text{mod}~2^k,
\end{equation}
   for any $k$. So, a phase oracle for $C$ can be constructed in polynomial time from the key $\mathsf{K}$. Therefore, when given the key $\mathsf{K}$, it is easy to construct the state
 
    \begin{equation}
       |\psi_C \rangle = \frac{1}{\sqrt{2^{n}}} \sum_{i, j \in \{0, 1\}^{n/2}} (-1)^{C(i, j)} \ket{i, j},
    \end{equation}
    by preparing an equal superposition state and querying an efficiently prepared phase oracle for $C$.
\end{proof}

Note that there is nothing special about choosing the bipartition of qubits to have size $n/2$ on either side. While we do not explicitly discuss this in this paper, we observe that using almost the same arguments and security proofs, we can get a pseudorandom quantum state with tunable entanglement entropy $\Theta(k)$ for a bipartition $(A, B)$, where each partition has size $\Omega(n)$.

\section{Entanglement scaling and geometry}
Depending on how the entanglement entropy scales with respect to the geometry of the qubits, we define area and volume law scaling of entanglement entropy. Qualitatively, if entanglement scales as the size of the boundary of the bi-partition, then we call the scaling ``area-law" and if it scales as the size of the interior of the bi-partition, we call the scaling ``volume-law." 

\subsection{Definitions}
Let us introduce some notation that we will use for the rest of the paper. For an $n$--qubit state $\ket{\psi}$, let $(\mathsf{X}, \mathsf{Y})$ be any partition of the $n$ qubits. Then, for reduced density matrices $\rho_{\mathsf{X}}$ and $\rho_{\mathsf{Y}}$, let the von Neumann entropy, for each, be denoted by $\mathsf{S}(\rho_{\mathsf{X}:\mathsf{Y}})$. Let $\mathscr{H}(\C^N)$ denotes the Haar measure over $N=2^n$ basis states. 

\begin{definition}[Area--law entanglement]
\label{area-law}
     An $n$--qubit state $\ket{\psi}$ is area--law entangled if for any choice of $(\mathsf{X}, \mathsf{Y})$,
     \begin{equation}
     \mathsf{S}(\rho_{\mathsf{X}:\mathsf{Y}}) = \mathcal{O}(\B),
     \end{equation}
     where $\B$ is the size of the boundary of the partition between $\rho_{\mathsf{X}}$ and $\rho_{\mathsf{Y}}$.
\end{definition}

\begin{definition}[Volume--law entanglement]
\label{volume-law}
     An $n$--qubit state $\ket{\psi}$ is volume--law entangled if for any choice of $(\mathsf{X}, \mathsf{Y})$,
     \begin{equation}
     \mathsf{S}(\rho_{\mathsf{X}:\mathsf{Y}}) = \mathcal{O}(\V),
     \end{equation}
     where $\V = \mathsf{min}(\V_\mathsf{A}, \V_\mathsf{B})$, and $\V_\mathsf{X}$ and $\V_\mathsf{Y}$ are the sizes of the interiors of region $\mathsf{X}$ and $\mathsf{Y}$ respectively.
\end{definition}

Next, we will see how area--law pseudorandom states are impossible. Qualitatively, this is because, by appropriately choosing the right boundary, we can distinguish these states from Haar--random states, which follow volume law entropy. Since, by virtue of our construction, our pseudoentangled states are also pseudorandom, we cannot get area--law scaling for these states.

\subsection{Area--law pseudorandom states are impossible}
We will prove the following theorem.

\begin{theorem}
Let $\{\ket{\psi_k}\}$ be an ensemble of $n$-qubit quantum states, indexed by the key $k$, that follow area--law entanglement scaling. Then, for polynomially bounded $m$, there exists a distinguisher $\mathcal{D}$ such that
\begin{equation}
\left|\underset{k}{\mathsf{Pr}}\left[\mathcal{D}\left(|\psi_k\rangle^{\otimes m}\right) = 1\right] - \underset{\ket{\phi} \sim \mathscr{H}(\mathbb{C}^N)}{\mathsf{Pr}}\left[\underset{}{\mathcal{D}}\left(|\phi\rangle^{\otimes m}\right)=1\right]  \right| = \frac{1}{\poly(n)},
\end{equation}
for an appropriate choice of $\poly(n)$.
\end{theorem}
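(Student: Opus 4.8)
The plan is to exhibit a distinguisher built from the SWAP test applied across a carefully chosen cut. The governing principle is that area-law scaling forces \emph{some} bipartition into two large subsystems to carry only small entanglement entropy, whereas a Haar-random state is near-maximally entangled across every such cut; the SWAP test exposes this gap by measuring the purity of the reduced state.

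First I would fix a cut $(\mathsf{X},\mathsf{Y})$ adapted to the geometry so that the boundary $\B$ is as small as possible while both sides stay large. For qubits on a line this is the contiguous bisection, for which $\B=O(1)$ and $|\mathsf{X}|=|\mathsf{Y}|=n/2$; in general one only needs a cut with $\B=O(\log n)$ and $\min(|\mathsf{X}|,|\mathsf{Y}|)=\Omega(n)$. By the area-law hypothesis, every state $\ket{\psi_k}$ in the ensemble satisfies $\mathsf{S}(\rho_{\mathsf{X}:\mathsf{Y}})=O(\B)$ across this cut, where $\rho_{\mathsf{X}}=\Tr_{\mathsf{Y}}\ketbra{\psi_k}{\psi_k}$. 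Next I would pass from entropy to purity: since the R\'enyi-$2$ entropy lower-bounds the von Neumann entropy, $-\log\Tr(\rho_{\mathsf{X}}^2)\le \mathsf{S}(\rho_{\mathsf{X}:\mathsf{Y}})=O(\B)$, so that $\Tr(\rho_{\mathsf{X}}^2)\ge 2^{-O(\B)}\ge 1/\poly(n)$ for each key $k$ (indeed $\Omega(1)$ when $\B=O(1)$). In contrast, for a Haar-random state the expected purity of the reduced state on $\mathsf{X}$ is $\E[\Tr(\rho_{\mathsf{X}}^2)]=\frac{2^{|\mathsf{X}|}+2^{|\mathsf{Y}|}}{2^n+1}\le 2^{-|\mathsf{X}|}+2^{-|\mathsf{Y}|}=2^{-\Omega(n)}$.

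The distinguisher $\mathcal{D}$ then takes two of the $m$ copies, runs a controlled-SWAP test between the $\mathsf{X}$-registers of the two copies (discarding the $\mathsf{Y}$-registers), and outputs the test outcome. This SWAP test accepts with probability $\tfrac12\big(1+\Tr(\rho_{\mathsf{X}}^2)\big)$, which is \emph{linear} in the purity, so no concentration over the key is required: averaging over $k$, the acceptance probabilities on the two ensembles differ by $\tfrac12\big(\E_k[\Tr(\rho_{\mathsf{X}}^2)]-\E_{\mathrm{Haar}}[\Tr(\rho_{\mathsf{X}}^2)]\big)\ge \tfrac12\big(2^{-O(\B)}-2^{-\Omega(n)}\big)=\Omega\big(2^{-O(\B)}\big)\ge 1/\poly(n)$, for an appropriate polynomial, which is exactly the claimed distinguishing advantage.

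The main obstacle is purely the first step: arguing that a cut with small boundary but large subsystems exists and that area-law scaling makes the entropy small there. This is precisely the content that separates area law from volume law, and it is what makes the statement geometry-dependent --- one needs $\B=O(\log n)$ for the advantage to remain $1/\poly(n)$ (and the same computation explains why a \emph{pseudo}-area law with a $\poly\log n$ factor only pushes the purity gap down to inverse quasipolynomial, evading poly-copy detection). The remaining ingredients --- the entropy-to-purity inequality and the SWAP-test acceptance formula --- are routine, and it is the linearity of the acceptance probability in $\Tr(\rho_{\mathsf{X}}^2)$ that lets the expectation over keys, rather than a high-probability statement, suffice.
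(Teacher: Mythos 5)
Your proposal is correct and follows essentially the same route as the paper: its distinguisher likewise traces out one side of a cut chosen to have boundary $\B = O(\log n)$ with super-logarithmic interiors (possible in any geometry), applies a SWAP test to $\rho_{\mathsf{X}}^{\otimes 2}$, and uses $\mathsf{S}_2(\rho) \le \mathsf{S}(\rho)$ (Jensen) to turn the area-law entropy bound into a $1/\poly(n)$ purity, against $2^{-\Omega(n)}$ purity for Haar-random states. Your refinements --- invoking the exact Haar expected purity $\frac{2^{|\mathsf{X}|}+2^{|\mathsf{Y}|}}{2^{n}+1}$ and noting that linearity of the SWAP acceptance in $\Tr(\rho_{\mathsf{X}}^2)$ lets the expectation over $k$ replace the paper's with-high-probability statements --- tidy the same argument rather than change it.
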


\begin{proof}
    The distinguisher $\mathcal{D}$ works as follows:
    \begin{itemize}
        \item Take two copies of the unknown state $\rho$. For each copy, parition the $n$ qubits into regions $(\mathsf{X}, \mathsf{Y})$ such that $|B|$ --- the size of the boundary of the bipartition --- is $\mathcal{O}(\log n)$ and the size of the interior of the two regions is $\omega(\log n)$. So, in other words, $|V|$, as defined in Definition \ref{volume-law}, is $\omega(\log n)$. Note that such a choice is possible for any spatial geometry. 
        \item Then, from both the copies, trace out all the qubits of one of the regions. Without loss of generality, let the region be $\mathsf{Y}$. Then, after this procedure, we are left with $\rho_{\mathsf{X}}^{\otimes 2}$.
        \item Apply the SWAP test to $\rho_{\mathsf{X}}^{\otimes 2}$.
    \end{itemize}
    When $\rho$ is a Haar random state, it follows volume--law scaling given by Definition \ref{volume-law}, which is a fact noted in previous works including \cite{Page_1993, Nahum_2017}.
    Then, according to the calculations in Proposition \ref{SWAPtest}, the SWAP test succeeds with probability
    \begin{equation}
\frac{1}{2} +\frac{1}{\omega(\text{poly}(n))},
    \end{equation}
    for any choice of $\text{poly}(n)$, with high probability over the choice of the state.
    When $\rho$ comes from $\{\ket{\psi_k}\}$, since it has area--law scaling of the form given in Definition \ref{area law}, by the same calculations the SWAP test succeeds with probability
    \begin{equation}
        \frac{1}{2} +\frac{1}{\text{poly}(n)},
    \end{equation}
    for some appropriate choice of $\text{poly}(n)$. Hence, overall we can get an inverse polynomial distinguishing bias, and the theorem follows.
\end{proof}

\subsection{Introducing quasi--area law states}

Even though area--law pseudorandom states are ruled out, in the sections below, we will outline the construction of quasi--area law pseudorandom states which are also pseudoentangled.

\begin{definition}[Quasi--area law entanglement]
\label{quasi--area law}
     An $n$--qubit state $\ket{\psi}$ is area--law entangled if for any choice of $(\mathsf{X}, \mathsf{Y})$,
     \begin{equation}
     \mathsf{S}(\rho_{\mathsf{X}:\mathsf{Y}}) = \mathcal{O}(\B \cdot \text{poly}\log n),
     \end{equation}
     where $\B$ is the size of the boundary of the partition between $\mathsf{X}$ and $\mathsf{Y}$.
\end{definition}

Finally, using a construction based on subset states, which may be of independent interest, we will construct pseudorandom pseudoentangled states that have entanglement entropy $\Theta(\text{poly}\log n)$ across every cut. Note that these states have optimally low entanglement entropy even when the size of the boundary $\B$ is extremely large --- for instance, $\B = \Omega(n)$ --- since there is no dependence of $\B$ in the scaling of the entanglement entropy. In this sense, these states are quasi--area law \emph{in the strongest possible way.}

\subsection{Quasi--area law pseudoentanglement across every cut}
We can generalize our construction in \ref{app:singlecutjls} to have pseudoentanglement across any cut, instead of a fixed bipartition. Additionally, considering two different spatial geometries, in $1$-D and $2$-D, we can construct quantum states that are quasi--area law pseudoentangled. A detailed discussion of this construction is relegated to the Appendix.

In the next section, we will construct a different pseudo-entangled state ensemble of independent interest which has optimally low pseudo--entanglement in the \emph{strongest} possible sense, regardless of the spatial geometry of the qubits.

\section{Pseudo-area law entangled PRS within the JLS phase state construction}
\label{app:arealawjls}
Our area--law construction will go through four steps.
\begin{itemize}
\item First, we state a a general property of matrices. 
\item Then, we use this property to construct a pseudorandom matrix, and a corresponding low entropy pseudorandom state in $1$--D. 
\item Then, we use entropy inequalities to show that our construction indeed follows area law. Specifically, we use the sub--additivity property of entanglement entropy.
\item Finally, we generalize our construction to $2$--D.
\end{itemize}
Finally, we will talk about efficiently preparing our state ensembles and then give a security proof using a series of hybrids.
\subsection{A property of matrices}
Consider a $2^{\m} \times 2^{\n}$ matrix $A$. Let the elements of $A$ be denoted by $A_{(\mathsf{i}, \mathsf{j})}$ where $\mathsf{i} \in \{0, 1\}^\m$, and $\mathsf{j} \in \{0, 1\}^\n$. Let $\mathsf{i_r}$ and $\mathsf{j_r}$ be the $\mathsf{r}^{\text{th}}$ bit of $\mathsf{i}$ and $\mathsf{j}$ respectively. A new matrix $A'$ is constructed as follows.
\begin{itemize}
\item For $\mathsf{t} \leq \mathsf{min}(\m, \n)$, consider a $2^{\m-\1} \times 2^{\n+\1}$ matrix $B$ whose elements $B_{(\mathsf{p}, \mathsf{q})}$ are given by
\begin{equation}
B_{(\mathsf{p}, \mathsf{q})} = A_{(\mathsf{p_1p_2 \cdots p_{\m-t}}\mathsf{q_1}\cdots \mathsf{q_{t}} ~\textbf{,}~ \mathsf{q_{t+1}}\cdots\mathsf{q_{\n}+t})},
\end{equation}
for $\mathsf{p} \in \{0, 1\}^{\m-\1}$ and $\mathsf{q} \in \{0, 1\}^{\n+\1}$.
\item Then, consider a $2^{\m-\1} \times 2^{\n+\1}$ matrix $B'$ , whose elements  $B'_{(\mathsf{r}, \mathsf{s})}$ are given by
\begin{equation}
B'_{(\mathsf{p}, \mathsf{q})} = B_{({g}\mathsf{(p)}, \mathsf{q})},
\end{equation}
where ${g}$ is a $k \rightarrow 1$ function. 
\item Finally, consider a $2^\m \times 2^\n$ matrix $A'$ whose elements $A'_{(\mathsf{i}, \mathsf{j})}$ are given by
\begin{equation}
A'_{(\mathsf{i}, \mathsf{j})} = B'_{(\mathsf{i_1 i_2 \cdots i_{\m-\1}} ~\textbf{,}~ \mathsf{i_{\m-\1+1} \cdots \mathsf{i_{\m}} \mathsf{j_1}\cdots j_{\n}})}
\end{equation}
\end{itemize}
Then, the following proposition holds.
\begin{proposition}
\label{proposition1}
$\mathsf{rank}(A') \leq \mathsf{rank}(A).$
\end{proposition}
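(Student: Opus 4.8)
The plan is to show that, although $A'$ is built through the intermediate matrices $B$ and $B'$ by a sequence of index reshapings and a row operation, the net effect of the whole construction is a single \emph{row operation} applied directly to $A$; once this is established the rank bound is immediate. So the first thing I would do is eliminate $B$ and $B'$ and express $A'$ purely in terms of $A$, and only afterwards invoke the elementary fact that a matrix whose rows all come from a fixed matrix cannot have larger rank than that matrix.

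Concretely, fix $\mathsf{i}\in\{0,1\}^{\m}$ and $\mathsf{j}\in\{0,1\}^{\n}$, and split $\mathsf{i}=\mathsf{u}\mathsf{w}$, where $\mathsf{u}\in\{0,1\}^{\m-\1}$ is the string of the first $\m-\1$ bits of $\mathsf{i}$ and $\mathsf{w}\in\{0,1\}^{\1}$ the last $\1$ bits. I would then trace the three defining equations in order. The definition of $A'$ reads $A'_{(\mathsf{i},\mathsf{j})}=B'_{(\mathsf{u},\,\mathsf{w}\mathsf{j})}$, since the column index of $B'$ used here is precisely $\mathsf{w}$ followed by $\mathsf{j}$. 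The definition of $B'$ then gives $B'_{(\mathsf{u},\,\mathsf{w}\mathsf{j})}=B_{(g(\mathsf{u}),\,\mathsf{w}\mathsf{j})}$. Finally, reading the column string $\mathsf{q}=\mathsf{w}\mathsf{j}$ back through the definition of $B$ — its first $\1$ bits $\mathsf{w}$ re-join the row of $A$, while its last $\n$ bits $\mathsf{j}$ form the column of $A$ — yields $B_{(g(\mathsf{u}),\,\mathsf{w}\mathsf{j})}=A_{(g(\mathsf{u})\mathsf{w},\,\mathsf{j})}$. Chaining these identities gives
\begin{equation}
A'_{(\mathsf{i},\mathsf{j})}=A_{(h(\mathsf{i}),\,\mathsf{j})},\qquad h(\mathsf{u}\mathsf{w})\coloneqq g(\mathsf{u})\mathsf{w},
\end{equation}
so that $h$ applies $g$ to the top $\m-\1$ bits of the row index and leaves the bottom $\1$ bits untouched.

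With this in hand the conclusion follows in one line: for each fixed $\mathsf{i}$, the $\mathsf{i}$-th row of $A'$ (as $\mathsf{j}$ ranges) is exactly the $h(\mathsf{i})$-th row of $A$, so every row of $A'$ is literally a row of $A$. Hence the row space of $A'$ is contained in the row space of $A$, and taking dimensions gives $\mathsf{rank}(A')\leq\mathsf{rank}(A)$. I expect the only real obstacle to be the index bookkeeping in the middle step — keeping straight which $\1$ bits the first reshape folds into the column and the third reshape unfolds back into the row — but this is routine rather than conceptual. As a consistency check, since $g$ is $k$-to-$1$ the induced map $h$ is $k$-to-$1$ as well (for each fixed $\mathsf{w}$), so $A'$ has at most $2^{\m}/k$ distinct rows; this is exactly the row-collapsing that later drives the entanglement reduction, and it is consistent with, though strictly weaker than, the bound proved here.
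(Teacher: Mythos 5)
Your proof is correct and takes essentially the same route as the paper's: the paper's entire argument is the observation that, by construction, every row of $A'$ is a row of $A$, and your explicit index-chasing identity $A'_{(\mathsf{i},\mathsf{j})}=A_{(h(\mathsf{i}),\mathsf{j})}$ with $h(\mathsf{u}\mathsf{w})=g(\mathsf{u})\mathsf{w}$ simply makes that claim rigorous before invoking the row-space containment. One minor quibble that does not affect the proof: your closing remark that the distinct-row bound $\mathsf{rank}(A')\leq 2^{\m}/k$ is ``strictly weaker'' than $\mathsf{rank}(A')\leq\mathsf{rank}(A)$ is imprecise, since the two bounds are incomparable in general.
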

\begin{proof}
Let there be $l$ linearly independent rows in $A$. By construction, the number of linearly independent rows in $A'$ is less than or equal to $l$, as every row of $A'$ is a row of $A$. Hence, the proposition follows.
\end{proof}

\subsection{Constructing the pseudorandom matrix}
\label{matrix}

In this section, we explicitly construct a special pseudorandom matrix of low entanglement entropy  in the following steps, which we will later harness to prepare area--law pseudoentangled states.

\subsubsection{Qualitative interpretation of the procedure}
Consider the qubits to be arranged in a line. Start somewhere in the line, and consider the pseudorandom matrix of that partition. Then, hash down its rank. Thereafter, iteratively move right, from the left, and consider the corresponding bipartitions. At each step, construct the pseudorandom matrix of that step, then "hash down" its rank. 

By Proposition \ref{proposition1}, this procedure does not result in an increase in rank during any of the iterations, and hence, does not increase the rank of the pseudorandom matrix we started from. So, for every such bipartition, the resultant pseudorandom matrix has low rank.

\subsubsection{The procedure}
\label{procedure}
The procedure goes through a series of steps.
\begin{itemize}
\item \textbf{Construct a pseudorandom matrix}: In this step, we construct a pseudorandom matrix $A$, which with respect to any possible cut has a high entanglement entropy with high probability.

Let $\mathsf{m_1} = n - \log^2 n$, and $\mathsf{m_2} = \log^2 n$. Let $A$ be a $2^{\mathsf{m_1}} \times 2^{\mathsf{m_2}}$ matrix with
\begin{equation}
A_{(\mathsf{i}, \mathsf{j})} = f(p(\mathsf{i}, \mathsf{j})),
\end{equation}
where $\mathsf{F}= \{f :[2^{\mathsf{m_1}}] \times [2^{\mathsf{m_2}}] \to \{1, -1 \} \}$ is a public $4$-wise independent function family, and $\mathsf{P}= \{p: [2^{\mathsf{m_1}}] \times [2^{\mathsf{m_2}}] \to [2^{\mathsf{m_1}}] \times [2^{\mathsf{m_2}}] \}$ is a public PRP family, and $f$ and $p$ are drawn from $\mathsf{F}$ and $\mathsf{P}$ respectively, uniformly at random. %Each member of both $\mathsf{F}$ and $\mathsf{P}$ has an efficient description. 

\item \textbf{Iteratively hash down the bond dimension}: Initialize a $2^{\mathsf{m_1}} \times 2^{\mathsf{m_2}}$ matrix $A'$, and as a base case, let $A' = A$. Set $\mathsf{t} = 0$ and $k = \log^2 n$. For each iteration $\mathsf{t}$ with $\mathsf{t} \leq n - 2\log^2 n$, we run the following procedure,

\begin{itemize}
\item  Construct a $2^{\m-\mathsf{t}} \times 2^{\n+\mathsf{t}}$ matrix $B$ whose elements are given by
\begin{equation}
B_{(\mathsf{p}, \mathsf{q})} = A'_{(\mathsf{p_1p_2 \cdots p_{\m-t}}\mathsf{q_1}\cdots \mathsf{q_{t}} ~\textbf{,}~ \mathsf{q_{t+1}}\cdots\mathsf{q_{\n}+t})},
\end{equation}
for $\mathsf{p} \in [2^{\m-\1}]$, and $\mathsf{q} \in [2^{\n+\1}]$.
\item Then, construct a matrix $B'$ such that
\begin{equation}
B'_{(\mathsf{p}, \mathsf{q})} = B_{(g_{\mathsf{t}}(\mathsf{p}), \mathsf{q})},
\end{equation}
where 
\begin{equation}
    g_{\mathsf{t}}(\mathsf{p}) = q_\mathsf{t}(r_\mathsf{t}(\mathsf{p}) ~\mathrm{mod} \, 2^k)
\end{equation}
with $\mathsf{Q}_\mathsf{t}= \{q_{\mathsf{t}} :[2^{\mathsf{m_1-t}}] \to [2^{\mathsf{m_1-t}}\}]$ and $\mathsf{R}_\mathsf{t} = \{r_\mathsf{t} :[2^{\mathsf{m_1-t}}] \to [2^{\mathsf{m_1-t}}] \}$ being two pseudorandom PRP families, and $q$ and $r$ are drawn from $\mathsf{Q}_{\mathsf{t}}$ and $\mathsf{R}_{\mathsf{t}}$ uniformly at random, respectively. Note that $g_{\mathsf{t}}$ is a $2^{\mathsf{m_1-t} - k} \rightarrow 1$ function.

\item Set 
\begin{equation}
\label{A-prime-update}
A'_{(\mathsf{i}, \mathsf{j})} = B'_{(\mathsf{i_1 i_2 \cdots i_{\m-\1}} ~\textbf{,}~ \mathsf{i_{\m-\1+1} \cdots \mathsf{i_{\m}} \mathsf{j_1}\cdots j_{\n}})},
\end{equation}
for $\mathsf{i} \in [2^{\mathsf{m_1}}]$, and $\mathsf{j} \in [2^{\mathsf{m_2}}]$.

\item Set $\mathsf{t} = \mathsf{t+1}$. 
\end{itemize}
\end{itemize}

The final output is the $2^{\mathsf{m_1}} \times 2^{\mathsf{m_2}}$ matrix $A'$ at the last iteration, which satisfies the following proposition.

\begin{theorem}
\label{security}
    $A'$ is a pseudorandom matrix.
\end{theorem}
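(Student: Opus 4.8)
The plan is to prove Theorem~\ref{security} by a hybrid argument that replaces all the cryptographic primitives (the $4$-wise independent $f$, the pseudorandom permutation $p$, and the pseudorandom permutations $q_{\mathsf{t}}, r_{\mathsf{t}}$ used in each of the $n - 2\log^2 n$ rounds) with their truly random counterparts, and then appeals to the oracle-indistinguishability machinery already developed in Appendix~\ref{app:singlecutjls}. The key observation is that the entire construction of $A'$ is an efficiently computable function of these primitives: given oracle access to $f, p$ and to each $(q_{\mathsf{t}}, r_{\mathsf{t}})$, any entry $A'_{(\mathsf{i},\mathsf{j})}$ can be computed with a constant number of oracle queries by tracing through the reindexing and row-composition steps of Section~\ref{procedure}. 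Hence any adversary distinguishing $A'$ from a truly random matrix, when given black-box access to the matrix entries, yields an adversary distinguishing the underlying primitive tuple from its random counterpart.

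First I would set up the base case: by Theorem~\ref{putting it together} (equivalently Lemma~\ref{lem:4wiseindependent}), the starting matrix $A$ with $A_{(\mathsf{i},\mathsf{j})} = f(p(\mathsf{i},\mathsf{j}))$ is oracle-indistinguishable from a truly random $\{1,-1\}$-valued matrix. Then I would run an induction on the round index $\mathsf{t}$. At each round the map $A' \mapsto A'$ (the update in \eqref{A-prime-update}) is precisely a row composition of the reindexed matrix with the $2^{\mathsf{m_1-t}-k}\to 1$ function $g_{\mathsf{t}}(\mathsf{p}) = q_{\mathsf{t}}(r_{\mathsf{t}}(\mathsf{p}) ~\mathrm{mod}~2^k)$. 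The theorem in Section~\ref{subsection: low entropy} (the row-composition theorem) already establishes that composing a pseudorandom matrix with $f(g(\cdot)~\mathrm{mod}~2^k)$ yields a matrix oracle-indistinguishable from random, and the theorem in Section~\ref{security tunable} shows $q(h(\cdot)~\mathrm{mod}~2^k)$ is indistinguishable from the small-range / $f(g(\cdot)~\mathrm{mod}~2^k)$ construction for $k=\omega(\log n)$. So each single round preserves pseudorandomness, provided the reindexing (the bit-permutation relabeling of rows and columns between $\m, \n$ and $\m-\1, \n+\1$) is accounted for — this is a deterministic public bijection on indices and hence preserves the uniform distribution on truly random matrices and is simulable by the reduction.

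The main work is then a careful telescoping of these per-round indistinguishability claims across all $n - 2\log^2 n$ rounds. Because each round invokes \emph{fresh} independent permutations $q_{\mathsf{t}}, r_{\mathsf{t}}$, the hybrids are clean: I would define $\text{Hyb}_{\mathsf{t}}$ to be the distribution where the first $\mathsf{t}$ rounds use genuine PRPs and the remaining rounds use truly random functions, show consecutive hybrids are indistinguishable by the row-composition argument, and sum the distinguishing advantages. The subtlety I expect to be the principal obstacle is controlling the accumulated advantage: with polynomially many (here $n - 2\log^2 n = \Theta(n)$) hybrids, each contributing a negligible distinguishing advantage, the total remains negligible — but one must verify that the reductions remain efficient even though intermediate hybrids reference truly random functions, which are themselves inefficient to sample. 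As in the proofs of Section~\ref{subsection: low entropy}, this is handled by Corollary~2.2 of \cite{Zha21}, simulating each truly random oracle with a $2q$-wise independent function whenever the adversary makes at most $q = \poly(n)$ queries. The second delicate point is confirming that the reindexing maps do not interact badly across rounds — i.e.\ that the row-subsampling in round $\mathsf{t}$ commutes appropriately with the bit-relabeling so that each round genuinely presents a row composition of a (still pseudorandom) matrix with an independent small-range function, rather than coupling the randomness of different rounds. Once these bookkeeping issues are discharged, the chain $A \approx \text{Hyb}_0 \approx \cdots \approx \text{Hyb}_{n-2\log^2 n} \approx R$ establishes that $A'$ is oracle-indistinguishable from a truly random matrix, completing the proof.
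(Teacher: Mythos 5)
Your proposal is correct and follows essentially the same route as the paper's own (very terse) proof: a hybrid argument over the $\Theta(n)$ hash-down rounds, with the base case given by the pseudorandomness of $A = f(p(\cdot,\cdot))$ and each round handled by the row-composition and small-range indistinguishability results of the appendix. Your write-up is in fact more careful than the paper's two-sentence argument, since you explicitly discharge the points the paper leaves implicit — the public reindexing bijections, the simulation of truly random oracles in intermediate hybrids via $2q$-wise independent functions, and the summation of per-round advantages over polynomially many hybrids.
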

\begin{proof}
%\textcolor{red}{[Write a security proof using hybrids.]}
For any iteration $0\leq t\leq n-2\log^2 n$, we use $A'_{\mathsf{t}}$ to denote the matrix obtained in Eq.~\ref{A-prime-update}. Since $q_{\mathsf{t}}$ and $r_{\mathsf{t}}$ are both pseudorandom functions, there is no efficient quantum algorithm can distinguish between $A'_{\mathsf{t}}$ and $A'_{\mathsf{t+1}}$. 

Given that the number of steps $\mathsf{t}_{\max}\leq n-2\log^2 n=\poly(n)$, we know, using a hybrid argument, that there is no efficient quantum algorithm that can distinguish between $A'_{\mathsf{t}_{\max}}$ and and $A'_{0} = A$, which is a pseudorandom matrix, by construction. Hence, $A' = A'_{\mathsf{t}_{\mathsf{max}}}$ is also a pseudorandom matrix.
\end{proof}

\begin{proposition}
\label{second proposition}
$\mathsf{rank}(A') \leq 2^{\log^2 n}$.
\end{proposition}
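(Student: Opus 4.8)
The plan is to track, across the iterations of the construction, the rank of the reshaping of the current matrix $A'_t$ along \emph{every} bipartition (``cut'') of the $n$ qubits, and to show that once a cut has been processed its rank stays capped at $2^k = 2^{\log^2 n}$. For a cut at position $\ell$, write $M^{(\ell)}(A'_t)$ for the $2^\ell \times 2^{n-\ell}$ matrix obtained by regrouping the first $\ell$ index bits as rows; the Schmidt rank of $\ket{\psi_{A'_t}}$ across that cut equals $\mathsf{rank}(M^{(\ell)}(A'_t))$. The literal statement $\mathsf{rank}(A') \le 2^{\log^2 n}$ is the case $\ell = m_1$ (and is in any case immediate, since $A'$ has only $2^{m_2} = 2^{\log^2 n}$ columns); the content actually needed for the subsequent area-law argument is the bound for all $\ell$, which I would prove.

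First I would record the effect of a single iteration. Unwinding the three reshaping/subsampling lines of the procedure, the update at iteration $t$ is purely a row operation on $A'_t$: one checks that
\[
(A'_{t+1})_{\mathsf{i},\mathsf{j}} = (A'_t)_{(g_t(\mathsf{i}_{1..m_1-t}),\,\mathsf{i}_{m_1-t+1..m_1}),\,\mathsf{j}} ,
\]
so every row of $A'_{t+1}$ is literally a row of $A'_t$, with the first $m_1-t$ bits of its index pushed through $g_t$. Moreover $B = M^{(m_1-t)}(A'_t)$ and $B' = M^{(m_1-t)}(A'_{t+1})$ exactly. Since $g_t$ has image of size $2^k$, the matrix $B'$ has at most $2^k$ distinct rows, hence $\mathsf{rank}(M^{(m_1-t)}(A'_{t+1})) \le 2^k$. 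This is the base bound for the cut at $\ell = m_1-t$ at the moment it is processed.

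Next I would argue, by downward-in-$\ell$ induction, that no later iteration can undo this. Fix a cut $\ell = m_1 - t_0$ processed at iteration $t_0$, and a later iteration $t' > t_0$. In the reshaping $M^{(\ell)}$ the row index is $\mathsf{i}_{1..\ell}$, while iteration $t'$ only modifies the bits $\mathsf{i}_{1..m_1-t'}$; since $t' > t_0$ we have $m_1 - t' < \ell$, so those bits lie entirely inside the row block of the cut. Therefore the iteration-$t'$ update, viewed through $M^{(\ell)}$, is again a row operation (each row of $M^{(\ell)}(A'_{t'+1})$ is a row of $M^{(\ell)}(A'_{t'})$) — precisely the hypothesis of Proposition~\ref{proposition1}. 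Hence $\mathsf{rank}(M^{(\ell)}(A'_{t'+1})) \le \mathsf{rank}(M^{(\ell)}(A'_{t'}))$, and the cut-$\ell$ rank never increases. Combining with the base bound gives $\mathsf{rank}(M^{(\ell)}) \le 2^k$ for every cut $\ell \in [\log^2 n,\, n-\log^2 n]$ of the final matrix; cuts with $\ell < \log^2 n$ or $\ell > n - \log^2 n$ are bounded by $2^{\min(\ell,\,n-\ell)} \le 2^{\log^2 n}$ trivially. Taking $\ell = m_1$ yields the proposition.

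The main obstacle — and the one step requiring care — is the \emph{direction} of the induction together with the verification that later subsampling acts strictly within the row block of an already-processed cut. Because the procedure sweeps the cut leftward ($\ell = m_1 - t$ decreasing), any cut handled early sits to the right of every subsequently modified bit, which is exactly what makes each later update a rank-non-increasing row operation for it; had the cuts been visited in the opposite order, a later subsampling would straddle the row and column blocks of an earlier cut and Proposition~\ref{proposition1} would not apply. I would therefore make the inequality $m_1 - t' < \ell$ explicit and tie it directly to the hypotheses of Proposition~\ref{proposition1}.
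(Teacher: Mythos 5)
Your proof is correct, and for the literal statement it coincides with the paper's one-line argument: the paper combines Proposition~\ref{proposition1} (each iteration only subsamples rows, hence never increases rank) with $\mathsf{rank}(A)\leq 2^{\log^2 n}$, which, as you observe, is anyway forced by the column count $2^{\mathsf{m_2}}=2^{\log^2 n}$, so the stated inequality is essentially immediate on both routes. Where you genuinely go beyond the paper is the all-cuts strengthening. The paper's proof of Proposition~\ref{second proposition} says nothing about the reshaped matrices $M^{(\ell)}$, yet the later Theorem~\ref{preparation} (the bound $\mathsf{S}(\rho_{\mathsf{X}:\mathsf{Y}})=\mathcal{O}(\log^2 n)$ for \emph{every} contiguous cut) silently relies on exactly what you prove, namely $\mathsf{rank}\big(M^{(\ell)}(A')\big)\leq 2^{\log^2 n}$ for all $\ell$, hiding the argument behind the phrase ``by virtue of construction.'' Your two pieces of bookkeeping supply that missing content: first, that the iteration-$t$ update is precisely a row operation on $M^{(\mathsf{m_1}-t)}$, leaving at most $2^k$ distinct rows at the moment the cut is processed; second, that any later iteration $t'>t_0$ modifies only the bits $1,\dots,\mathsf{m_1}-t'$ with $\mathsf{m_1}-t'<\ell$, so viewed through $M^{(\ell)}$ it is again a row operation and Proposition~\ref{proposition1} applies, meaning the cut-$\ell$ rank can never be resurrected. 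Your emphasis on the sweep direction ($\ell=\mathsf{m_1}-t$ decreasing) is exactly the right structural point: with the opposite ordering a later subsampling would straddle the row and column blocks of an earlier cut and the row-operation hypothesis would fail. In short, you use the same mechanism as the paper for the stated inequality, but your writeup proves the stronger per-cut monotonicity statement that Theorem~\ref{preparation} and Corollaries~\ref{area law} and~\ref{area law2} actually need, which is a worthwhile tightening of the paper's exposition.
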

\begin{proof}
Follows from the inequality in Proposition \ref{proposition1} and the fact that, by virtue of construction,
\begin{equation}
    \mathsf{rank}(A) \leq 2^{\log^2 n}.
\end{equation}
\end{proof}
\subsection{Constructing the pseudorandom quantum state}
First, let us establish some notations that will make the statement of our final theorem compact. 
\subsubsection{Notations}
Let $A'$ be the matrix obtained from the procedure in Section \ref{matrix}. Let us write
\begin{equation}
\label{area law: notations}
    A' = f\left(A, g_1, g_2, \ldots, g_{n - \log^2 n}\right),
\end{equation}
where each $g_i$ is a quantum secure $2^{k}$ to $1$ function, with $k = n - \log^2 n$, and $A$ is as defined in Section \ref{matrix}. The main observation from Section \ref{matrix} is that $f$ can be implemented efficiently.

Now, consider an $n$--qubit quantum state given by:
\begin{equation}\label{eqn:1d-pseudoentanglement-state}
|\psi_{A'} \rangle = \frac{1}{\sqrt{2^{n}}} \sum_{i \in \{0, 1\}^{\mathsf{m_2}}, j \in \{0, 1\}^{\mathsf{m_1}}} (-1)^{A'(i, j)} \ket{i, j},
\end{equation}
with $\mathsf{m_1} = n - \log^2 n$, and $\mathsf{m_2} = \log^2 n$. When $A'$ is picked as in Eq.~\eqref{area law: notations}, Eq.~\eqref{eqn:1d-pseudoentanglement-state} implicitly defines an ensemble of quantum states.

Now, we are ready to state our main result.
\subsubsection{Proof of pseudorandomness}

\begin{theorem}
\label{preparation}
With high probability, the state
\begin{equation}
    |\psi_{A'} \rangle = \frac{1}{\sqrt{2^n}} \sum_{i \in \{0, 1\}^{\mathsf{m_2}}, j \in \{0, 1\}^{\mathsf{m_1}}} (-1)^{A'(i, j)} \ket{i, j},
\end{equation}
with $\mathsf{m_1} = n - \log^2 n$ and $\mathsf{m_2} = \log^2 n$, is a pseudorandom quantum state satisfying
\begin{equation}
S(\rho_{\mathsf{X}:\mathsf{Y}}) = \mathcal{O}(\poly\log n),
\end{equation}
for any contiguous partition $(\mathsf{X, Y})$ of $n$--qubits. 

The key $\mathsf{K}$ is a description of a matrix $A$ and quantum secure $2^{k}$-to-$1$ functions $\{g_1, \ldots, g_{n - \log^2 n}\}$, for $k = n- \log^2 n$, such that Eq.~\ref{area law: notations} holds.
\end{theorem}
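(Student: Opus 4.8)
The plan is to split the statement into three independent claims and verify each in turn: (i) $\ket{\psi_{A'}}$ is a pseudorandom state, (ii) its entanglement entropy across every contiguous cut is $O(\poly\log n)$, and (iii) it is efficiently preparable from the key $\mathsf{K}$. For (i) I would simply invoke Theorem~\ref{security}, which already establishes that $A'$ is oracle-indistinguishable from a truly random $\{\pm1\}$-matrix, and then apply the generic reduction of Section~\ref{construction: pseudoentangled states}. Concretely, given black-box access to any matrix $M$ one can prepare polynomially many copies of the phase state $\ket{\psi_M}$ by phase kickback on the uniform superposition; hence a distinguisher between poly-many copies of $\ket{\psi_{A'}}$ and of $\ket{\psi_R}$ (with $R$ truly random) would immediately give a distinguisher between oracle access to $A'$ and to $R$, contradicting Theorem~\ref{security}. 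Since poly-many copies of $\ket{\psi_R}$ are statistically close to the Haar measure \cite{Ji2018,brakerski2019pseudo}, pseudorandomness of $\ket{\psi_{A'}}$ follows.

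For (ii), recall from \eqref{rank and entropy} that across a cut the reduced density matrix of a phase state equals $\frac{1}{2^n}MM^{\mathsf{T}}$, where $M$ is the phase truth table reshaped so that its rows are indexed by the left block of the cut, so $\mathsf{S}(\rho_{\mathsf{X}:\mathsf{Y}})\le\log\mathrm{rank}(M)$. I would then split the contiguous cuts into two cases. For any cut with fewer than $\log^2 n$ qubits on one side the Schmidt rank is at most the dimension $2^{\log^2 n}$ of that small block, so the bound is immediate. For a cut at position $c$ with $\log^2 n\le c\le n-\log^2 n$, the iteration at step $t=\mathsf{m_1}-c$ reshapes the current matrix precisely along cut $c$ and row-hashes via $g_t$, whose image has size $2^{\log^2 n}$; thus right after step $t$ the rank across cut $c$ is at most $2^{\log^2 n}$. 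The crux is to show this stays low under all subsequent iterations. Because the procedure processes cuts from right to left, every later step $t'>t$ hashes a strictly shorter prefix, of length $\mathsf{m_1}-t'<c$, which is wholly contained in the left block of cut $c$; consequently each new row of the reshaping for cut $c$ is a relabelled old row, so exactly as in Proposition~\ref{proposition1} the rank across cut $c$ cannot increase. Hence the final $A'$ has rank at most $2^{\log^2 n}$ across every contiguous cut, giving $\mathsf{S}(\rho_{\mathsf{X}:\mathsf{Y}})=O(\log^2 n)=O(\poly\log n)$.

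Finally, for (iii) the key $\mathsf{K}$ provides efficient descriptions of the $4$-wise independent function and PRP defining $A$ together with the PRPs defining each $g_t$; since evaluating $A'$ at a point amounts to applying the $\poly(n)$ prefix-relabellings in reverse and then querying $A$, the phase function is classically computable in polynomial time, so the phase oracle for $A'$ is efficient and $\ket{\psi_{A'}}$ is produced by phase-querying it on the uniform superposition. I expect the monotonicity step in (ii) to be the main obstacle: Proposition~\ref{proposition1} as stated bounds the rank only with respect to the single reshaping used to define the operation, so the argument must carefully verify that hashing a short prefix leaves every row of the reshaping associated with any cut to its right fixed up to relabelling, and that the right-to-left processing order indeed places every already-hashed cut to the right of all later operations. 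Getting this prefix-containment and ordering bookkeeping exactly right, simultaneously across all $\poly(n)$ cuts, is the delicate part; by contrast the pseudorandomness and preparation claims are comparatively routine given the earlier sections.
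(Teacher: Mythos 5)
Your proposal is correct and takes essentially the same route as the paper: the paper's proof likewise splits into efficient preparation from the key $\mathsf{K}$, security via Theorem~\ref{security} (pseudorandomness of the matrix $A'$ plus the generic black-box reduction from phase states to matrix oracles), and the entropy bound via the rank argument of Propositions~\ref{proposition1} and~\ref{second proposition} combined with the rank--entropy inequality \eqref{rank and entropy}. The prefix-containment and ordering bookkeeping you correctly identify as the delicate point is exactly what the paper compresses into Proposition~\ref{proposition1} and its informal ``iteratively hash down'' discussion, so your write-up is a more explicit rendering of the same argument rather than a different one.
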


\begin{proof}
First, let us prove that $\ket{\psi_{A'}}$ is a valid pseudorandom state ensemble. The proof has two parts:
\begin{itemize}
    \item \textbf{Efficient preparation}: When given the key $\mathsf{K}$, using the recipe in Section \ref{matrix}, $A'$ can be constructed efficiently. Then, by constructing an efficient phase oracle for $A'$, the state
    \begin{equation}
       |\psi_{A'} \rangle = \frac{1}{\sqrt{2^{n}}} \sum_{i \in \{0, 1\}^{\mathsf{m_2}}, j \in \{0, 1\}^{\mathsf{m_1}}} (-1)^{A'(i, j)} \ket{i, j},
    \end{equation}
    can be prepared efficiently by preparing an equal superposition state and querying the phase oracle.
    \item \textbf{Security}: By the hybrid argument in Section \ref{security}, $A'$ is a valid pseudorandom matrix. Hence, the corresponding phase state is a pseudorandom quantum state.
\end{itemize}
Finally, the scaling of the entanglement entropy follows from the fact that, by virtue of construction and by Proposition \ref{proposition1} and \ref{second proposition}, for any contiguous cut $(\mathsf{X}, \mathsf{Y})$,
\begin{equation}
    \mathsf{S}(\rho_{\mathsf{X}:\mathsf{Y}}) \leq \mathcal{O}(\log(\mathsf{rank}(A'))) = \mathcal{O}(\log^2 n),
\end{equation}
where the second inequality follows from Proposition \ref{second proposition}.
\end{proof}

\begin{remark}
    Quasi--area law pseudorandom states satisfying $$\mathsf{S}(\rho_{\mathsf{X}:\mathsf{Y}}) = \mathcal{O}(f(n)),$$
for any contiguous partition $(\mathsf{X, Y})$ and any function $f(n) = \omega(\log n)$ can be constructed and proven to be secure in the same way as in Section \ref{procedure} and Theorem \ref{preparation}, by appropriately tweaking the parameters.
\end{remark}

\subsection{Proving $1$--D quasi--area law}
\label{1D area law}
We are ready to prove that the pseudorandom state we just constructed follows quasi--area law in $1$--D, in terms of its entanglement scaling. 
%obtained using the procedure Section \ref{matrix}.
Then the following theorem holds:
\begin{corollary}
\label{area law}
Let $|\psi_{A'}\rangle$ be the pseudorandom state from Theorem \ref{preparation}, where the $n$ qubits are arranged along a $1$-D line. Then, for any partition $(\mathsf{X}, \mathsf{Y})$ of the $n$--qubits,
\begin{equation}
\mathsf{S}\left(\rho_{\mathsf{X} : \mathsf{Y}}\right) = \mathcal{O}(\B \cdot \poly\log n),
\end{equation}
where $\B$ is the size of the boundary of the partition.
\end{corollary}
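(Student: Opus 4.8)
The plan is to reduce the general, possibly disconnected, partition $(\mathsf{X}, \mathsf{Y})$ to a union of contiguous intervals and then combine subadditivity of the von Neumann entropy with the per-cut bound already established in Theorem~\ref{preparation}. First I would observe that, since the $n$ qubits lie on a line, the boundary of the partition consists of exactly $\B$ domain walls --- adjacent pairs $(i,i+1)$ that straddle $\mathsf{X}$ and $\mathsf{Y}$ --- and these domain walls carve the line into $\B+1$ maximal contiguous runs that alternate between lying entirely in $\mathsf{X}$ and entirely in $\mathsf{Y}$. In particular the region $\mathsf{X}$ decomposes as a disjoint union $\mathsf{X} = I_1 \sqcup \cdots \sqcup I_m$ of $m = \mathcal{O}(\B)$ contiguous intervals (the case $\B = 0$ being trivial, as then $\mathsf{X}$ is empty or the whole line).

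Next I would invoke subadditivity of the von Neumann entropy, $\mathsf{S}(\rho_{A\cup B}) \le \mathsf{S}(\rho_A) + \mathsf{S}(\rho_B)$, iterated over the intervals, to obtain $\mathsf{S}(\rho_{\mathsf{X}:\mathsf{Y}}) = \mathsf{S}(\rho_{\mathsf{X}}) \le \sum_{j=1}^m \mathsf{S}(\rho_{I_j})$. Each $I_j$ is a contiguous interval, so $\rho_{I_j}$ is exactly a reduced density matrix of $\ket{\psi_{A'}}$ across a contiguous cut, and Theorem~\ref{preparation} guarantees $\mathsf{S}(\rho_{I_j}) = \mathcal{O}(\poly\log n)$ for every such cut. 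Summing over the $m = \mathcal{O}(\B)$ intervals then yields $\mathsf{S}(\rho_{\mathsf{X}:\mathsf{Y}}) \le \B\cdot\mathcal{O}(\poly\log n) = \mathcal{O}(\B\cdot\poly\log n)$, which is the claim.

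The one subtlety I expect to have to address carefully is the per-interval bound for intervals that sit in the interior of the line rather than at an endpoint. The iterative hashing construction of Section~\ref{procedure} directly controls the Schmidt rank across each single prefix/suffix cut, bounding it by $2^{\log^2 n}$ via Proposition~\ref{second proposition}; an interior interval $I_j = \{a,\ldots,b\}$ is separated from its complement by two such cuts, so its reduced rank is at most the product of the two single-cut ranks, namely $(2^{\log^2 n})^2 = 2^{2\log^2 n}$. Hence $\mathsf{S}(\rho_{I_j}) \le \log\big(\mathsf{rank}(\rho_{I_j})\big) = \mathcal{O}(\log^2 n)$ still holds, which is all the summation requires. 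This is precisely the standard matrix-product-state fact that a bond-dimension $\chi$ state has reduced rank at most $\chi^2$ on any contiguous block, and it is the only place where I need to be slightly more careful than quoting the contiguous-cut statement of Theorem~\ref{preparation} verbatim.
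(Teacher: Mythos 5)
Your proof is correct and takes essentially the same route as the paper's: decompose $\mathsf{X}$ into $\mathcal{O}(\B)$ maximal contiguous intervals, apply subadditivity of the von Neumann entropy, and invoke the per-interval $\mathcal{O}(\poly\log n)$ bound from Theorem~\ref{preparation}. Your explicit handling of interior intervals --- bounding $\mathsf{rank}(\rho_{I_j})$ by the product $2^{2\log^2 n}$ of the two single-cut Schmidt ranks, since the iterative hashing of Section~\ref{procedure} directly controls only prefix/suffix cuts --- makes rigorous a step that the paper's one-line proof leaves implicit, and is a welcome addition rather than a deviation.
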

\begin{proof}
If the size of the boundary is $\B$, then by the subadditivity of the entanglement entropy, $\mathsf{S}\left(\rho_{\mathsf{X} : \mathsf{Y}}\right)$ is upper bounded by a summation of $\B$ terms, each a entanglement entropy of a contiguous partition of $n$--qubits. By Theorem \ref{preparation}, the entanglement entropy of each of these contiguous partitions is $\mathcal{O}(\poly\log n)$. 
\end{proof}

\subsection{Generalizing the construction for $2$--D}
\label{2D area law}
Consider a $\sqrt{n} \times \sqrt{n}$ $2$--D grid of qubits. As shown in Figure \ref{snake}, we fill the $2$--D grid as a $1$--D ``snake". 

\begin{corollary}
\label{area law2}
Let $|\psi_{A'}\rangle$ be the pseudorandom state from Theorem \ref{preparation}, where the $n$ qubits are arranged as a "snake" on a $2$--D grid, as shown in Figure \ref{snake}. Then, for any partition $(\mathsf{X}, \mathsf{Y})$ of the $n$--qubits,
\begin{equation}
\mathsf{S}\left(\rho_{\mathsf{X} : \mathsf{Y}}\right) = \mathcal{O}(\B \cdot \poly\log n),
\end{equation}
where $\B$ is the size of the boundary of the partition.
\end{corollary}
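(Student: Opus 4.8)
The plan is to reduce the 2-D statement directly to the 1-D result (Corollary~\ref{area law}) by exploiting the ``snake'' ordering of the qubits on the $\sqrt{n}\times\sqrt{n}$ grid. The key observation is that under the snake embedding, the qubits are linearly ordered, and the state $\ket{\psi_{A'}}$ is exactly the 1-D pseudorandom state of Theorem~\ref{preparation} with respect to that linear order. Thus I already know from Theorem~\ref{preparation} that for \emph{any contiguous} cut of the snake-line, the entanglement entropy is $\mathcal{O}(\poly\log n)$. The only new content is to translate a cut in the 2-D grid geometry into a bound involving the 2-D boundary size $\mathsf{B}$.

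First I would fix an arbitrary partition $(\mathsf{X},\mathsf{Y})$ of the $n$ qubits in the 2-D grid. Walking along the snake from one end to the other, the qubits alternate between being in $\mathsf{X}$ and in $\mathsf{Y}$; each maximal run of consecutive same-region qubits is a contiguous interval of the snake-line. The partition $(\mathsf{X},\mathsf{Y})$ therefore decomposes the line into some number $c$ of contiguous blocks, where a new block begins exactly at each point where the snake crosses between $\mathsf{X}$ and $\mathsf{Y}$. The central step is to bound $c$ by $\mathcal{O}(\mathsf{B})$: every such crossing point corresponds to an adjacent pair of grid qubits lying on opposite sides of the partition, i.e. to an edge of the boundary between $\mathsf{X}$ and $\mathsf{Y}$ (up to the constant-factor overhead from the ``turns'' of the snake at the ends of each row, which add at most $O(\sqrt n)$ but can be absorbed, or handled by noting each row transition is itself charged to the boundary). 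Hence the number of contiguous-interval boundaries along the line is $\mathcal{O}(\mathsf{B})$.

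Once the line is cut into $\mathcal{O}(\mathsf{B})$ contiguous blocks, I would apply subadditivity of the von Neumann entropy exactly as in the proof of Corollary~\ref{area law}. Writing $\mathsf{X}$ as a disjoint union of its contiguous blocks $\mathsf{X}_1,\ldots,\mathsf{X}_c$ along the snake, subadditivity gives
\begin{equation}
\mathsf{S}\left(\rho_{\mathsf{X}:\mathsf{Y}}\right)\leq\sum_{i=1}^{c}\mathsf{S}\left(\rho_{\mathsf{X}_i}\right),
\end{equation}
and each $\mathsf{X}_i$ is a contiguous interval of the linearly ordered qubits, so by Theorem~\ref{preparation} each term is $\mathcal{O}(\poly\log n)$. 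Combining with $c=\mathcal{O}(\mathsf{B})$ yields $\mathsf{S}(\rho_{\mathsf{X}:\mathsf{Y}})=\mathcal{O}(\mathsf{B}\cdot\poly\log n)$, as desired.

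The main obstacle I anticipate is the precise accounting in the crossing-count step: making rigorous that the number of contiguous blocks induced on the snake-line is genuinely $\mathcal{O}(\mathsf{B})$ where $\mathsf{B}$ is the 2-D boundary (number of grid edges between $\mathsf{X}$ and $\mathsf{Y}$), rather than some larger quantity. The subtlety is that a single horizontal grid boundary could in principle be crossed by the snake multiple times, and the snake's turning points at row ends introduce line-adjacencies between qubits that are not grid-adjacent. I would address this by observing that each row of the snake contributes at most a constant number of additional block boundaries per boundary edge it crosses, and that the vertical adjacencies preserved by the snake layout already capture the grid boundary up to a constant factor; a careful edge-charging argument shows $c=\Theta(\mathsf{B})$ up to lower-order $O(\sqrt n)$ turn terms that are dominated by $\mathsf{B}\cdot\poly\log n$ whenever $\mathsf{B}\geq 1$.
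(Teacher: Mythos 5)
Your proposal is correct and follows essentially the same route as the paper's own proof: count the $\mathcal{O}(\mathsf{B})$ places where the snake crosses the partition, decompose $\mathsf{X}$ into that many contiguous intervals of the 1-D ordering, and apply subadditivity together with the per-interval $\mathcal{O}(\poly\log n)$ bound from Theorem~\ref{preparation}. Your extra care about the turn points is in fact unnecessary (in the boustrophedon layout consecutive snake qubits are always grid-adjacent, so every block boundary is charged to a genuine boundary edge), but it only makes explicit a step the paper states without elaboration.
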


\begin{proof}
    Because of the geometry of the construction, a boundary of size $\B$ cuts the snake in $\mathcal{O}(\B)$ places. Each such cut defines a contiguous partition of the $n$ qubits on a $1$--D line. As in Corollary \ref{area law}, the proof then follows from the subadditivity of entanglement entropy and Theorem \ref{preparation}.
\end{proof}

\subsection{A discussion on entanglement scaling}
Note that in both Corollary \ref{area law} and Corollary \ref{area law2}, the size of the boundary $\B$ could be $\mathcal{O}(\log n)$. However, the lower bound on $\mathsf{S}\left(\rho_{\mathsf{X} : \mathsf{Y}}\right)$ is $\omega(\log n)$, by a SWAP test of the form in Proposition \ref{SWAPtest}. Hence, these constructions may have a large gap between the upper and lower bounds and, in general, the lower bound is not saturated for every cut.

In the next section, we will see a new construction where we saturate the SWAP test lower bound, for \emph{every} cut, for any spatial dimension! As we discussed before, in this sense, these states are quasi--area law in the strongest possible way.

\begin{figure}
\label{snake}
\begin{centering}
    \includegraphics[width=0.35\textwidth]{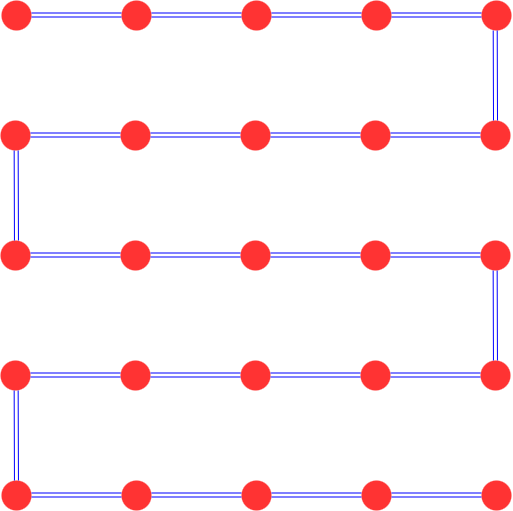}
    \caption{The blue line is how we ``snake through" a $2$--D grid.}
    \end{centering}
\end{figure}

\section{Proof that the \cite{gheorghiu2020estimating} pseudoentanglement construction is not a PRS}
\label{appendix:GHnotPRS}

\subsection{The \cite{gheorghiu2020estimating}  construction based on Extended Trapdoor Claw-Free Functions (ETCFs)}
Let
\begin{align}
    b &\in \{0, 1\}, \\
    x &\in \mathbb{Z}_{q}^{n},  \\
    s &\in \mathbb{Z}_{q}^{n}, \\
    A &\in \mathbb{Z}_{q}^{m \times n}, \\
    e_u &\in \mathbb{Z}_{q}^{m}, \\
    e' &\in \mathbb{Z}_{q}^{m} \\
    u &\in \mathbb{Z}_{q}^{m}.
\end{align}
The function \text{Gaussify}, implementable by a log depth circuit \cite{gheorghiu2020estimating}, takes $e_u$ and outputs a sample $e$ according to the truncated Gaussian distribution whose probability density function is given by
\begin{equation}
    D_{\mathbb{Z}_q, B}(x) = \frac{e^{\frac{- \pi ||x||^{2}}{B^{2}}}}{\sum_{x \in \mathbb{Z}_q^{n}, ||x|| \leq B} e^{\frac{- \pi ||x||^{2}}{B^{2}}}}.
\end{equation}
$m$ is polynomially larger than $n$ and the exponent $q$ is superpolynomially larger than $n$. For a public matrix $A$, consider the function given by \cite{brazerski, mahadev, gheorghiu2020estimating}.
\begin{equation}
    C_f(b, x, e_u) = Ax + b \cdot u + \text{Gaussify}(e_u) ~~(\text{mod}~~ q),
\end{equation}

\begin{equation}
    C_g(b, x, e_u) = Ax + b \cdot (As + e') + \text{Gaussify}(e_u) ~~(\text{mod}~~ q).
\end{equation}
$s, u,$ and $e'$ are fixed but not public. These are called extended trapdoor claw-free functions, which were studied in \cite{brazerski,  mahadev,  gheorghiu2020estimating}. Consider the following two states (normalization omitted.)
\begin{equation}
\label{first}
    \ket{\psi_{\text{C}_f}}_{A, s, e', u} \sim \sum_{x \in \mathbb{Z}_{q}^{n}, e_u \in \mathbb{Z}_{q}^{m}, b \in \{0, 1\}} \ket{b, x, e_u} \ket{C_f(b, x, e_u)},
\end{equation}
\begin{equation}
\label{second}
    \ket{\psi_{\text{C}_g}}_{A, s, e', u}  \sim \sum_{x \in \mathbb{Z}_{q}^{n}, e_u \in \mathbb{Z}_{q}^{m}, b \in \{0, 1\}} \ket{b, x, e_u} \ket{C_g(b, x, e_u)},
\end{equation}
The total number of qubits across each partition is $(n+m+1)\log q$ and $m \log q$ respectively. As shown in \cite{gheorghiu2020estimating}, with respect to the shown bipartition, assuming $A$ and $u$ are randomly sampled, and $e'$ is sampled from an appropriate Gaussian distribution, the first state  has entropy $(n+1)\log q$ approximately, and the second state has entropy $n \log q$ approximately. So, they hide approximately $\log q$ bits of entropy. 
\subsection{Sketch that neither state is a pseudorandom quantum state}
We will sketch a proof that neither $\ket{\psi_{\text{C}_f}}_{A, s, e', u}$, nor $ \ket{\psi_{\text{C}_g}}_{A, s, e', u}$ is a pseudorandom quantum state. Let us assume we are either given $m$ copies of $\ket{\psi_{\text{C}_f}}_{A, s, e', u}$ or $m$ copies of a Haar random state. The same proof works for $\ket{\psi_{\text{C}_g}}_{A, s, e', u}$. Here is a distinguisher. 

\begin{itemize}
    \item Measure sufficiently many copies of the unknown state in the standard basis.
    \item Read out the first and the second register. 
    
    \item For now, let us pretend that the state is $\ket{\psi_{\text{C}_f}}_{A, s, e', u}$. Then, the first register would give the values of $(b, x, e_u)$. Since $b$ is $0$ or $1$ with equal probability, for at least a constant fraction of the measured copies, $b = 0$. For these copies, subtract $Ax$ from the vector obtained by reading out the second register. Then, we are left with $\text{Gaussify}(e_u)$. We can verify this just by reading out the entries of the $m \times 1$ vector, plotting the histogram, and checking that it follows a Gaussian.
    \item If the state were a Haar random state, we would not observe a Gaussian histogram with high probability.
\end{itemize}

\subsection{Larger field sizes}
Consider two new functions.
\begin{equation}
\begin{aligned}
    &S_f(\overrightarrow{b}, x, e_u) = Ax + b_1 u_1 + \cdots  \\ &+b_{k-1} \cdot u_{k-1} + b_{k} \cdot u_{k} + \text{Gaussify}(e_u) ~~(\text{mod}~~ q),
    \end{aligned}
\end{equation}
\begin{equation}
\begin{aligned}
    &S_g(\overrightarrow{b}, x, e_u) = Ax + b_1 \cdot (As + e'_1) + \cdots  \\ &+b_{k-1} \cdot (As + e'_2) + b_{k} \cdot (As + e'_k) + \text{Gaussify}(e_u) ~~(\text{mod}~~ q),
    \end{aligned}
\end{equation}
where $b_1, \ldots, b_k \in \{0, 1\}$, $u_1, \ldots, u_k, e_1', \ldots, e_k' \in \mathbb{Z}_{q}^{m}$. $A$ is the public matrix. $s, u_1, \ldots, u_k, e_1', \ldots, e_k'$ are fixed but not public. $k$ could be polynomially large. Consider the superposition states \begin{equation}
\label{firstlarger}
    \ket{\psi_{\text{S}_f}}\sim \sum_{x \in \mathbb{Z}_{q}^{n}, e_u \in \mathbb{Z}_{q}^{m}, \overrightarrow{b} \in \{0, 1\}^k} \ket{\overrightarrow{b}, x, e_u} \ket{C_f(b, x, e_u)},
\end{equation}
\begin{equation}
\label{secondlarger}
    \ket{\psi_{\text{S}_g}}  \sim \sum_{x \in \mathbb{Z}_{q}^{n}, e_u \in \mathbb{Z}_{q}^{m}, \overrightarrow{b} \in \{0, 1\}^k} \ket{b, x, e_u} \ket{C_g(\overrightarrow{b}, x, e_u)}.
\end{equation}
When $A$ and $u_i$-s are randomly sampled, and $e_i'$-s are sampled from appropriate Gaussian distributions for every $i \in [k]$, the first state has entropy $(n+k) \log q$ approximately and the second state has entropy $n \log q$ approximately. So, they hide $k \log q$ bits of entropy. Let the total number of qubits in either ensemble be $t$. Then, \eqref{firstlarger} has entropy $\approx t$, and \eqref{secondlarger} has entropy $\approx ct$, for some $c < 1$. Consequently, the entropy gap is $ \approx (1-c) t$, for some constant $c < 1$.

Now, if $k$ is polynomially large, the distinguisher we sketched previously fails to distinguish either of these states from the Haar random state. This is because when \eqref{firstlarger} and \eqref{secondlarger} are measured in the standard basis, $\overrightarrow{b}$ is $0$ with an inverse exponentially small probability.

However, we sketch a proof below that neither $S_f$, nor $S_g$ is a pseudorandom function. So, to the best of our knowledge, no known technique works to show that $\ket{\psi_{\text{S}_f}}$ and $ \ket{\psi_{\text{S}_g}}$ are pseudorandom quantum states.

\subsection{Sketch that neither function is a pseudorandom function}
Consider being given black box access to either $S_f$ or $r$, where $r$ is a random function. A similar argument works when we are given $S_g$ or $r$. Let the black box be $\mathsf{B}$. 

Here is a distinguisher.
\begin{itemize}
    \item Query $\mathsf{B}(0^{k}, \mathsf{0}^{n}, e_u)$, for any choice of $e_u$, where $\mathsf{0}^{n}$ is the $n \times 1$ vector of all $0$s.
    \item We get a $m \times 1$ vector $v$. Read out the entries of $v$, and create a histogram.
    \item If the histogram is a Gaussian, say we were given $S_g$ . Else, say we were given $r$.
\end{itemize}
Note that 
\begin{equation}
    S_f(0^{k}, \mathsf{0}^{n}, e_u) = \text{Gaussify}(e_u).
\end{equation}
This is a $m \times 1$ vector. Each entry of this vector is sampled from a Gaussian distribution. Hence, when we plot the histogram, we will see a Gaussian, with high probability. This is \emph{not} true for a random function $r$. There, the histogram will look like a uniform distribution, with high probability. Hence, by looking at the plots, we can distinguish between the two distributions.

\section{Lower bound on entanglement entropy of pseudorandom states from the SWAP test}
In this section, we will prove the following proposition. A proof also follows from \cite{Ji2018}, but we include one just for the sake of completeness.

\begin{proposition}
\label{SWAPtest}
Let $\{\ket{\psi_k}\}$ be an ensemble of $n$-qubit pseudorandom quantum states, indexed by the key $k$. For a particular $\ket{\psi_k}$, consider an equipartition of the qubits of $\ket{\psi_k}$. Let $\rho_k$ be the reduced density matrix across each partition. Then,
\begin{equation}
    \mathsf{S}(\rho_k) = \omega(\log n),
\end{equation}
with high probability, where $\mathsf{S}(\rho_k)$ is the entanglement entropy of $\rho_k$.
\end{proposition}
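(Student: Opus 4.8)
The plan is to use a SWAP test between the first halves of two copies of the state to detect low entanglement, following the argument of Ji, Liu, and Song. First I would recall that for a bipartite pure state the reduced density matrix $\rho_k$ has purity $\Tr(\rho_k^2) = \sum_i \lambda_i^2$, where the $\lambda_i$ are its eigenvalues, and that the R\'enyi-$2$ entropy lower bounds the von Neumann entropy, giving $\Tr(\rho_k^2) \geq 2^{-\mathsf{S}(\rho_k)}$. Thus if $\mathsf{S}(\rho_k) \leq c\log n$ for some constant $c$, then $\Tr(\rho_k^2) \geq n^{-c}$, an inverse-polynomially large purity.

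The distinguisher $\mathcal{D}$ takes two copies of the unknown $n$-qubit state, traces out the second half of each copy to obtain $\rho_k \otimes \rho_k$ on the first halves, and applies the SWAP test, which accepts with probability $\tfrac{1}{2} + \tfrac{1}{2}\Tr(\rho_k^2)$. For a Haar-random state, the expected purity of the reduced state on $n/2$ qubits is $\tfrac{2\cdot 2^{n/2}}{2^n+1} = 2^{-\Omega(n)}$, so $\mathcal{D}$ accepts Haar-random states with probability $\tfrac{1}{2} + 2^{-\Omega(n)}$. Since $\mathcal{D}$ uses only two copies, it runs in polynomial time.

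Next I would set up the contradiction by averaging over the key. Suppose, for contradiction, that for some constant $c$ there is an inverse-polynomial probability $p = 1/\poly(n)$ over the choice of $k$ that $\mathsf{S}(\rho_k) \leq c\log n$. Then $\E_k[\Tr(\rho_k^2)] \geq p\cdot n^{-c} = 1/\poly(n)$, so $\mathcal{D}$ accepts states drawn from the ensemble with probability $\tfrac{1}{2} + \tfrac{1}{2}\E_k[\Tr(\rho_k^2)] \geq \tfrac{1}{2} + 1/\poly(n)$. The resulting distinguishing advantage between the ensemble and the Haar measure is $1/\poly(n) - 2^{-\Omega(n)} = 1/\poly(n)$, contradicting the assumption that $\{\ket{\psi_k}\}$ is pseudorandom. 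Hence for every constant $c$ the probability that $\mathsf{S}(\rho_k) \leq c\log n$ is negligible, i.e.\ $\mathsf{S}(\rho_k) = \omega(\log n)$ with high probability.

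The main subtlety to handle carefully is the probabilistic quantifier: since entanglement entropy is a property of individual keys, I must pass from a per-key entropy bound to a statement about the \emph{ensemble-averaged} purity, and argue that even a non-negligible fraction of ``bad'' (low-entropy) keys already forces $\E_k[\Tr(\rho_k^2)]$ to be inverse-polynomially large. I would also note that the Haar-purity computation is standard and exponentially small, so it never interferes with the inverse-polynomial gap coming from the bad keys; the only genuine work is verifying the SWAP-test acceptance formula and the R\'enyi-to-von-Neumann entropy comparison, both of which are routine.
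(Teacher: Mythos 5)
Your proposal is correct and follows essentially the same route as the paper's own proof: a SWAP test on the reduced states of two copies, the Jensen/R\'enyi-$2$ bound $\Tr(\rho_k^2)\geq 2^{-\mathsf{S}(\rho_k)}$, and the exponentially small Haar purity, together yielding an inverse-polynomial distinguishing advantage that contradicts pseudorandomness. Your explicit handling of the averaging over keys (showing that even an inverse-polynomial fraction of low-entropy keys forces $\E_k[\Tr(\rho_k^2)]$ to be inverse-polynomially large) is in fact slightly more careful than the paper's treatment, which asserts the per-key bound only ``with high probability.''
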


\begin{proof}
Let us assume that the converse is true, and that
\begin{equation}
    \mathsf{S}(\rho_k) = \mathcal{O}(\log n).
\end{equation}
We will show that this implies there is a distinguisher that distinguishes $\ket{\psi}$ from a Haar random state with inverse polynomial success probability, when given two copies of one of the two states. The distinguisher acts as follows.
\begin{itemize}
    \item Start with two copies of the unknown state and trace out $n/2$ registers from each copy.
    \item Apply a SWAP test between the two reduced density matrices obtained from the first step.
    \item If the SWAP test outputs $0$, say the state was $\ket{\psi}$; else, say it was a Haar random state.
\end{itemize}
The probability of the SWAP test outputting $0$ (the success probability) is given by
\begin{equation}
\label{success probability}
    \frac{1}{2} + \frac{\Tr(\rho_{\text{unknown}}^2)}{2},
\end{equation}
where $\rho_{\text{unknown}}$ is the reduced density matrix, over $n/2$ qubits, of the unknown state. Note that
\begin{equation}
    \mathsf{S}_2(\rho_{\text{unknown}}) = -\log[\Tr(\rho_{\text{unknown}}^2)].
\end{equation}
Hence, \eqref{success probability} becomes
\begin{equation}
     \frac{1}{2} + \frac{1}{2 \cdot 2^{\mathsf{S}_2(\rho_{\text{unknown}})}}.
\end{equation}
When the unknown state is a Haar random state, from properties of a Haar random state,
\begin{equation}
     \mathsf{S}_2(\rho_{\text{unknown}}) = \mathcal{O}(n),
\end{equation}
with high probability, and hence \eqref{success probability} becomes
\begin{equation}
    \frac{1}{2} + \frac{1}{2^{\mathcal{O}(n)}}.
\end{equation}
When the unknown state is $\ket{\psi_k}$, \eqref{success probability} is
\begin{equation}
\label{success probability 2}
    \frac{1}{2} + \frac{1}{\text{poly}(n)},
\end{equation}
with high probability, which suffices for the proof of the proposition.

\noindent To see \eqref{success probability 2}, note that
\begin{equation}
    \mathsf{S}(\rho_k) = -\sum_{i} \lambda_i \log (\lambda_i),
\end{equation}
where $\{\lambda_i\}$ are elements of the spectrum of $\rho_k$, such that $\underset{i}{\sum}~ \lambda_i = 1$. Now, noting that $-\log(\cdot)$ is a convex function,
\begin{equation}
    \mathsf{S}(\rho_k) = -\sum_{i} \lambda_i \log (\lambda_i) \geq -\log\left(\sum_{i} \lambda_i^2\right) = \mathsf{S}_2(\rho_k),
\end{equation}
where the inequality in the middle follows from Jensen's inequality. Hence,
\begin{equation}
    \mathsf{S}_2(\rho_k) \leq \mathsf{S}(\rho_k) = \mathcal{O}(\log n),
\end{equation}
with high probability, from our assumption. From this observation, \eqref{success probability 2}, and hence the proof of the proposition, follows.
\end{proof}

%%%%%%%%%%%%%%%%%%%%%%%%%%%%%%%%%%%%%%%%%%%%%%%%%%%%%%

\section{Pseudorandom states and matrix product states}
\label{matrix product states}

The following lemma holds from \cite{Vidal_2004}.
\begin{lemma}
\label{lemma11}
Consider an $n$-qubit quantum state $\ket{\psi}$ on a $1$-D line and consider contiguous partitions $(\mathsf{X}, \mathsf{Y})$ of the $n$ qubits. Let
\begin{equation}
    \chi = \underset{\mathsf{(X, Y)}}{\mathsf{max}} ~\mathsf{rank}(\rho_{\mathsf{X}:\mathsf{Y}}).
\end{equation}
Then, there exists a classical description of $\ket{\psi}$ of size $\mathcal{O}\left(n \chi^2\right).$
\end{lemma}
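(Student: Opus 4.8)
The plan is to establish this via the standard sequential Schmidt-decomposition construction of a matrix product state (the construction of \cite{Vidal_2004}), showing that the promised bound on the maximum Schmidt rank across contiguous cuts directly controls the bond dimension at every site. Writing $\ket{\psi}=\sum_{i_1,\ldots,i_n}c_{i_1\cdots i_n}\ket{i_1\cdots i_n}$ with each $i_k\in\{0,1\}$, I would process the sites left to right. At the first step I reshape the coefficient tensor into a $2\times 2^{n-1}$ matrix $C^{(1)}$ whose row index is $i_1$ and whose column index is $(i_2,\ldots,i_n)$, and take its singular value decomposition $C^{(1)}=U^{(1)}S^{(1)}(V^{(1)})^{\dagger}$. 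Since $C^{(1)}$ is, up to reshaping, exactly the matrix whose rank is $\chi_1:=\rank(\rho_{\{1\}:\{2,\ldots,n\}})\le\chi$, the number of nonzero singular values is at most $\chi$. I define the site tensor $A^{(1)}_{i_1}$ to be the $i_1$-th row of $U^{(1)}$ and fold $S^{(1)}(V^{(1)})^{\dagger}$ into the remaining tensor.

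At step $k$ I reshape the current remaining tensor, which carries a left bond index $\alpha_{k-1}$ of dimension $\chi_{k-1}$ together with the physical indices $i_k,\ldots,i_n$, into a matrix with rows indexed by $(\alpha_{k-1},i_k)$ and columns indexed by $(i_{k+1},\ldots,i_n)$, and again take an SVD. The site tensors $A^{(k)}_{i_k}$ are read off from the left singular vectors $U^{(k)}$, each being a $\chi_{k-1}\times\chi_k$ matrix, where $\chi_k$ is the number of nonzero singular values at this step. Iterating through all $n$ sites yields
\begin{equation}
\ket{\psi}=\sum_{i_1,\ldots,i_n}A^{(1)}_{i_1}A^{(2)}_{i_2}\cdots A^{(n)}_{i_n}\,\ket{i_1\cdots i_n},
\end{equation}
which is an MPS whose bond dimensions are the $\chi_k$.

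The crux of the argument, and the step I expect to be the main obstacle, is to show that at \emph{every} bond $k$ one has $\chi_k\le\chi$, rather than the bond dimension growing geometrically as the naive row count $\chi_{k-1}\cdot 2$ would suggest. The key is the inductive invariant that, after extracting $A^{(1)},\ldots,A^{(k)}$, the state admits a decomposition $\ket{\psi}=\sum_{\alpha_k=1}^{\chi_k}\ket{L^{(k)}_{\alpha_k}}\otimes\ket{R^{(k)}_{\alpha_k}}$ in which the left vectors $\ket{L^{(k)}_{\alpha_k}}$ on qubits $1,\ldots,k$ are orthonormal and the right vectors $\ket{R^{(k)}_{\alpha_k}}$ on qubits $k+1,\ldots,n$ (the rows of $(V^{(k)})^{\dagger}$) are orthonormal. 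Because the accumulated left factors $U^{(1)},\ldots,U^{(k)}$ have orthonormal columns, their composition is an isometry, so this really is a Schmidt decomposition of $\ket{\psi}$ across the contiguous cut $\{1,\ldots,k\}:\{k+1,\ldots,n\}$; hence $\chi_k=\rank(\rho_{\{1,\ldots,k\}:\{k+1,\ldots,n\}})\le\chi$. Verifying that the reshaping and absorption of singular values preserve this isometry property from one step to the next is the one place that requires genuine care.

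Finally, granting $\chi_k\le\chi$ for all $k$, each site tensor $A^{(k)}_{i_k}$ is a matrix of size at most $\chi\times\chi$, there are $n$ sites, and each site carries $d=2$ physical values of $i_k$. Storing the MPS therefore requires at most $2n\chi^2=O(n\chi^2)$ complex parameters, which gives the claimed bound on the size of the classical description.
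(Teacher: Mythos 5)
Your proof is correct, and it is exactly the argument behind the paper's treatment of this lemma: the paper gives no proof at all, simply citing \cite{Vidal_2004}, and your sequential-SVD construction with the left-canonical (isometry) invariant forcing $\chi_k=\rank(\rho_{\{1,\ldots,k\}:\{k+1,\ldots,n\}})\leq\chi$ at every bond is precisely Vidal's original argument. Nothing further is needed.
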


A natural question to ask is: how many copies of $\ket{\psi}$ do we need to find such a classical description? Moreover, when $\chi \geq 2^{\text{poly}\log n}$, the classical description is not efficient. For this case, can we do better and find an efficient classical description? 

\begin{cor}
\label{cor1}
Assuming the existence of one-way functions, there is no efficient quantum learner that outputs a polynomial sized description of a quantum state $\ket{\psi}$, even when promised that $\ket{\psi}$ has an efficient description and given access to polynomially many copies of $\ket{\psi}$.
\end{cor}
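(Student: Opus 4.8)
The plan is to prove the contrapositive: assuming such an efficient learner $\mathcal{L}$ existed, I would use it to distinguish our pseudorandom states from Haar-random states, contradicting the pseudorandomness of Theorem~\ref{Thm:Lowentanglement}. First I would instantiate the pseudorandom state ensemble $\{\ket{\psi_k}\}$ guaranteed by our main construction, which exists assuming quantum-secure one-way functions. Each such state is efficiently preparable from its key, and hence \emph{has} a polynomial-sized classical description (its preparation circuit together with the key $\mathsf{K}$), so the promise of the corollary is met on this ensemble and $\mathcal{L}$ is obligated to succeed there.

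Next I would build a distinguisher $\mathcal{D}$ that is handed $\poly(n)$ copies of an unknown state, promised to be either a member of $\{\ket{\psi_k}\}$ or Haar-random. The distinguisher runs $\mathcal{L}$ on all but one of these copies to obtain a classical description $D$, uses $D$ to prepare the state $\ket{\phi_D}$ it encodes, and performs a SWAP test (as in Proposition~\ref{SWAPtest}) between $\ket{\phi_D}$ and the single remaining fresh copy, accepting iff the test reports equality. Since $\mathcal{L}$ is efficient and $D$ is of polynomial size, $\mathcal{D}$ runs in polynomial time and uses only polynomially many copies. If the unknown state is pseudorandom the promise holds, so by assumption $\mathcal{L}$ outputs a description with $\ket{\phi_D}$ inverse-polynomially close to the true state, and the SWAP test accepts with probability near $1$.

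The crux is the Haar case, where $\mathcal{L}$ is run outside its promised regime and its output is unconstrained; the point is that this does not matter. Because $D$ has length $\poly(n)$, it ranges over at most $2^{\poly(n)}$ strings, each determining at most one preparable state $\ket{\phi_D}$. For any \emph{fixed} state $\ket{\phi}$, a Haar-random $\ket{\psi}$ on dimension $N=2^n$ satisfies $|\langle\phi|\psi\rangle|^2 \geq \epsilon$ only with probability $(1-\epsilon)^{N-1}$, which is smaller than $2^{-\poly(n)}$ for every polynomial; a union bound over all $2^{\poly(n)}$ candidate descriptions shows that, with overwhelming probability, the Haar state has overlap $o(1)$ with \emph{every} $\ket{\phi_D}$ in the image of $\mathcal{L}$ simultaneously. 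Hence the SWAP test rejects with probability near $1$. Combining the two cases yields a constant distinguishing advantage, contradicting indistinguishability from the Haar measure (Fact~\ref{Haar-random}); the corollary follows.

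I expect the main obstacle to be formalizing the notion of ``classical description'' so that $\ket{\phi_D}$ is well defined and at least approximately preparable (or its overlaps computable) from $D$; I would adopt the natural reading, consistent with Lemma~\ref{lemma11}, that a description is a classical representation from which the state can be reconstructed. The related subtlety that $\mathcal{L}$ may behave arbitrarily on Haar inputs is handled precisely by the counting argument above, since it bounds the overlap against the \emph{entire} polynomial-sized image of $\mathcal{L}$ rather than relying on any off-promise guarantee about $\mathcal{L}$'s output.
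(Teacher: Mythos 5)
Your proposal follows essentially the same route as the paper's own (two-line) proof --- contrapositive, run the learner on the efficiently describable pseudorandom ensemble, then SWAP-test the reconstructed state against a held-out fresh copy --- while usefully supplying the off-promise detail the paper leaves implicit, namely the union bound showing that a Haar state has $o(1)$ overlap with all $2^{\poly(n)}$ reconstructible states simultaneously. One small correction: with overlap $o(1)$ the SWAP test \emph{accepts} with probability $\tfrac{1}{2}+o(1)$ rather than rejecting with probability near $1$ (it never rejects with probability above $\tfrac{1}{2}$), but this still yields the constant distinguishing advantage you claim against the near-$1$ acceptance probability in the pseudorandom case.
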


\begin{proof}
    Assume the opposite. Then, for any efficiently preparable state $\ket{\psi}$, using $\ket{\psi}^{\otimes m}$, for some polynomially bounded function $m$, we can efficiently get a classical description of $\ket{\psi}$ of polynomial size. 
    
    This, in turn, implies that we can get an efficient description of any efficiently preparable pseudorandom quantum state, and then use a SWAP test based checking procedure to break the security of the construction.
\end{proof}

\begin{remark}
Note that Corollary \ref{cor1} implies that when $\chi = 2^{\text{poly}\log n}$ and when $\ket{\psi}$ is a $1$-D quantum state, Lemma \ref{lemma11} is nearly optimal in terms of the size of the description \footnote{Even though Lemma \ref{lemma11} produces a description of size $\mathcal{O}(2^{\text{poly}\log n})$, it says nothing about runtime and number of copies. It may take superpolynomially many copies and superpolynomial runtime to learn this description.}. It produces a description of size $\mathcal{O}(2^{\text{poly}\log n})$ and there exists no efficient description of the state that is learnable with polynomially many copies.
\end{remark}

\end{document}